\documentclass[12pt]{article}
\usepackage{geometry}
\usepackage{refcount}
\usepackage{amssymb}
\usepackage{amsmath,bm,bbm,amsthm}
\usepackage{mathrsfs}
\usepackage{amsfonts}
\usepackage{algorithm}
\usepackage{algpseudocode}
\usepackage{graphicx,epsfig,lscape}
\usepackage{color}
\usepackage{setspace}
\usepackage{url}
\usepackage{upgreek}
\usepackage{commath}
\usepackage{enumitem}
\usepackage{indentfirst}
\usepackage{multirow}
\usepackage{xr}
\usepackage{booktabs}
\usepackage{stackengine}
\usepackage[longnamesfirst,round]{natbib}
\usepackage{dsfont}
\usepackage{mathtools}
\usepackage[title]{appendix}
\usepackage{accents}
\usepackage{rotating}
\usepackage{titling}
\usepackage{subfig}
\usepackage{graphicx}
\usepackage{hyperref}
\usepackage{cleveref}
\usepackage{tikz}

\usetikzlibrary{shapes.geometric}
\usetikzlibrary{positioning}
\usetikzlibrary{arrows,arrows.meta}
\allowdisplaybreaks 
\newtheorem{theorem}{Theorem}

\newtheorem{assumption}{Assumption}

\newtheorem{definition}{Definition}

\newtheorem{lemma}{Lemma}

\newtheorem{proposition}{Proposition}
\newtheorem*{proposition*}{Proposition}
\newtheorem*{lemma*}{Lemma}
\newtheorem*{theorem*}{Theorem}
\newtheorem{remark}{Remark}

\newcommand{\N}{\mathbb{N}}

\let\emptyset\varnothing

\newcommand{\tpi}{\widetilde \pi}

\newtheorem{lemmaprime}{Lemma}

\begin{document}

\title{\vspace{-2em} On Rent Dissipation in Dynamic Multi-battle Contests%
\thanks{%
We thank Attila Ambrus, Rahul Deb, Hanming Fang, Navin Kartik, Xianwen Shi,
and seminar participants at PKU HSBC and HUST for helpful discussions,
suggestions, and comments. Deng thanks the National Natural Science
Foundation of China (No.\thinspace 72403270) and the Start-up Research Grant
(SRG2023-00036-FSS) of the University of Macau for financial support. Fu
thanks the Singapore Ministry of Education Tier-1 Academic Research Fund
(R-313-000-139-115) for financial support. Wu thanks the National Natural
Science Foundation of China (Nos.\thinspace 72222002 and 72173002), the Wu
Jiapei Foundation of the China Information Economics Society (No.\thinspace
E21100383), and the Research Seed Fund of the School of Economics, Peking
University, for financial support. Any remaining errors are our own.}}
\author{Shanglyu Deng\thanks{
Department of Economics, University of Macau, E21B Avenida da Universidade,
Taipa, Macau, China, 999078. Email: \href{mailto:sdeng@um.edu.mo}{%
sdeng@um.edu.mo}.} \and Qiang Fu\thanks{
Department of Strategy and Policy, National University of Singapore, 15 Kent
Ridge Drive, Singapore, 119245. Email: \href{mailto:bizfq@nus.edu.sg}{%
bizfq@nus.edu.sg}.} \and Junchi Li\thanks{
Graduate School, Duke University, North Carolina, USA, 27705. } \and Zenan
Wu\thanks{
School of Economics, Sustainability Research Institute, Peking University,
Beijing, China, 100871. Email: \href{mailto:zenan@pku.edu.cn}{%
zenan@pku.edu.cn}.}}
\date{\today}
\maketitle

% \vspace{-1.5em}
\begin{abstract}
\begin{singlespace}
\noindent 
We study dynamic multi-battle contests and examine how the contest structure shapes dynamic incentives and determines the extent of rent dissipation. A discouragement effect often arises---such as in tug-of-war and best-of-$K$ contests---preventing full rent dissipation even when the series can extend infinitely. We identify a structural property, exchangeability, that contributes to the effect. Leveraging this insight, we establish a necessary and sufficient condition for almost-full rent dissipation. As an application, we introduce the iterated incumbency contest, which illustrates how volatility in the surrounding environment sustains dynamic incentives and generates almost-full rent dissipation, and thus offers insights into various competitive phenomena.
\end{singlespace}

\begin{description}
\item[Keywords:] dynamic competition, contests, rent dissipation

\item[JEL Classification Codes: D72, C73, D86.] 
\end{description}
\end{abstract}

% \bigskip 

% % {\large [Preliminary. Comments are welcome.]}

\medskip

% The key to achieving full rent extraction is to avoid a Matthew effect.

\thispagestyle{empty}

\clearpage

\pagenumbering{arabic}

\section{Introduction}

Competitions often unfold over multiple phases rather than being decided by
a single, decisive action. Contenders confront each other repeatedly in a
sequence of battles, sinking costly effort over time. Final victory requires
the accumulation of sufficiently many intermediate successes, rather than a
single stroke of effort. Such dynamics are pervasive in the socioeconomic
landscape.

Military warfare provides a close analogy: Outcomes on individual
battlefields may shift momentum or local control, yet rarely determine
ultimate victory on their own, as illustrated by the ongoing conflict
between Russia and Ukraine. Similar dynamics arise in competition for
technological standards, such as the rivalry between Sony and JVC in home
video. To prevail in such an enduring race, a firm must outperform its
rivals across a sufficient number of component technologies and secure a
critical mass of ecosystem partners, including suppliers and manufacturers
of complementary goods; no single technological win is decisive. U.S.
presidential primaries offer another intuitive example. Campaigns are
conducted sequentially across states, and party nomination is conferred only
after a candidate has accumulated enough delegate victories over time.
High-profile patent litigations exhibit similar multi-battle dynamics. Final
outcomes emerge only after numerous rulings across jurisdictions and
potentially repeated appeals within individual lawsuits, as exemplified by
the canonical litigation between Apple and Samsung from 2011 to 2018, which
spanned multiple countries and legal venues.

A substantial body of research studies the strategic foundations of dynamic
multi-battle contests (see, e.g., \citealp{harris1987racing,%
klumpp2006primaries,konrad2009multi,fuLuPan2015team}), with a sustained
emphasis on contenders' dynamic incentives and on how the structure of the
contest shapes effort provision. This literature has long highlighted the
discouragement effect. Consider a dynamic contest between two ex ante
symmetric contestants. An accidental early win by one contestant grants the
lucky frontrunner an advantage and\ could turn an initially even race into a
lopsided competition. This ex post asymmetry discourages effort provision by
both contestants: The laggard is deterred by bleak winning prospects, while
the reduced resistance affords the frontrunner an easy win (see, e.g.,
\citealp{klumpp2006primaries}). This limits rent dissipation---total effort
relative to the prize awarded to the winner---and generates inefficiency in
terms of effort supply.

In this paper, we examine rent dissipation in a general dynamic contest that
accommodates a wide range of contest architectures or rules. More
specifically, we investigate whether and under what conditions rent may or
may not fully dissipate, particularly when the contest may involve a long
series of encounters. The answer is not obvious a priori. Consider, for
instance, a tug-of-war contest, which is often used to model protracted
litigation and competition for political control. Final victory requires one
contestant to establish a sufficiently large lead, while any existing lead
is immediately offset by an opponent's win, a feature that can in principle
prolong the contest indefinitely. Two opposing forces are at play. On the
one hand, a potentially infinite horizon attenuates incentives to exert
effort for immediate advancement; on the other hand, this allows effort to
accumulate without bound. The latter force is strengthened when the margin
required for final victory under the contest rule increases. We provide a
general analysis of equilibrium rent dissipation in dynamic contests and
identify conditions for full or partial rent dissipation, thereby revealing
the fundamental forces shaping contestants' dynamic incentives.

We consider a generic sequential multi-battle contest model. Two players
compete head-to-head in a sequence of component battles that take place
successively.\footnote{%
We focus on the two-player case solely for ease of exposition. Our main
results extend straightforwardly to the multi-player case, which we discuss
in the Conclusion.} In each battle, the players invest effort, which is
converted into their respective winning probabilities through a \emph{%
success function}. A contest rule specifies the overall winner based on the
history of component battles, and a prize is awarded accordingly. We impose
no restriction on the specific form of the contest rule, allowing the model
to accommodate a broad array of contest mechanisms. Popular examples include
the best-of-$(2K+1)$ contest, with $K\in \mathbb{N}_{++}$ , in which a
player secures overall victory by winning at least $K+1$ battles. The
best-of-$(2K+1)$ format can also be viewed as a first-past-the-post (FPTP)
race, which awards the prize to the contestant who reaches a predetermined
target number of battle wins before the opponent does. Another example is a
tug-of-war contest with margin $K$, in which a player prevails if and only
if they win $K\in \mathbb{N}_{++}$ \emph{more} battles than his rival.

A contest is \emph{nontrivial }if it has a finite \emph{minimum length},
that is, the minimum number of battles required for the contest to decide
its winner is finite. For instance, a tug-of-war contest with margin $K$ has
a minimum length of exactly $K$: Although the contest may last infinitely
long, there exists a finite history that leads to its conclusion. By
contrast, a contest in which no finite history ever determines a winner
(i.e., every history is infinite) is ill-defined and is excluded from our
analysis. We establish that rent cannot fully dissipate in any nontrivial
contest %%, regardless of how long the series can potentially be
%%extended 
(\Cref{thm:L}), even though the contest may admit infinite histories.
Specifically, the equilibrium total effort is bounded from above by an upper
limit strictly below the prize value.

We then proceed to explore the limiting properties of the contest as its
minimum length becomes very large. Our analysis shows that, in the case of a
tug-of-war contest, full rent dissipation does not arise even in the
limit---that is, when the required margin $K$ tends to infinity. As the lead
possessed by the frontrunner continues to grow, the standard discouragement
effect comes into play: Even if a large number of additional battles remain
from the finish line, the laggard becomes fully discouraged, and the
probability of the frontrunner's winning future battles converges to one (%
\Cref{prop:tow-adv}). Owing to this discouragement effect, the equilibrium
rent dissipation ratio---defined by the ratio of equilibrium total effort to
prize value---remains bounded by a constant that is strictly less than one
and uniform to all $K\in \mathbb{N}_{++}$ (\Cref{thm:tow-rent}).

This striking observation compels us to delve deeper into the nature of
contestants' dynamic incentives. We show that a structural feature common in
many forms of dynamic contests---\emph{exchangeability}---plays a critical
role in generating the discouragement effect. An exchangeable contest does
not distinguish between two histories of battle outcomes if they only differ
in the order of past wins and losses. For example, in a best-of-three
contest (such as a tennis match), the same deciding set arises regardless of
whether a player wins the first game and loses the second or vice versa.
Similarly, in a tug-of-war contest, only the net number of battles won by
each player matters. We demonstrate that this feature induces a form of
``long memory'' in a dynamic contest, whereby early battle outcome exerts a
persistent influence on future play. Consequently, a frontrunner can
leverage early successes to accumulate an advantage without bound, leading
to excessive discouragement over time. We establish that an exchangeable
contest can never fully dissipate its rent regardless of its minimum length (%
\Cref{prop:homogeneous-ex}). For instance, in a best-of-$(2K+1)$ contests,
total equilibrium effort is bounded away from its prize value even as $K$
approaches infinity.

To illustrate the notion and role of exchangeability, we construct an
intuitive counterexample: the $K$-consecutive-win contest, which awards the
prize to the first contestant who achieves $K$ consecutive battle wins. This
contest resembles deuce in tennis, under which a player must win two points
in a row to close the game. This setting nullifies exchangeability because
the order of battle outcomes matters. A player gains an advantage by winning
a point immediately after having lost one, whereas winning a point but
losing the subsequent one leaves the player on the brink of losing the game.
We show that, in contrast to exchangeable contests, the $K$-consecutive-win
contest can almost fully dissipate rent in the limit, such that the
equilibrium total effort can be arbitrarily close to prize value when $K$
becomes large (\Cref{thm:cw-rent}).

These findings pave the way for a more general characterization of
almost-full rent dissipation in dynamic contests. We propose the \emph{%
transient dominance property} (\Cref{def:si}) and show that it is necessary
and sufficient for almost-full rent dissipation (\Cref{thm:sufficient}).
More specifically, the property imposes two requirements. First, the
frontrunner remains sufficiently motivated to exert effort toward final
victory. Second, the competition remains persistently \textquotedblleft
fluid,\textquotedblright\ in the sense that one's leadership can always be
reversed with sufficiently high likelihood, so that no player's advantage
can accumulate without a bound. Together, these conditions induce a form of
\textquotedblleft short memory\textquotedblright\ and keep the contest
competitive as long as final victory is not yet formally awarded.

This transient dominance property can be readily illustrated by a tug-of-war
contest with a random reset, in which the contest may revert to its initial
state with a nonzero probability after each battle. More importantly, we
propose an \textit{iterated incumbency contest}, which provides an intuitive
framework and novel insights into a wide range of dynamic competitions in
uncertain environments, such as competitions over emerging technologies\ and
evolutionary competition between species. Our analysis verifies that the
transient dominance property arises in such contests, leading to almost-full
rent dissipation in the limit (\Cref{prop:full-suff}). The result provides a
natural rationale for the pervasively observed Red Queen effect in
technological or biological evolutionary processes. More details are provided
in \Cref{sec:full-rent-condition}.

\paragraph{Related literature}

\citet{harris1987racing} lay the groundwork for
research on dynamic contests with successive, disjoint component battles.
They propose two modelling approaches that differ in their prevailing
winning rules. The first is a race model, in which a player prevails if he
is the first to win a prespecified number of battles. The best-of-$(2K+1)$
contest is its most intuitive variant, whereby a player must win a majority of
component battles for the final prize. \citet{klumpp2006primaries},
\citet{konrad2009multi,konrad2010stochastic}, \citet{gelder2014custer}, and
\citet{klumppKonradSolomon2019blotto} extend this research stream. The
second is a tug-of-war model, in which the ultimate winner must accumulate a
sufficiently large lead over his opponent along the path.
\citet{mcafee2000continuing}, \citet{konradKovenock2005tug}, and
\citet{agastyaMcAfee2006continuing} further develop contests with this
feature.

This literature has long recognized the discouragement effect that arises in
dynamic multi-battle contests, whereby early outcomes stifle future
competition, disincentivize effort, and limit rent dissipation.
\citet{klumpp2006primaries} saliently interpret the New Hampshire effect in
U.S. presidential primaries as a manifestation of this discouragement effect.
A number of studies propose environmental or structural features that may
mitigate the discouragement effect and sustain effort supply over time, such
as intermediate prizes in \citet{konrad2009multi}\footnote{%
\citet*{fuKeTan2015success} allow a player to derive a utility from winning each
component battles. The \textquotedblleft utility of
winning\textquotedblright\ plays a similar role to the intermediate prizes
in
\citet{konrad2009multi}.} and uncertainty in effort cost functions in
\citet{konrad2010stochastic}. \citet{gelder2014custer} shows that a player
may drastically outperform the frontrunner when falling behind if the loser
faces a severe penalty.

However, these studies are all situated within race models with a finite
number of component battles. In contrast, our analysis accommodates a
general setup without imposing specific winning rules and sheds light on the
asymptotic properties of equilibrium rent dissipation when length of a
contest approaches infinity. In this general setting, our analysis reveals
the fundamental role of exchangeability and the long-memory it enables in
driving the discouragement effect and underdissipation of rents; this, in
turn, motivates conditions of transient dominance that yield almost-full
rent dissipation.

The notion of exchangeability was first proposed by \citet{ewerhart2019multi}.
Allowing for infinite horizon and without imposing specific winning rules, %
\citet{ewerhart2019multi} study dynamic multi-battle contests that satisfy
certain properties, i.e., exchangeability, monotonicity, and centeredness.
They show that such contests can be described by state machines (or
automata), and that these contests eventually enter a tie-breaking phase
isomorphic to a tug-of-war game. Our paper differs in that we focus on
dynamic rent dissipation, whereas their main contribution lies in
establishing the connection between contests with infinite horizons and
tug-of-war games.\footnote{%
They also provide a comprehensive analysis of the tug-of-war game, extending
the Tullock success function case considered in \citet*{%
karagozouglu2021perseverance} to more general success functions.}

All of these studies examine contentions between individual players. In
contrast, \citet*{fuLuPan2015team} study multi-battle races between teams,
with members from rival groups matched in head-to-head competitions.
\citet{hafnerKonrad2016team} and \citet{hafner2017tug} study tug-of-war
contests between teams with a structure similar to that in
\citet{fuLuPan2015team}.

The economics literature has considered a wide array of dynamic contests in
alternative forms. All aforementioned studies assume that players
participate in disjoint battles, each of which yields a winner. In contrast,
another stream of the literature allows each player's effort to accumulates
over time, with cumulative output determining the final winner at the end of
the contest. Notable examples include
\citet{meyer1992biased,yildirim2005rounds,gershkovPerry2009midterm,%
aoyagi2010feedback,ederer2010feedback,gurtlerHarbring2010feedback,%
goltsmanMukherjee2011interim}.

\section{Model}

Two risk-neutral players $A$ and $B$ compete head-to-head for a prize of
common value $v>0$.\footnote{%
We do not normalize the value to 1 because our setting does not require the
winning probability of each battle (battle success function) to be
homogeneous. Prize value does affect equilibrium outcomes without
homogeneity.} The contest consists of a sequence of successive \emph{%
component battles}, and the final winner is determined by the history of
battle outcomes.

\subsection{Component Battles}
\label{sec:component-battles}

In each component battle, player $\ell \in \{A,B\}$ simultaneously exerts an
effort $x_{\ell }\in \mathbb{R}_{+}:=[0,+\infty )$. Effort is measured
directly in units of disutility, so exerting effort $x_{\ell }$ entails a
cost of $x_{\ell }$. For a given effort profile $(x_{A},x_{B})$, player $A$
wins the current battle with a probability $p(x_{A},x_{B}):\mathbb{R}%
_{+}\times \mathbb{R}_{+}\rightarrow \lbrack 0,1]$, and player $B$ wins with
complementary probability $p(x_{B},x_{A})=1-p(x_{A},x_{B})$. We call $%
p(x,x^{\prime })$ the $\emph{success}$ $\emph{function}$ (SF) for component
battles. With an effort profile $(x_{A},x_{B})$, a player $\ell \in \{A,B\}$
receives an expected payoff $p(x_{\ell },x_{-\ell })\Delta _{\ell }-x_{\ell
} $ from the battle, where $\Delta _{\ell }$ denotes the value he can
generate from winning this battle, i.e., the differential in terms of his
payoff in the dynamic contest between winning and losing this battle.

For expositional efficiency, the baseline analysis focuses on a homogeneous
(of degree zero) success function. However, our analysis applies in a broad
context and the results remain largely intact when allowing for alternative
contest technologies, which we will elaborate on in
\Cref{sec:key-component-battle-properties}. More
specifically, we assume the following.

\begin{assumption}[Homogeneous Success Function]
\label{cond:bsf} For a given effort profile $(x,x^{\prime })$, a player, by
exerting an effort $x$, wins the battle with a probability%
\begin{equation*}
p(x^{\prime },x)=\gamma \!\left( \frac{x^{\prime }}{x}\right) ,
\end{equation*}
where $\gamma :[0,+\infty ]\rightarrow \lbrack 0,1]$ is a continuous and
twice-differentiable function with $\gamma (0)=0$, $\gamma (+\infty )=1$, $%
\gamma ^{\prime }>0$, $\gamma ^{\prime \prime }\leq 0$, and $\gamma
(x)+\gamma (1/x)=1$.
\end{assumption}

The popularly adopted Tullock success function provides a classic example
within this family of models, whereby%
\begin{equation*}
p(x,x^{\prime })=%
\begin{cases}
\frac{x^{r}}{x^{r}+{(x^{\prime })}^r}, & \text{ if }(x,x^{\prime })\neq
(0,0), \\ 
\frac{1}{2}, & \text{ if }(x,x^{\prime })=(0,0),%
\end{cases}%
\end{equation*}%
with $r\in (0,1]$. A serial SF \citep{alcalde2007tullock} also satisfies the
homogeneous-of-degree-zero requirement: For $\alpha \in (0,1)$, a player
wins with probability $p(x^{\prime },x)=1-\left[ (x/x^{\prime })^{\alpha }%
\right] /2$ if $x^{\prime }\geq x$ and $p(x^{\prime },x)=\left[ (x^{\prime
}/x)^{\alpha }\right] /2$ otherwise.

We define 
\begin{equation*}
\phi (\theta ):=\gamma (\theta )-\theta \gamma ^{\prime }(\theta )\text{ for 
}\theta \geq 0.
\end{equation*}%
Let $\Delta _{\ell }$, $\ell \in \{A,B\}$, denote a player $\ell $'s
valuation of winning a battle. The equilibrium in a single component battle
is characterized as follows.

\begin{lemma}[\citealp{malueg2005equilibria}]
\label{lemma:single} Under \Cref{cond:bsf}, in a single battle with winning
values $\Delta _{A},\Delta _{B}>0$, there exists a unique pure-strategy Nash
equilibrium. Equilibrium effort levels are 
\begin{equation*}
x_{A}^{\ast }=\Delta _{B}\gamma ^{\prime }\!\left( \frac{\Delta _{B}}{\Delta
_{A}}\right) ,\qquad x_{B}^{\ast }=\Delta _{A}\gamma ^{\prime }\!\left( 
\frac{\Delta _{A}}{\Delta _{B}}\right) ,
\end{equation*}%
and players' expected equilibrium utilities are 
\begin{equation*}
\Pi _{A}^{\ast }=\Delta _{A}\phi \!\left( \frac{\Delta _{A}}{\Delta _{B}}%
\right) ,\qquad \Pi _{B}^{\ast }=\Delta _{B}\phi \!\left( \frac{\Delta _{B}}{%
\Delta _{A}}\right) .
\end{equation*}
\end{lemma}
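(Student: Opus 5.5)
The approach is to work directly with the best-response problem of each player and show that the candidate profile in \Cref{lemma:single} is the unique profile in which both first-order conditions hold. Under \Cref{cond:bsf}, player $A$'s payoff is $u_A(x_A,x_B)=\gamma(x_A/x_B)\Delta_A-x_A$. For fixed $x_B>0$ this is strictly concave in $x_A$: its second derivative is $\gamma''(x_A/x_B)\Delta_A/x_B^2-0\le 0$, and the objective is not linear on any interval, since $\gamma$ is bounded and strictly increasing. Hence the best response is characterized by the first-order condition
\[
\frac{\Delta_A}{x_B}\gamma'\!\left(\frac{x_A}{x_B}\right)=1,
\]
with a corner at $x_A=0$ only if $\gamma'(0)\Delta_A/x_B\le 1$. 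The symmetric condition holds for $B$, with $\gamma(x_B/x_A)=1-\gamma(x_A/x_B)$.

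\textbf{Rule out boundary profiles.} Step one shows that $x_A,x_B>0$ in any equilibrium. If $x_B=0$, then $A$ wins with probability one using any $x_A>0$, so no best response exists for $A$ at $x_A=0$. Since $p(0,0)$ is at most some value below $1$ when $\Delta_A>0$, $A$ wants to undercut any positive effort, and no pure equilibrium can have $x_B=0$. The same argument applies to $A$.

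\textbf{Solve the interior system.} Step two solves the two first-order conditions. Set $\theta=x_A/x_B$. Differentiating the identity $\gamma(x)+\gamma(1/x)=1$ gives $\gamma'(x)=\gamma'(1/x)/x^2$. Using this, $B$'s condition becomes $\Delta_B\gamma'(\theta)\,\theta/x_A=1$, i.e.\ $\Delta_B\gamma'(\theta)=x_B$. Combining it with $A$'s condition $\Delta_A\gamma'(\theta)=x_B$ requires care about which variable $\gamma'$ is evaluated at, and I will redo the algebra carefully. Dividing the two conditions yields $\theta=\Delta_A/\Delta_B$, the usual ratio of efforts equal to the ratio of values. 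This pins down $\theta$ uniquely, which gives uniqueness. Back-substituting gives $x_B=\Delta_A\gamma'(\Delta_A/\Delta_B)$. Then $x_A=\theta x_B=(\Delta_A^2/\Delta_B)\gamma'(\Delta_A/\Delta_B)$, and the reciprocal-derivative identity converts this to $\Delta_B\gamma'(\Delta_B/\Delta_A)$, matching the stated formulas.

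\textbf{Payoffs and the main obstacle.} Step three computes the payoffs: $\Pi_A=\gamma(\theta)\Delta_A-x_A$. Substituting $x_A=\theta x_B=\theta\Delta_A\gamma'(\theta)$ gives $\Delta_A\bigl(\gamma(\theta)-\theta\gamma'(\theta)\bigr)=\Delta_A\phi(\Delta_A/\Delta_B)$, and the expression for $B$ follows symmetrically. The main obstacle is existence: one must confirm that the first-order solution is a global best response, including against the deviation $x_A=0$. Concavity handles this, because concavity on $(0,\infty)$ together with continuity at $0$ makes any stationary point a global maximizer. One also needs the payoff to be nonnegative, i.e.\ $\phi\ge 0$. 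This holds because $\phi(0)=0$ and $\phi'(\theta)=-\theta\gamma''(\theta)\ge 0$.
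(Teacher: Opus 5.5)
Your route matches the paper's. The paper takes the lemma from Malueg and Yates and, in its appendix, reads the efforts off the two first-order conditions using $\gamma'(1/x)=x^{2}\gamma'(x)$; your boundary-case and sufficiency arguments fill in what it leaves implicit. Two things need repair.

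First, your rewriting of $B$'s first-order condition is wrong as written. $B$'s condition is $\Delta_B\gamma'(1/\theta)=x_A$. Substituting $\gamma'(1/\theta)=\theta^{2}\gamma'(\theta)$ and $x_A=\theta x_B$ gives $\Delta_B\,\theta\,\gamma'(\theta)=x_B$, not $\Delta_B\gamma'(\theta)=x_B$. Your version, combined with $A$'s condition $\Delta_A\gamma'(\theta)=x_B$, would force $\Delta_A=\Delta_B$ rather than yield $\theta=\Delta_A/\Delta_B$. With the corrected equation, dividing gives $\theta=\Delta_A/\Delta_B$ (using $\gamma'>0$). Your back-substitution and payoff computation are then correct.

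Second, strict concavity of $x_A\mapsto\gamma(x_A/x_B)\Delta_A-x_A$ does not follow from boundedness and strict monotonicity. Those only exclude $\gamma$ being affine on an unbounded interval. Some functions satisfying \Cref{cond:bsf} are affine on a bounded interval, e.g.\ on an initial segment $[0,a]$, so best responses can be set-valued. You do not need strictness, though:
\begin{itemize}
\item Step 1 shows every equilibrium is interior, so both first-order conditions are necessary.
\item The first-order system has exactly one solution, which gives uniqueness.
\item Weak concavity already makes that solution a pair of mutual best responses, which gives existence.
\end{itemize}

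Minor points. In Step 1, ``applies to $A$'' should read ``applies to $B$''. Complementarity $p(x_B,x_A)=1-p(x_A,x_B)$ pins down $p(0,0)=1/2$ exactly. The separate check $\phi\ge 0$ is redundant, since the global-maximum argument already covers the deviation to $x_A=0$.
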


We define $\pi _{\ell }:=\Pi _{\ell }^{\ast }/\Delta _{\ell }$ to be the 
\emph{gain function} of player $\ell $ in a battle, which is given by the
ratio of his equilibrium expected utility in the battle to his winning
value. By \Cref{lemma:single}, the gain function boils down to $\phi \!\left( \Delta
_{\ell }/\Delta _{-\ell }\right) $.

\subsection{Contest Architecture}

We assume that, at the beginning of each battle, the outcomes of all
previous battles are commonly known. An \emph{outcome path }is denoted by $%
\ell ^{t}:=(\ell _{s})_{s=1}^{t}$ for $t>1$, where each element $\ell
_{s}\in \{A,B\}$ indicates the winner of battle $s$. Let $\widetilde{H}%
:=\{\ell ^{t}:t\in \mathbb{N}_{++}\bigcup \{+\infty \}\}$ be the set of all
possible outcome paths. Further, let $H^{\dagger }\in \widetilde{H}$ denote
the set of \emph{terminal histories}: Each terminal history determines the
ultimate winner of the contest according to the contest rule.

For a finite terminal history of length $t$, the winner's net payoff is $%
v-\sum_{s=1}^{t}x_{\text{winner},s}$, and the loser's payoff is $%
-\sum_{s=1}^{t}x_{\text{loser},s}$. For an infinite terminal history, we
assign a payoff $\frac{v}{2}-\sum_{s=1}^{+\infty }x_{\ell ,s}$ to each
player $\ell \in \{A,B\}$. Given a path $\ell ^{t}=(\ell _{1},\ldots ,\ell
_{t})$, let $\ell ^{t^{\prime }|t}:=(\ell _{1},\ldots ,\ell _{t^{\prime }})$
denote its prefix of length $t^{\prime }\leq t$. We impose a no-redundancy
condition: No terminal history in $H^{\dagger }$ is a proper prefix of
another. We can then define the \emph{history set} of the contest as $%
H:=\{\ell ^{t^{\prime }|t}:\ell ^{t}\in H^{\dagger }\}$. Each element of $H$
is called a \emph{history} of the contest: That is, a history is an outcome
path that does not extend beyond any terminal history.

We construct \Cref{fig:dyn-ex} to illustrate the terminologies:
\Cref{fig:dyn-ex}(a) depicts a best-of-three contest, while
\Cref{fig:dyn-ex}(b) presents a
tug-of-war with margin two, i.e., a tug-of-war in which a player must lead
by two wins to secure final victory. Each branch represents a possible
battle outcome, and an upward (resp., downward) branch corresponds to a
battle won by player $A$ (resp., player $B$). The set of terminal histories
for the best-of-three contest is $\{AA,BAA,ABB,BB\}$, and terminal histories
of the tug-of-war with margin 2 include $AA$, $BB$, $ABBB$, $BAAA$, and so
on.

\begin{figure}[ht!]
\centering
\subfloat[Best-of-3 Contest.]{
        \includegraphics[width = 0.38\textwidth]{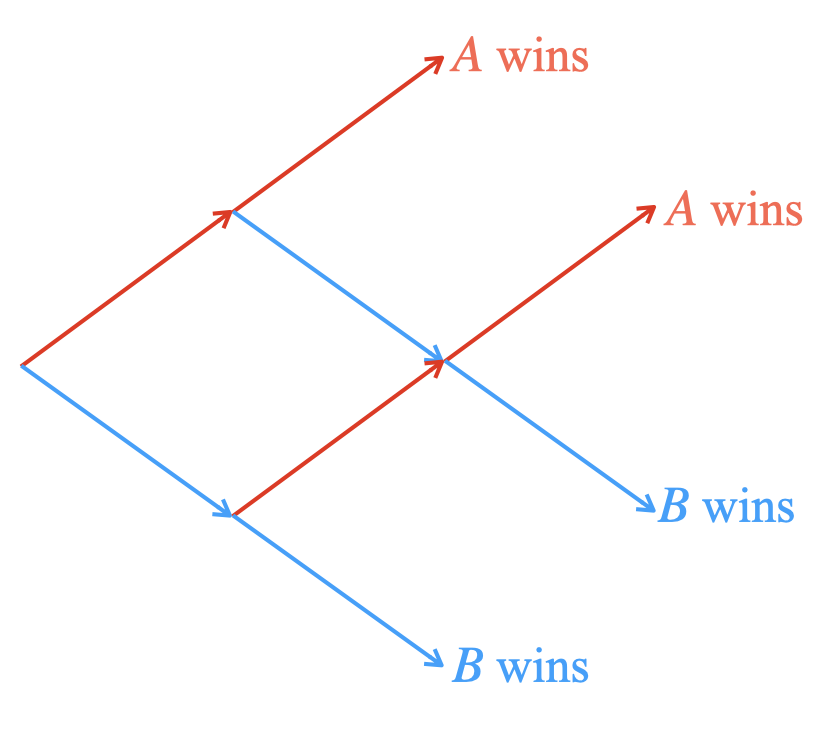}
        \label{fig:dyn-bo3}
    } 
\subfloat[Tug-of-war with Margin 2.]{
        \includegraphics[width = 0.5\textwidth]{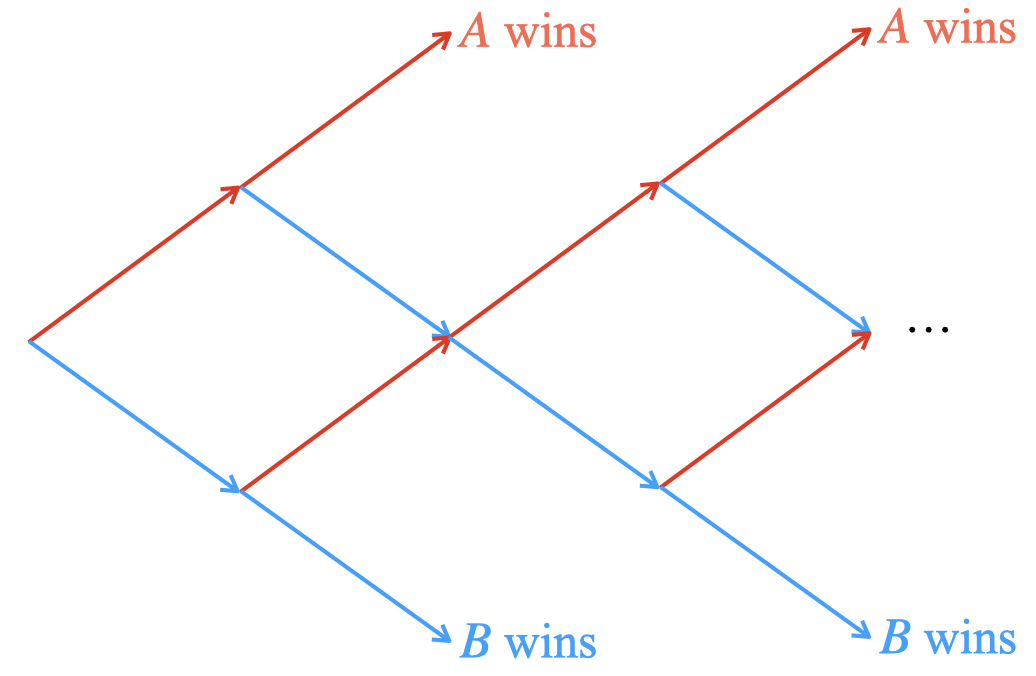}
        \label{fig:dyn-tow2}
    } % \vspace{5pt}
% \begin{minipage}{1\linewidth}
%     \footnotesize
% \end{minipage}
\caption{Example Contest Architectures.}
\label{fig:dyn-ex}
\end{figure}

Denote by $V_{\ell}(h)$ a player $\ell $'s continuation value in the contest
given a history $h \in H$---i.e., the equilibrium payoff the player expects
from the subsequent competition. The history ends up as $(h,\ell )$ if a
player $\ell $ wins the current battle and would otherwise evolve into $%
(h,\ell ^{\prime })$. Hence, the value generated by winning the battle is
given by%
\begin{equation*}
\Delta _{\ell }(h)=V_{\ell }(h,\ell )-V_{\ell }(h,\ell ^{\prime })\text{.}
\end{equation*}
Further, let $V_{\ell }^{0}$ be a player $\ell $'s continuation value at the
beginning of the contest, i.e., his equilibrium expected payoff in the
contest. The expected total effort the contest can elicit in equilibrium is
thus given by $v-\sum_{\ell \in \{A,B\}}V_{\ell }^{0}$.

Let $L(\mathcal{M}):=\min \{t:\ell ^{t}\in H^{\dagger }\}$ denote the
minimum length of the contest $\mathcal{M}$, i.e., the length of the
shortest terminal history. If $L(\mathcal{M})=+\infty $, so that all
terminal histories are infinite, then in equilibrium neither player would
exert any positive effort. Our analysis therefore focuses on contests that
can conclude after finitely many battles, which we call \emph{nontrivial
contests} and define this notion formally as follows.

\begin{definition}
\normalfont A contest is \emph{nontrivial} if $L(\mathcal{M}):=\min \{t:\ell
^{t}\in H^{\dagger }\}<+\infty $.
\end{definition}

It is not meaningful to study contests in which all terminal histories are
infinite. However, our analysis can be applied to examines sequences of
contests $\{\mathcal{M}_{k}\}_{k\in \mathbb{N}_{++}}$, in which each $%
\mathcal{M}_{k}$ is nontrivial but $L(\mathcal{M}_{k})\rightarrow +\infty $
as $k\rightarrow +\infty $. This formulation allows us to explore the
limiting properties of dynamic contest as they become arbitrarily long.

\section{Analysis: Bounded Rent Dissipation and Discouragement Effect}

In this section, we examine equilibrium rent dissipation rate---i.e., the
ratio of the expected equilibrium total effort to the prize value---in
dynamic contests. We establish upper bounds for rent dissipation and delve
into the forces that prevent full rent dissipation in equilibrium.

\subsection{An Upper Bound for Rent Dissipation}

We first establish an upper bound for rent dissipation in any nontrivial
contest. Recall that $L(\mathcal{M}):=\min \{t:\ell ^{t}\in H^{\dagger }\}$
denotes the length of the shortest terminal history of a contest $\mathcal{M}
$. The following ensues.

\begin{theorem}
\label{thm:L} Under \Cref{cond:bsf}, the expected total effort in any
equilibrium (if one exists) of contest $\mathcal{M}$ is less than $\left\{
1-[\phi(1)]^{L(\mathcal{M})}\right\} v$. Therefore, rent does not fully
dissipate in nontrivial contests.
\end{theorem}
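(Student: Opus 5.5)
The plan is to track the joint continuation value $S(h):=V_A(h)+V_B(h)$ backward along the tree. Expected total effort equals $v-S(\emptyset)$, so it suffices to show $S(\emptyset)>[\phi(1)]^{L(\mathcal{M})}v$. The key step is a one-step contraction inequality
\begin{equation*}
S(h)\;\geq\;\phi(1)\max\{S(h,A),S(h,B)\}\qquad\text{for every nonterminal } h\in H .
\end{equation*}
At a terminal finite history, $S=v$. Let $d(h)$ be the length of the shortest continuation of $h$ reaching $H^{\dagger}$. Along a shortest terminal path this distance drops by one each step, so induction on $d(h)$ gives $S(h)\geq[\phi(1)]^{d(h)}v$. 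In particular $S(\emptyset)\geq[\phi(1)]^{L(\mathcal{M})}v$. Since $\gamma'(1)>0$ and $\gamma(1)=1/2$, we have $\phi(1)=1/2-\gamma'(1)<1$, so the bound is strictly below $v$ whenever $L(\mathcal{M})<+\infty$.

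\textbf{Preliminary facts.} First, every equilibrium continuation value is nonnegative, since a player can always exert zero effort forever, and infinite histories pay $v/2\geq 0$. Second, $\phi'(\theta)=-\theta\gamma''(\theta)\geq 0$, so $\phi$ is nondecreasing and $\phi\geq\phi(1)$ on $[1,\infty)$. Third, by \Cref{lemma:single} applied to battle $h$ with values $\Delta_\ell(h)$,
\begin{equation*}
V_\ell(h)=V_\ell(h,\ell')+\Delta_\ell\,\phi(\Delta_\ell/\Delta_{-\ell}),
\end{equation*}
and in particular $V_\ell(h)\geq V_\ell(h,\ell')$. Cases where some $\Delta_\ell\leq 0$ need a separate check: that player exerts no effort, the other wins at essentially zero cost, and $V_\ell(h)\geq V_\ell(h,\ell')$ still holds. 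I expect this to go through by a direct argument or a limiting one.

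\textbf{Proving the contraction.} Without loss of generality $\Delta_A\geq\Delta_B$. Then $\phi(\Delta_A/\Delta_B)\geq\phi(1)$, which gives
\begin{equation*}
V_A(h)\geq\phi(1)V_A(h,A)+(1-\phi(1))V_A(h,B),
\end{equation*}
while $V_B(h)\geq V_B(h,A)$.
\begin{itemize}
\item Adding these and using nonnegativity gives $S(h)\geq\phi(1)S(h,A)$ immediately.
\item For the other child, use $\Delta_B\leq\Delta_A$, i.e. $V_B(h,B)-V_B(h,A)\leq V_A(h,A)-V_A(h,B)$. This yields $\phi(1)S(h,B)\leq\phi(1)\bigl(V_A(h,A)+V_B(h,A)\bigr)$, and the right-hand side is dominated by the lower bound on $S(h)$ just derived.
\end{itemize}
So the stronger player's gain, at least a $\phi(1)$ fraction of his stake, compensates for whichever branch we follow.

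\textbf{Main obstacle.} The statement says ``less than'', so I must rule out equality in $S(\emptyset)\geq[\phi(1)]^{L}v$. I would locate slack at the last nonterminal node $h$ on a shortest terminal path, where one child, say $(h,A)$, is terminal with $V_A(h,A)=v$ and $V_B(h,A)=0$. There I would show the inequality is strict. Either $\phi(\Delta_A/\Delta_B)>\phi(1)$, or the term $(1-\phi(1))V_A(h,B)$ or $V_B(h)-\phi(1)V_B(h,A)$ is positive. Failing that, I would use that equality would force $\Delta_A=\Delta_B$ together with zero values at the other child, and show this is inconsistent with $V_A(h,A)=v$. Strictness then propagates up the path, because each contraction step multiplies by $\phi(1)>0$; here I would verify $\phi(1)>0$, which follows from concavity of $\gamma$, e.g. $\gamma(1)-\gamma(0)\geq\gamma'(1)$ with $\gamma(0)=0$. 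If that fails, the bound is trivial anyway.
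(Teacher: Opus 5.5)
Your proposal is correct and is essentially the paper's proof. The paper's auxiliary lemma is exactly your one-step contraction $S(h)>\phi(1)\max\{S(h,A),S(h,B)\}$: it assumes $\Delta_A\geq\Delta_B$ (which gives $S(h,A)\geq S(h,B)$), uses $\phi(\Delta_A/\Delta_B)\geq\phi(1)$, and then iterates along a shortest terminal path.

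The differences are minor. The paper obtains strictness at every step directly from the dropped term $\Delta_B\,\phi(\Delta_B/\Delta_A)>0$, which is your $V_B(h)-\phi(1)V_B(h,A)$, so your last-node fallback is unnecessary. The paper simply assumes both stakes are positive. Under \Cref{cond:bsf}, a battle with exactly one nonpositive stake has no equilibrium. With two nonpositive stakes, both players exert zero effort and $S(h)=\tfrac12[S(h,A)+S(h,B)]$, so the contraction still holds because $\phi(1)<\tfrac12$. In that case, however, your claim $V_\ell(h)\geq V_\ell(h,\ell')$ can fail.
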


By \Cref{thm:L}, the equilibrium rent dissipation rate in a contest bounded
from above, and the upper bound, $\left\{ 1-[\phi (1)]^{L(\mathcal{M}%
)}\right\} $, is determined by the shortest terminal history. With finite $L(%
\mathcal{M})$ in nontrivial contests, rent is never fully dissipated.%
\footnote{%
Of course, rent does not fully dissipate in a trivial contest with $L(%
\mathcal{M})=+\infty $ either, because equilibrium effort is 0.} For
instance, a standard best-of-$(2K+1)$ contest can end after $K+1$ battles, so
the upper bound is $\left\{ 1-[\phi (1)]^{K+1}\right\} $. Consider a
tug-of-war with a margin $N<+\infty $---in which a player must win $N\geq 1$ 
\emph{more} battles than his opponent to secure final victory. The contest
could last infinitely long if neither party manages to establishs a
sufficient margin. However, the length of its shortest terminal history is
exactly $N$, so its expected total effort cannot exceed $\left\{ 1-[\phi
(1)]^{N}\right\} v$ by the theorem.

\Cref{thm:L} inspires a natural question. Consider a sequence of contests $\{%
\mathcal{M}_{k}\}_{k\in \mathbb{N}_{++}}$, in which each $\mathcal{M}_{k}$
is nontrivial with at least one finite terminal history. Let $L(\mathcal{M}%
_{k})\rightarrow +\infty $ as $k\rightarrow +\infty $, and we examine the
limiting property of this sequence. Does the sequence of expected
equilibrium total efforts converge to $v$? The upper bound $1-[\phi (1)]^{L(%
\mathcal{M}_{k})}$ for rent dissipation rate provided by \Cref{thm:L}
approaches $1$ as $L(\mathcal{M}_{k})\rightarrow +\infty $. However, we next
examine the case of tug-of-war; the results demonstrate that this conjecture
does not hold and the upper bound $1-[\phi (1)]^{L(\mathcal{M}_{k})}$ does
not predict asymptotic properties for all contests.

\subsection{Case of Tug-of-War Contests}

\label{sec:nofull}

In this part, we focus on the case of tug-of-war contests with margin $N\geq
1$ and let $N$ grows to infinity, whereby the upper bound established in %
\Cref{thm:L} loses its bite. We establish a uniform upper bound that remains
valid asymptotically and show that rent does not fully dissipate in the limit.

\subsubsection{Equilibrium Analysis of Tug-of-War}

We first analyze the equilibrium of this contest game and adopt symmetric
Markov Perfect Equilibrium (MPE) as the solution concept. An MPE is also a
subgame-perfect equilibrium, but requires that players' strategies depend on
history only through the most recent state.

The state of a tug-of-war can be summarized by $i\in \mathbb{N}$ with $%
-N\leq i\leq N$, where $N$ is required margin for final victory and $i$
stands for a player's lead over this opponent, i.e., the number of extra
wins he currently possesses. The value of winning a battle (or the incentive
to win), $\Delta (\left. i\right\vert N,v)$, is given by the difference
between the continuation value after winning the battle and that after
losing it, i.e.,%
\begin{equation*}
\Delta (\left. i\right\vert N,v):=V(\left. i+1\right\vert N,v)-V(\left.
i-1\right\vert N,v).
\end{equation*}%
The opponent's winning value is thus $\Delta (\left. -i\right\vert
N,v):=V(\left. -i+1\right\vert N,v)-V(\left. -i-1\right\vert N,v)$.

A symmetric MPE can be described by the value function $V(\left.
i\right\vert N,v)$ for $i$ over $\left\{ \left. i\in \mathbb{N}\right\vert
-N\leq i\leq N\right\} $, which represents the expected value of a player
who leads by $i$ battles in a tug-of-war with margin $N$ and final prize $v$%
. For brevity, we omit $N,v$ in the expressions when there is no confusion.
Recall that a player $\ell $'s gain function in a battle---$\pi _{\ell }=\Pi
_{\ell }^{\ast }/\Delta _{\ell }=\phi \!\left( \Delta _{\ell }/\Delta
_{-\ell }\right) $---defined in \Cref{sec:component-battles}. In this
context, a player's
gain function in a battle with state $i$ is simply $\pi (i):=\phi \!\left(
\Delta (i)/\Delta (-i)\right) $. The equilibrium conditions are 
\begin{equation}
V(N)=v,~V(-N)=0,V(i)=V(i-1)+\pi (i)[V(i+1)-V(i-1)]  \label{eqn:mpe}
\end{equation}%
for $-(N-1)\leq i\leq N-1$. The following ensues.

\begin{proposition}[Existence of Symmetric MPE in Tug-of-War]
\label{prop:tow-mpe} Under \Cref{cond:bsf}, a unique symmetric MPE exists in
a tug-of-war with margin $N\geq 1$. In the equilibrium, $0=V(-N)<V(-N+1)<%
\ldots <V(N-1)<V(N)=v$.\footnote{\citet{ewerhart2019multi} establish
existence and uniqueness of a symmetric MPE for tug-of-war contests under
the additional assumption that $\frac{\phi (1/\theta )}{\theta \lbrack
1-\phi (\theta )]}$ is strictly declining in $\theta \in (0,1)$. We relax
this assumption by establishing the required monotonicity directly in our
proof.}
\end{proposition}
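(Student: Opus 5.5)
We propose reducing the equilibrium system \eqref{eqn:mpe} to a recursion on the increments of the value function, solved outward from the symmetric state $i=0$. Set $a_i:=V(i)-V(i-1)$ for $-N+1\le i\le N$, so that $\sum_i a_i=v$ and $\Delta(i)=a_i+a_{i+1}$. Equation \eqref{eqn:mpe} then reads $a_i=\pi(i)(a_i+a_{i+1})$, i.e.
\[
a_{i+1}=a_i\,\frac{1-\pi(i)}{\pi(i)},\qquad \Delta(i)=\frac{a_i}{\pi(i)}=\frac{a_{i+1}}{1-\pi(i)},
\]
where $\pi(i)=\phi(\theta_i)$ and $\theta_i:=\Delta(i)/\Delta(-i)$. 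Because the system is linear in $v$ once the ratios $\theta_i$ are fixed, we can first fix the scale $a_0=1$, solve for all ratios, and then rescale so that $\sum_i a_i=v$. The claimed strict monotonicity $0=V(-N)<\dots<V(N)=v$ is equivalent to $a_i>0$ for all $i$. This will follow once we know $\phi(\theta)\in(0,1)$ for $\theta\in(0,\infty)$, which holds since $\phi'(\theta)=-\theta\gamma''(\theta)\ge 0$, $\phi(0)=0$, $\phi(+\infty)=1$, and $\phi(1)>0$ by concavity.

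The recursion proceeds level by level. At $i=0$ we have $\theta_0=1$, so $a_1=a_0(1-\phi(1))/\phi(1)$. Suppose that at level $k\ge1$ the increments $a_k$ and $a_{-k+1}$ are already known. The pair of states $\pm k$ involves the unknowns $a_{k+1},a_{-k}$ and a single ratio $\theta=\theta_k$ with $\theta_{-k}=1/\theta$. Using $\Delta(k)=a_k/\phi(\theta)$ and $\Delta(-k)=a_{-k+1}/(1-\phi(1/\theta))$, the consistency condition is the scalar equation
\[
g(\theta):=\frac{\theta\,\phi(\theta)}{1-\phi(1/\theta)}=\frac{a_k}{a_{-k+1}}=:c_k>0 .
\]
Given a solution $\theta$, the new increments $a_{k+1}$ and $a_{-k}$ follow from the displayed formulas and are positive. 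The recursion ends at $k=N-1$, after which we normalize. Hence existence and uniqueness of the symmetric MPE both reduce to showing that $g$ is a continuous bijection from $(0,\infty)$ onto $(0,\infty)$.

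To analyze $g$, we first simplify it via the symmetry $\gamma(x)+\gamma(1/x)=1$. Differentiating gives $\gamma'(1/\theta)=\theta^2\gamma'(\theta)$, hence $1-\phi(1/\theta)=\gamma(\theta)+\theta\gamma'(\theta)$. With the elasticity $u(\theta):=\theta\gamma'(\theta)/\gamma(\theta)\in[0,1]$ (bounded by $1$ by concavity and $\gamma(0)=0$), this yields
\[
g(\theta)=\theta\,\frac{1-u(\theta)}{1+u(\theta)} .
\]
The boundary behaviour should then be routine: $g\to0$ as $\theta\to0$, and $g\to\infty$ as $\theta\to\infty$ because $\phi\to1$ and $1-\phi(1/\theta)\to1$. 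Existence follows from the intermediate value theorem.

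\textbf{Main obstacle.} The hard step is strict monotonicity of $g$, which is exactly the condition that \citet{ewerhart2019multi} impose as an assumption and which we must derive from \Cref{cond:bsf}. The factor $\theta$ is strictly increasing, but $(1-u)/(1+u)$ is increasing only if the elasticity $u$ is nonincreasing, and that does not follow from concavity alone. My first attempt is therefore to differentiate $\log g$ directly:
\[
(\log g)'=\frac1\theta+\frac{\phi'(\theta)}{\phi(\theta)}-\frac{\gamma'+\gamma'+\theta\gamma''}{\gamma+\theta\gamma'} .
\]
The term $\phi'/\phi=-\theta\gamma''/\phi\ge0$ absorbs the $\gamma''$ contribution. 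It then remains to show
\[
\frac{1}{\theta}>\frac{2\gamma'}{\gamma+\theta\gamma'},
\quad\text{equivalently}\quad
\gamma+\theta\gamma'>2\theta\gamma',\ \text{i.e. } \gamma(\theta)>\theta\gamma'(\theta),
\]
which is strict concavity through the origin, i.e. $\phi>0$. The leftover term $\theta\gamma''/(\gamma+\theta\gamma')$ appears with a favourable sign, since $\gamma''\le0$ makes its negative nonnegative. The step I would verify most carefully is that these signs do combine as claimed, together with the strictness needed when $\gamma''\equiv0$ on some interval.
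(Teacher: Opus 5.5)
Your proposal is correct. Its core is the paper's own uniqueness argument, which the paper gives inside the proof of \Cref{thm:towp} by setting $p=0$ there. The paper normalizes by $\widetilde\Delta(i)=(V(i)-V(0))/(V(1)-V(-1))$, which differs from your scaling $a_0=1$ only by the constant factor $\phi(1)$. At level $k$ it solves $\psi(\theta_k)=a_k/a_{-k+1}$, and its $\psi$ is exactly your $g$.

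The paper proves monotonicity of $\psi$ by writing $\psi(\theta)=\phi(\theta)/\bigl(\gamma(\theta)/\theta+\gamma'(\theta)\bigr)$, an increasing numerator over a decreasing denominator. Your log-derivative is the same fact in differential form, and the signs do combine as you hope:
\[
(\log g)'=\frac{\phi}{\theta(\gamma+\theta\gamma')}-\theta\gamma''\Bigl(\frac{1}{\phi}+\frac{1}{\gamma+\theta\gamma'}\Bigr).
\]
Each $\gamma''$-term is nonnegative on its own, so nothing needs to be absorbed. Strictness comes from the first term alone, so intervals with $\gamma''\equiv 0$ are harmless.

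The genuine difference is existence. The paper obtains it from Brouwer's theorem for a self-map on monotone value vectors with $v_i+v_{-i}\le v$. That argument uses only the single-battle properties in \Cref{prop:gensf} and \Cref{lemma:gainratio}, so it also covers non-homogeneous (e.g.\ ratio-form) success functions, where your scale normalization is unavailable. Your route instead delivers existence and uniqueness in one pass under \Cref{cond:bsf}. It also explicitly checks that $g$ maps onto $(0,\infty)$, via $g\le\theta$ near zero and $g\sim\theta$ at infinity. The paper's constructive argument uses this surjectivity silently when it writes $\psi^{-1}$.

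One caveat on your preliminaries. Your argument only gives $\phi>0$ on $[1,\infty)$. Even $\phi(1)>0$ really comes from $\gamma''(1)=-\gamma'(1)<0$, obtained by differentiating $\gamma'(1/x)=x^2\gamma'(x)$, not from concavity alone. You need $\phi(1/\theta_k)>0$ for large $\theta_k$, i.e.\ $\phi>0$ near the origin, to get $a_{-k}>0$ and keep the recursion going. This holds for the Tullock and serial examples. It fails if $\gamma$ is linear on some $[0,a]$, which \Cref{cond:bsf} as written does not rule out. The paper makes the same tacit assumption when it asserts that $\phi$ is strictly increasing, so this is a shared caveat rather than a flaw specific to your route.
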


The equilibrium result paves the way for analysis of equilibrium rent
dissipation in the game.

\subsubsection{Discouragement and Uniform Upper Bound on Rent-dissipation
for Tug-of-War Contests}

Winning is self-reinforcing in a tug-of-war, since winning the current
battle increases the winner's relative incentive to win in the next battle,
i.e., its winning value $\Delta _{\ell }$ increasing relative to the
other's. The following result formalizes this logic.

\begin{lemma}[Self-reinforcement of Wins in Tug-of-War]
\label{lemma:tow-reinf}In the symmetric MPE of the tug-of-war with margin $N$%
, the following holds for $-(N-2)\leq i\leq N-2$: 
\begin{equation}
\frac{\Delta (i+1)}{\Delta (-i-1)}=\underbrace{\frac{1-\pi (-(i+1))}{\pi
(i+1)}}_{>1}\underbrace{\frac{1-\pi (i)}{\pi (-i)}}_{>1}\frac{\Delta (i)}{%
\Delta (-i)}>\frac{\Delta (i)}{\Delta (-i)}.  \label{eqn:tow-win-reinf}
\end{equation}
\end{lemma}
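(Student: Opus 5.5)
The plan is to obtain the identity purely algebraically from the equilibrium conditions \eqref{eqn:mpe}. The two inequalities then follow from a simple property of $\phi$ together with the strict monotonicity in \Cref{prop:tow-mpe}.

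\textbf{Step 1 (one-step recursions for $\Delta$).} Write $\Delta(i)=V(i+1)-V(i-1)$. The condition $V(i)=V(i-1)+\pi(i)\Delta(i)$ splits the increment $\Delta(i)$ into two pieces:
\begin{equation*}
V(i)-V(i-1)=\pi(i)\Delta(i),\qquad V(i+1)-V(i)=[1-\pi(i)]\Delta(i).
\end{equation*}
Both hold for $-(N-1)\le i\le N-1$. The increment $V(i+1)-V(i)$ appears twice: as the upper piece at state $i$ and as the lower piece at state $i+1$. Equating the two expressions gives $\pi(i+1)\Delta(i+1)=[1-\pi(i)]\Delta(i)$. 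Applying the same identity at state $-(i+1)$ gives $\pi(-i)\Delta(-i)=[1-\pi(-(i+1))]\Delta(-(i+1))$. The restriction $-(N-2)\le i\le N-2$ ensures that every state used lies in $\{-(N-1),\dots,N-1\}$, where \eqref{eqn:mpe} applies.

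\textbf{Step 2 (divide).} \Cref{prop:tow-mpe} gives $V$ strictly increasing. Hence all $\Delta(\cdot)>0$, and $\pi(i)\Delta(i)=V(i)-V(i-1)>0$ forces $\pi(i)>0$, so every denominator is nonzero. Solving the two relations of Step 1 for $\Delta(i+1)$ and $\Delta(-i-1)$ and taking their ratio yields exactly the displayed identity in \eqref{eqn:tow-win-reinf}.

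\textbf{Step 3 (the factors exceed one).} Set $\theta=\Delta(j)/\Delta(-j)\in(0,\infty)$. Then $\pi(j)=\phi(\theta)$ and $\pi(-j)=\phi(1/\theta)$. Using $\gamma(\theta)+\gamma(1/\theta)=1$,
\begin{equation*}
\phi(\theta)+\phi(1/\theta)=1-\theta\gamma'(\theta)-\theta^{-1}\gamma'(1/\theta)<1,
\end{equation*}
because $\gamma'>0$. Taking $j=i+1$ shows $1-\pi(-(i+1))>\pi(i+1)$. Taking $j=i$ shows $1-\pi(i)>\pi(-i)$. Since both $\pi$-values in each comparison are positive by Step 2, both factors are strictly greater than $1$. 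Multiplying by the positive ratio $\Delta(i)/\Delta(-i)$ then gives the final strict inequality.

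I expect no serious obstacle. The only delicate point is positivity of the denominators and of the $\Delta$'s, and this is exactly what the strict monotonicity in \Cref{prop:tow-mpe} supplies.
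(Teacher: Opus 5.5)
Your proposal is correct and takes essentially the paper's route: the paper also splits \eqref{eqn:mpe} into $V(i)-V(i-1)=\pi(i)\Delta(i)$ and $V(i+1)-V(i)=[1-\pi(i)]\Delta(i)$, computes the ratio $\frac{V(i+1)-V(i)}{V(-i)-V(-i-1)}$ once via states $i,-i$ and once via states $i+1,-(i+1)$, and multiplies the two resulting identities. Your explicit checks, that each factor exceeds one because $\phi(\theta)+\phi(1/\theta)<1$ and that all quantities are positive by \Cref{prop:tow-mpe}, fill in details the paper's proof leaves implicit.
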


\Cref{lemma:tow-reinf} shows that, by winning a battle and moving the state
from $i$ to $i+1$, the winner's relative strength strictly increases for the
next battle: Both $\left[ 1-\pi (-(i+1))\right] /\pi (i+1)$ and $\left[
1-\pi (i)\right] /\pi (-i)$ are strictly greater than $1$, which enlarges
the ratio of winning values. Further, let $Q(i;N,v)$ denote the winning
probability of a player with $i$ extra wins for the contest overall. The
following result verifies that as the lead enlarges---i.e., the number of
extra wins increases---the probability of winning the overall contest
approaches 1.

\begin{proposition}[Unbounded Advantage Accumulation]
\label{prop:tow-adv} In the symmetric MPE of the tug-of-war with margin $N$,
it holds that $\lim_{i\rightarrow +\infty }\inf_{N>i}Q(i;N,v)=1$; that is,
the frontrunner secures an almost-sure win once its lead is sufficiently
large.
\end{proposition}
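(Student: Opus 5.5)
The plan is to show that the ratio of winning values $r(i):=\Delta(i)/\Delta(-i)$ grows geometrically in $i$, at a rate that does not depend on $N$. I would then turn this into a battle-winning probability close to one for the frontrunner, and finish with a gambler's-ruin comparison.

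\emph{Step 1 (algebraic identities).} Differentiating $\gamma(\theta)+\gamma(1/\theta)=1$ gives $\gamma'(1/\theta)=\theta^{2}\gamma'(\theta)$. Substituting into the definition of $\phi$ yields
$$\phi(\theta)+\phi(1/\theta)=1-2\theta\gamma'(\theta).$$
I would also use $\phi'(\theta)=-\theta\gamma''(\theta)\geq 0$. In state $i$ the equilibrium efforts of \Cref{lemma:single} are $x=\Delta(-i)\gamma'(1/r(i))$ for the player leading by $i$ and $x'=\Delta(i)\gamma'(r(i))$ for the opponent. Hence $x/x'=r(i)$, and the leader wins the battle with probability $p(i)=\gamma(r(i))$.

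\emph{Step 2 (uniform geometric growth of $r$).} Since $\pi(i)=\phi(r(i))$ and $\pi(-i)=\phi(1/r(i))$, the identity in Step 1 rewrites the two factors in \Cref{lemma:tow-reinf} as
$$\frac{1-\pi(-(i+1))}{\pi(i+1)}=1+\frac{2r\gamma'(r)}{\phi(r)}\Big|_{r=r(i+1)},\qquad \frac{1-\pi(i)}{\pi(-i)}=1+\frac{2r^{-1}\gamma'(r^{-1})}{\phi(r^{-1})}\Big|_{r=r(i)}.$$
Both expressions are continuous and strictly greater than $1$ on $[1,M]$ for any fixed $M$. So there is $c(M)>1$ such that, whenever $1\le r(i)\le r(i+1)\le M$, we get $r(i+1)\ge c(M)\,r(i)$. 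Here I use $r(0)=1$ by symmetry and the monotonicity of $r$ from \Cref{lemma:tow-reinf}.

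It follows that $r(i)\ge \min\{M,c(M)^{i}\}$. Consequently there is $i_M$, independent of $N$, with $r(i)\ge M$ for all $i_M\le i\le N-1$. The boundary states $i=N-1$ and $i=-(N-1)$ fall outside the range of \Cref{lemma:tow-reinf}, so I would check them separately. This is harmless: monotonicity of $r$ at the top step follows from the same recursion \eqref{eqn:mpe} together with $V(N)=v$ and $V(-N)=0$ (\Cref{prop:tow-mpe}).

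\emph{Step 3 (gambler's ruin).} Fix $\varepsilon\in(0,1/2)$ and choose $M$ with $\gamma(M)\ge 1-\varepsilon$. By Step 2, $p(j)\ge 1-\varepsilon$ for all $i_M\le j\le N-1$, uniformly in $N$. The probability $Q(i;N,v)$ solves $Q(j)=p(j)Q(j+1)+(1-p(j))Q(j-1)$ with $Q(N)=1$ and $Q(-N)=0$. I would couple the lead process started at $i\ge i_M$ with a random walk that moves up with probability exactly $1-\varepsilon$. The contest is won if the lead reaches $N$ before it reaches $i_M-1$, and by standard gambler's-ruin estimates the probability of ever dropping to $i_M-1$ is at most $(\varepsilon/(1-\varepsilon))^{\,i-i_M+1}$. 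Hence
$$\inf_{N>i}Q(i;N,v)\ge 1-\Big(\tfrac{\varepsilon}{1-\varepsilon}\Big)^{i-i_M+1}\longrightarrow 1 \quad (i\to\infty),$$
and since $\varepsilon$ is arbitrary, $\lim_{i\to\infty}\inf_{N>i}Q(i;N,v)=1$.

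The main obstacle is uniformity in $N$ in Step 2. The growth factors tend to $1$ as $r\to\infty$, so one cannot claim unbounded geometric growth of $r$. The argument instead needs only "$r$ exceeds $M$ after a number of steps depending on $M$ alone," and this truncation at level $M$ is the key device. I would also verify carefully that the equilibrium values are well defined near the boundary for every $N$, relying on \Cref{prop:tow-mpe}.
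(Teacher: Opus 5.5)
Your proof is correct. Steps 1--2 are essentially the paper's \Cref{lemma:int-N}. There, too, \Cref{lemma:tow-reinf} is combined with a uniform bound $\pi^*(\Delta',\Delta)+\pi^*(\Delta,\Delta')\le 1-d_{R''}$ for ratios in $[1,R'']$. Your identity $\phi(\theta)+\phi(1/\theta)=1-2\theta\gamma'(\theta)$ is the explicit homogeneous form of that bound. Together they push $\Delta(i)/\Delta(-i)$ past any threshold within a number of steps independent of $N$ and $v$. (Your separate boundary check is unnecessary: \Cref{lemma:tow-reinf} at $i=N-2$ already yields $r(N-1)$.)

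The routes part ways in the last step. The paper never uses the battle-win probabilities $p(j)$. It rewrites \eqref{eqn:mpe} as $V(i+1)-V(i)=\frac{1-\pi(i)}{\pi(i)}[V(i)-V(i-1)]$ and shows $\frac{v-V(i+1)}{v-V(i)}\le\frac{1-\widetilde{\pi}}{\widetilde{\pi}}$ once $\pi(i)\ge\widetilde{\pi}>\frac12$. It then concludes via $Q(i;N,v)\ge V(i;N,v)/v$. Note that $V(i)=\pi(i)V(i+1)+[1-\pi(i)]V(i-1)$ with $V(N)=v$ and $V(-N)=0$. So $V(i)/v$ is exactly the probability that a fictitious walk with up-probabilities $\pi(j)\le p(j)$ reaches $N$ first. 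In effect, the paper runs your gambler's-ruin argument on a dominated chain.

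What the paper's route buys is a stronger statement: the frontrunner's continuation \emph{value}, not just his win probability, tends to $v$ uniformly in $N$. The paper reuses this for \Cref{thm:tow-rent}, and it needs only the gain-ratio properties of \Cref{lemma:gainratio}. Your coupling is more direct about the probability itself, but it relies on the explicit form $p(j)=\gamma(r(j))$. For general success functions you would instead use $p\ge\pi^*\ge\widetilde{\pi}$.

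Finally, $\varepsilon$ need not be arbitrary. Any fixed $\varepsilon<\frac12$, e.g. $\varepsilon=1-\widetilde{\pi}$, already gives $1-\bigl(\varepsilon/(1-\varepsilon)\bigr)^{i-i_M+1}\to1$.
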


\Cref{prop:tow-adv} implies a discouragement effect that often arises in
dynamic contest. The laggard will be completely discouraged when the gap is
sufficiently large; he would give up the competition---which leads to $%
\lim_{i\rightarrow +\infty }\inf_{N>i}Q(i;N,v)=1$---even if the contest has
yet to end, i.e., $N>i$. This discouragement effect results in
underdissipation of rent in the contest: The concession of the laggard
affords the frontrunner easy wins. This is formalized by the following.

\begin{theorem}[Bounded Rent Dissipation in Tug-of-War]
\label{thm:tow-rent} Under \Cref{cond:bsf}, there exists a constant $\alpha
>0$ such that $V(0)\geq \alpha v$ for all $N\geq 1$ and $v>0$. Consequently,
the expected total effort in any symmetric MPE of the tug-of-war with any
margin $N\geq 1$ is bounded from above by $(1-2\alpha )v$.
\end{theorem}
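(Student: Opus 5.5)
By homogeneity of $\gamma$, the map $(V(\cdot\,|N,v),\Delta(\cdot\,|N,v))$ scales linearly in $v$. So I will set $v=1$ and look for an $\alpha>0$ that does not depend on $N$. The second claim then follows at once. By symmetry both players start with value $V(0)$, so total expected effort is $v-2V(0)\le(1-2\alpha)v$. The proof therefore reduces to a uniform lower bound on $V(0)$.

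\textbf{Step 1 (a product lower bound).} Rewrite \eqref{eqn:mpe} as $V(i)=\pi(i)V(i+1)+(1-\pi(i))V(i-1)$. Since $V\ge 0$ by \Cref{prop:tow-mpe}, this gives $V(i)\ge \pi(i)V(i+1)$. Iterating from $i=0$ up to $N$, and using $V(N)=1$, yields
\begin{equation*}
V(0)\ \ge\ \prod_{i=0}^{N-1}\pi(i),\qquad \pi(i)=\phi\!\left(r(i)\right),\quad r(i):=\frac{\Delta(i)}{\Delta(-i)} .
\end{equation*}
By symmetry $r(0)=1$, and \Cref{lemma:tow-reinf} makes $r(i)$ increasing for $i\ge 0$. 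Moreover $\phi'(\theta)=-\theta\gamma''(\theta)\ge 0$. Hence $\pi(i)\ge\phi(1)>0$ for every $i\ge 0$. This alone reproduces \Cref{thm:L}'s bound $\phi(1)^N$, which collapses as $N\to\infty$. I therefore have to show that $\pi(i)$ approaches $1$ fast enough, namely that $\sum_{i\ge0}(1-\pi(i))$ is bounded uniformly in $N$.

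\textbf{Step 2 (geometric growth of the leader's relative incentive).} I would make \Cref{lemma:tow-reinf} quantitative. The plan is to show that $r(i+1)\ge \rho\, r(i)$ for some $\rho>1$ that depends only on $\gamma$, whenever $0\le i\le N-2$. The factor $(1-\pi(i))/\pi(-i)$ is at least $1$, because $\phi\le\gamma$ and $\gamma(\theta)+\gamma(1/\theta)=1$ give $\pi(i)+\pi(-i)\le 1$. For the factor $(1-\pi(-(i+1)))/\pi(i+1)$, I would exploit $\pi(-(i+1))=\phi(1/r(i+1))\le\phi(1)<\tfrac12$. When $r$ is moderate, the combined product is at least some constant $\rho_0>1$; this follows by continuity and compactness on $[1,\bar r]$. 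When $r$ is large, $\pi(-(i+1))\to\phi(0)=0$, and the product stays away from $1$ because of the first factor. The end result I aim for is $r(i)\ge \rho^{i}$.

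\textbf{Step 3 (summability; main obstacle).} It remains to show that $\sum_i\bigl(1-\phi(\rho^i)\bigr)<\infty$. We have $1-\phi(\theta)=(1-\gamma(\theta))+\theta\gamma'(\theta)$, so I need the tail of $\gamma$ to be summable along geometric sequences. This is where I expect the real difficulty. Concavity plus $\gamma(+\infty)=1$ give $\theta\gamma'(\theta)\to0$ but no explicit rate.

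Two ways around this. One is to exploit $\gamma(\theta)=1-\gamma(1/\theta)$, so that $1-\gamma(\rho^i)=\gamma(\rho^{-i})$, and then prove convergence near $0$ using the equilibrium structure. The other, which I would try first if the first fails, uses \Cref{prop:tow-adv} as a substitute for an explicit rate. Pick $m$ with $\inf_{N>m}Q(m;N,1)\ge 1-\varepsilon$. For $N\le m$, Step 1 already gives $V(0)\ge\phi(1)^m$. For $N>m$, write $V(0)\ge\phi(1)^m V(m)$. I would then bound $V(m)$ below by a constant via a deviation argument: at lead $m$ the leader can match the equilibrium until the state falls to $0$. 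The leader's equilibrium payoff then exceeds the win probability minus the efforts along the path, and those efforts are controlled by the $\Delta$'s, whose sum telescopes to at most $1$. Whichever route works, the conclusion is $V(0)\ge\alpha:=\phi(1)^m\cdot c>0$ uniformly in $N$.
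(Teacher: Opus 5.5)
\textbf{There is a genuine gap: your Step~1 bound is too lossy to give a uniform constant, and Route~2 leaves the key step unproved.}

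Your reduction to a uniform lower bound on $V(0)$ is correct, and so is the Step~1 inequality $V(0)\ge\prod_{i=0}^{N-1}\pi(i)$. But the product discards every path on which the leader ever loses a battle, and Steps~2--3 cannot be completed under \Cref{cond:bsf}.

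Write $w=1/r(i)$ and $\epsilon(w):=w\gamma'(w)/\gamma(w)$. Using $r\gamma'(r)=w\gamma'(w)$, one gets
\[
\frac{1-\pi(i)}{\pi(-i)}=\frac{1+\epsilon(w)}{1-\epsilon(w)},\qquad 1-\pi(i)=\gamma(w)+w\gamma'(w)\ge\gamma(w).
\]
One can construct an admissible $\gamma$ whose elasticity vanishes at the origin, e.g.\ $\gamma(w)\sim c/\log(1/w)$ as $w\to0^+$, extended by $\gamma(x)=1-\gamma(1/x)$. Concavity on $[1,\infty)$ only requires $w^2\gamma'(w)$ to be nondecreasing, which holds here.
\begin{itemize}
\item For such $\gamma$, both factors in \Cref{lemma:tow-reinf} tend to $1$ as $r\to\infty$. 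So your claim that the first factor keeps the product away from $1$ at large $r$ fails, and no uniform $\rho>1$ exists in Step~2.
\item Both factors are bounded, so $r(i)\le B^i$ for some $B$. Hence $1-\pi(i)\ge\gamma(B^{-i})\gtrsim 1/i$, which gives $\sum_i(1-\pi(i))=\infty$ and $\prod_{i<N}\pi(i)\to 0$.
\item The sequence $r(i)$ does not depend on $N$, so this is a real obstruction, not a bookkeeping artifact. Yet the theorem still holds for this $\gamma$.
\end{itemize}

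Route~2 does not close the gap as sketched. From $Q(m)\ge 1-\varepsilon$ you still must show that the leader's expected future effort is small, and a telescoping bound on the $\Delta$'s does not do this. The walk revisits states. Even with one visit per state and $x(i)\le\Delta(i)$, the bound is only of order one (e.g.\ $\sum_{i\ge1}\Delta(i)\le 2$), which gives no positive lower bound on $V(m)$. A workable effort bound needs $x(i)/\Delta(i)$ to be small and an upward drift at large leads. Both require $r(i)$ to be large beyond a fixed lead, uniformly in $N$, which is exactly the ingredient below.

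\textbf{The missing idea.} No rate for $\pi(i)\to 1$ is needed; it suffices that $\pi(i)$ eventually stays above $\tfrac12$. From \eqref{eqn:tow-eqm-1}--\eqref{eqn:tow-eqm-2},
\[
V(i+1)-V(i)=\frac{1-\pi(i)}{\pi(i)}\bigl[V(i)-V(i-1)\bigr].
\]
This gambler's-ruin structure gives
\[
\frac{v-V(0)}{v-V(-1)}=\frac{S}{1+S},\qquad S:=\sum_{j=0}^{N-1}\prod_{k=0}^{j}\frac{1-\pi(k)}{\pi(k)}.
\]
Your ``moderate $r$'' observation supplies the rest. While $r(i)\le R'$, \Cref{lemma:gainratio}\ref{pi-high} makes the factor in \Cref{lemma:tow-reinf} at least $1+d_{R'}$. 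So there is an $N'$, independent of $N$ and $v$, with $\pi(k)\ge\widetilde\pi>\tfrac12$ for all $k\ge N'$ (this is \Cref{lemma:int-N}). Together with $\pi(k)\ge\phi(1)$ for $0\le k<N'$, this dominates $S$ by a finite constant $\bar S$ (a geometric tail), and therefore $V(0)\ge v/(1+\bar S)$.

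The same identity also repairs your Route~2. It gives $v-V(N'+j)\le\bigl(\frac{1-\widetilde\pi}{\widetilde\pi}\bigr)^j v$, so $V(m)\ge v/2$ for a fixed $m$ and all $N>m$. Your inequality $V(0)\ge\phi(1)^m V(m)$, together with your $N\le m$ case, then yields $\alpha=\phi(1)^m/2$.
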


\Cref{thm:tow-rent} establishes a uniform upper bound on rent dissipation in
tug-of-war contests that applies for all $N$ and remains valid
asymptotically. This formally verifies that rent cannot fully dissipate in
any tug-of-war contest, even when the marginal requirement for victory
becomes arbitrarily large and the lengths of all terminal histories grow
without bound.

The frontrunner's advantage accumulates unboundedly.

\subsection{Exchangeable Contests}

\Cref{thm:tow-rent} establishes in the case of tug-of-war that rent does not
necessarily dissipate in a contest even if its minimum length can be
infinitely long and demonstrates a discouragement effect that limits effort
provision. We now develop a general condition---exchangeability---for
rent underdissipation that underpins discouragement effect in dynamic
contests.

\subsubsection{Exchangeability and Discouragement Effect}

\label{sec:exchangeable}
We first provide a formal definition of
exchangeability and then elaborate on its role in shaping players' dynamic
incentives.

\begin{definition}
\label{exchangeable} \normalfont A contest is \emph{exchangeable} if any two
histories that differ only in the order of the battle outcomes lead to the
same subgame or final outcome.
\end{definition}

Exchangeability is satisfied by many dynamic contests, such as tug-of-war
and best-of-$(2K+1)$ with $K\in \mathbb{N}_{++}$. Consider, for instance, a
best-of-$(2K+1)$ contest with $K>1$ or a tug-of-war with $N>2$. Two
histories, $(A,B,A)$ and $(A,A,B)$, lead to the same subsequent play. By
definition, all histories of an exchangeable contest can be summarized by a
pair $(i,j)$, where $i$ and $j$ are, respectively, the number of wins player 
$A$ and player $B$ have respectively secured. As a result, the equilibrium
of any exchangeable contest can be described by the value function $V_{\ell
}^{i,j}$ that represents player $\ell $'s continuation value at state $(i,j)$%
.

Exchangeability allows a frontrunner to accumulate advantage, thereby
catalyzing the discouragement effect. For player $A$, given past outcomes $%
(\ell _{1},\ldots ,\ell _{t})$, his value of winning the next battle is
given by $V(\ell _{1},\ldots ,\ell _{t},A)-V(\ell _{1},\ldots ,\ell _{t},B)$;
this difference coincides with the gap between the continuation values
associated with the alternative histories, $(A,\ell _{1},\ldots ,\ell _{t})$
and $(B,\ell _{1},\ldots ,\ell _{t})$. Consequently, the impact of an early
outcome does not vanish as the contest unfolds, even after a large number of
subsequent battles. This long-memory property of exchangeable contests
amplifies early success into a persistent advantage, while an early
disadvantage---possibly arising from an accidental loss---can be perpetuated
indefinitely. As a result, the laggard is discouraged from catching up,
which dampens incentives for both players and leads to rent underdissipation.

We now develop this intuition into a formal analysis of rent dissipation in
the limit in exchangeable contests. For this purpose, we introduce the
notion of\emph{\ (exchangeability-preserving) extension}. For expositional
ease, we focus on contests that are symmetric, meaning the contest rules are
independent of the player's identity.

\begin{definition}
\normalfont 
For a given symmetric exchangeable contest $\mathcal{M}$, a symmetric
contest $\mathcal{M}^{\prime}$ is said to be the \emph{$N$-extension} of $%
\mathcal{M}$, with $N\geq2$, if both the following conditions are met:

\begin{enumerate}
\item[(i)] $\mathcal{M}^{\prime }$ is exchangeable, and its subgame
following histories $(A,B)$ or $(B,A)$ is $\mathcal{M}$;

\item[(ii)] from the start of $\mathcal{M}^{\prime }$, a player wins the
contest after winning exactly $N$ battles in a row.
\end{enumerate}
\end{definition}

For instance, a best-of-$(2K+1)$ contest is the $(K+1)$-extension of the
best-of-$(2K-1)$ contest. As another example, a tug-of-war with margin-$N$
is the $N$-extension of itself. With this definition, we explore a sequence
of exchangeable contests $\{\mathcal{M}_{k}\}_{k=1}^{K}$ such that, for each 
$k$, the contest $\mathcal{M}_{k+1}$ is an $N_{k+1}$-extension of $\mathcal{M%
}_{k}$. By construction, $L(\mathcal{M}_{k})\leq N_{k}$ for every $k$, and %
\Cref{thm:L} therefore implies that rent cannot fully dissipate in any $%
\mathcal{M}_{k}$. Accordingly, the substantive question is whether total
effort can converge to full dissipation along an increasingly long sequence,
so we focus on the asymptotic case $K\rightarrow +\infty $. Further, since
we focus on the limit behavior, it is without loss to assume that the
initial contest $\mathcal{M}_{1}$ is a single-battle contest. Let $V_{0;k}$
denote the equilibrium payoff of a player in $\mathcal{M}_{k}$. The
following result ensues.

\begin{proposition}[Limit of Exchangeable Contests]
\label{prop:homogeneous-ex} Impose \Cref{cond:bsf}. For any sequence of
exchangeable extensions $\{\mathcal{M}_{k}\}_{k=1}^{+\infty }$ such that $%
\lim_{k\rightarrow +\infty }V_{0;k}$ exists, it holds that $%
\lim_{k\rightarrow +\infty }V_{0;k}\geq \widetilde{\alpha }v$, where $%
\widetilde{\alpha }>0$ is constant that depends only on the success
function. That is, rent does not fully dissipate in the limit of the
sequence of exchangeable contests, and the expected total effort is less
than $(1-2\widetilde{\alpha })v$.
\end{proposition}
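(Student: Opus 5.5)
The plan is to reduce the problem to a recursion between consecutive contests in the sequence and then pass to the limit, mirroring the argument behind \Cref{thm:tow-rent}. Write $\lambda_k := V_{0;k}/v$. By \Cref{cond:bsf}, equilibrium efforts and values scale linearly in $v$, so $\lambda_k$ depends only on $\gamma$ and on the architecture of $\mathcal{M}_k$.

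\textbf{Step 1: decompose $\mathcal{M}_{k+1}$.} Since $\mathcal{M}_{k+1}$ is exchangeable and its subgame after $(A,B)$ is $\mathcal{M}_k$, every state $(i,j)$ with $i,j\geq 1$ coincides with state $(i-1,j-1)$ of $\mathcal{M}_k$. Hence $\mathcal{M}_{k+1}$ is $\mathcal{M}_k$ plus a \emph{boundary layer}, consisting of the states $(i,0)$ and $(0,j)$ with $i,j<N_{k+1}$, where $N_{k+1}$ consecutive wins end the contest. Applying the same observation inside $\mathcal{M}_k$, from $(i,0)$ a loss leads to $(i,1)$, which is the boundary state $(i-1,0)$ of $\mathcal{M}_k$. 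So the values of $\mathcal{M}_{k+1}$ on its boundary are determined recursively by the boundary values of $\mathcal{M}_k$. At every state, \Cref{lemma:single} gives
\[
V_\ell(h)=V_\ell(h,\ell')+\phi\!\left(\Delta_\ell/\Delta_{-\ell}\right)\Delta_\ell .
\]
Because $\phi'(\theta)=-\theta\gamma''(\theta)\geq 0$, $\phi$ is nondecreasing. Therefore any player whose winning value is at least his opponent's retains at least the fraction $\phi(1)$ of his stake in that battle.

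\textbf{Step 2: self-reinforcement along the boundary.} I would prove an analogue of \Cref{lemma:tow-reinf} for the boundary chain. Consider a player who has won the first $i$ battles, equivalently who sits at the boundary state $(i-1,0)$ of each earlier contest $\mathcal{M}_{k+1-m}$. The aim is to show that his ratio $\Delta_\ell/\Delta_{-\ell}$ is at least $1$ and is increasing in $i$, by the same telescoping identity that yields \eqref{eqn:tow-win-reinf}. Then, following the proof of \Cref{prop:tow-adv}, the probability $Q_i$ that this leader wins the overall contest should satisfy
\[
\inf_{k}Q_i\to 1 \quad\text{as } i\to\infty ,
\]
uniformly across the sequence. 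The intended mechanism is that exchangeability lets an early win persist, so the leader's relative incentive keeps growing.

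\textbf{Step 3: uniform lower bound and the limit.} From Step 2, fix $i^\ast$, independent of $k$, with $Q_{i^\ast}\geq 1/2$. Reaching the leading boundary state after $i^\ast$ straight wins has probability at least some $c(i^\ast,\gamma)>0$, since at each step the leader wins with probability at least $\gamma(1)=1/2$. Summing the rents the leader collects along this path, each at least $\phi(1)$ times a stake bounded below by $(Q_{i+1}-Q_{i})v$ up to the loser's effort, should give
\[
V_{0;k}\geq \widetilde\alpha v ,
\]
with $\widetilde\alpha$ depending only on $\gamma$. The existence of the limit $\lim_k V_{0;k}$ is used only to pass this inequality to the limit.

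\textbf{Main obstacle.} The hard part is Step 2, namely obtaining the accumulation bound uniformly in $k$. A direct chaining of the $\phi(1)$ bound only reproduces the non-uniform factor $[\phi(1)]^{L}$ of \Cref{thm:L}. What is needed instead is that the leader's stakes do not shrink geometrically, which means controlling how the boundary values of $\mathcal{M}_{k}$ feed into those of $\mathcal{M}_{k+1}$. My first attempt would be to prove monotonicity of the boundary value ratios in $i$ by induction on $k$, as in the proof of \Cref{prop:tow-mpe}. I would then bound $1-Q_i$ by a product of laggard-win probabilities $\gamma(\Delta_{-\ell}/\Delta_\ell)$, which Step 2 makes summably small.
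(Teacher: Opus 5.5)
\textbf{There is a genuine gap, and it sits exactly at the step you flag.}

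The tug-of-war identity \eqref{eqn:tow-win-reinf} works because the upper piece of the stake at state $i$, $V(i+1)-V(i)=[1-\pi(i)]\Delta(i)$, is the same increment as the lower piece of the stake at state $i+1$, so the ratios chain. Along the streak path of an extension this breaks. Write $V_{i;k}$ for the value of a player on an $i$-win streak from the start of $\mathcal{M}_k$. A loss sends him to streak state $i-1$ of $\mathcal{M}_{k-1}$, so his stake is $\Delta_{i;k}=V_{i+1;k}-V_{i-1;k-1}$. It splits as $(V_{i+1;k}-V_{i;k})+(V_{i;k}-V_{i-1;k-1})$. The lower piece of the next stake, however, is $V_{i+1;k}-V_{i;k-1}$. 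The two differ by $V_{i;k}-V_{i;k-1}$, which compares the same streak state in two \emph{different} contests. You have no control over its sign or size, since $N_k$ and $\mathcal{M}_{k-1}$ are arbitrary.

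Hence neither $\Delta_{i;k}/\Delta_{-i;k}\geq 1$ nor monotonicity in $i$ follows. The paper never claims them. Its proof explicitly covers streak states with $\theta_{i,k}=\Delta_{i;k}/\Delta_{-i;k}<1/\theta^\ast$, where the streak holder's stake is much \emph{smaller} than the laggard's. That case is precisely why the extra single-battle property of \Cref{assumption_3_ex} is invoked, and your plan never touches it.

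Step 3 inherits the problem. Both ``the leader wins each battle with probability at least $1/2$'' and the $k$-uniform limit $Q_i\to 1$ rest on Step 2. Even granting $Q_{i^\ast}\geq 1/2$, winning probabilities do not bound values from below. In general only $V\leq vQ$ holds, and stakes are differences of values net of future efforts. So ``a stake bounded below by $(Q_{i+1}-Q_i)v$ up to the loser's effort'' is not an estimate you can make rigorous.

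A symptom of the gap is that your argument would bound every $V_{0;k}$ uniformly, which would make the hypothesis that $\lim_k V_{0;k}$ exists idle. The paper proves only a Ces\`aro bound, $\frac1K\sum_{k=1}^K V_{0;k}\geq\widetilde\alpha v$, and uses existence of the limit to transfer it.

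The missing idea is to stop arguing contest by contest and telescope \emph{across} the sequence along diagonals $k-i=d$, where the mixed differences do collapse:
$\sum(V_{i;k}-V_{i-1;k-1})\leq Kv-\sum_k V_{0;k}$ and $\sum(V_{1-i;k-1}-V_{-i;k})\leq\sum_k V_{0;k}$.
The argument then runs as follows.
\begin{enumerate}
\item Write $Kv-\sum_k V_{1;k}$ as the sum of increments $V_{i+1;k}-V_{i;k}$ over $2\leq k\leq K$, $1\leq i\leq N_k-1$.
\item Bound each increment by $\frac{1-\widetilde\pi}{\widetilde\pi}(V_{i;k}-V_{i-1;k-1})$ when $\theta_{i,k}>\theta^\ast$.
\item Bound it by $C(V_{1-i;k-1}-V_{-i;k})$ when $\theta_{i,k}\leq\theta^\ast$, with $C$ depending only on $\gamma$.
\item Combining gives $\frac1K\sum_k V_{1;k}\geq\frac{2\widetilde\pi-1}{(C+1)\widetilde\pi}v$, and $V_{0;k}\geq\pi_1^\ast V_{1;k}$ finishes.
\end{enumerate}
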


\Cref{sec:exchangeable} thus establishes a uniform upper bound of
equilibrium rent dissipation in symmetric exchangeable contest that remains
valid asymptotically. Rent does not fully dissipate in an exchangeable
contest even if it continues infinitely. The key takeaway here is the role
of exchangeability in shaping players' dynamic incentives. Exchangeability,
with its long-memory property, allows early success to translate into a
lasting advantage. This persistence discourages the laggard and attenuates
competition. The well-known discouragement effect in the dynamic contest
literature thus arises.

To further illustrate the nuances of exchangeability, we now examine a
simple dynamic contest with a contest rule that violates exchangeability. We
show that, in this setting, full rent dissipation can be achieved in the
limit.

\subsection{Consecutive-win Contests: When Exchangeability is Absent}

Consider a $K$-consecutive-win contest, with $K\in \mathbb{N}_{++}$: Namely,
the player who first wins $K$ battles in a row is awarded the final prize.
Notably, this contest is not exchangeable: For example, say $K=3$, $(A,A,B)$
and $(A,B,A)$ do not lead to the same subsequent subgame: Player $B$ holds
an advantage in the former case, whereas player $A$ is leading in the
latter. More importantly, it is worth noting that memory is short in this
contest: The impact of one's wins is entirely wiped out once he loses a
battle before achieving $K$ consecutive wins.

We first establish equilibrium existence in this. As before, we focus on
symmetric MPE with the state space $\{i\in \mathbb{N}:-K\leq i\leq K\}$,
where $i\geq 0$ denotes the number of consecutive wins accumulated by the
leader, and $-i$, with $i<0$, represents the number of losses the laggard
has recorded in the current losing streak. Obviously, if one player is in
state $i$, his opponent must be in state $-i$. The symmetric MPE is
characterized by a value function $\widehat{V}(\left. i\right\vert K,v)$
that represents a player's continuation value in state $i$ of the $K$%
-consecutive-win contest with a prize $v$.\ For convenience, when no
confusion arises, we suppress the arguments $K$ and $v$ for $\widehat{V}%
(\cdot )$ and other equilibrium objects defined below.

To specify equilibrium conditions, for a player in state $i$, we let $%
\widehat{\Delta }(i)$ denote the difference in his continuation values after
winning and losing the upcoming battle:%
\begin{equation*}
\widehat{\Delta }(i):=%
\begin{cases}
\widehat{V}(i+1)-\widehat{V}(-1), & \text{ if }K-1\geq i\geq 0, \\ 
\widehat{V}(1)-\widehat{V}(i-1), & \text{ if }-(K-1)\leq i<0.%
\end{cases}%
\end{equation*}%
Clearly, $\widehat{\Delta }(i)$ measures the player's incentive to win and
the value of this battle. The equilibrium conditions are: $\widehat{V}(K)=v$%
, $\widehat{V}(-K)=0$, and 
\begin{equation}
\widehat{V}(i)=%
\begin{cases}
\widehat{V}(-1)+\widehat{\pi }(i)\widehat{\Delta }(i), & \text{ if }K-1\geq
i\geq 0, \\ 
\widehat{V}(i-1)+\widehat{\pi }(i)\widehat{\Delta }(i), & \text{ if }%
-(K-1)\leq i<0.%
\end{cases}
\label{consective_win_valuation}
\end{equation}%
where $\widehat{\pi }(i):=\phi \left( \widehat{\Delta }(i)/\widehat{\Delta }%
(-i)\right) $ represents the player's equilibrium gain ratio in state $i$
for the upcoming battle. We obtain the following.

\begin{proposition}[Existence and Uniqueness of Symmetric MPE in
Consecutive-win Contests]
\label{prop:cw-mpe} Under \Cref{cond:bsf}, a unique symmetric MPE exists for
the $K$-consecutive-win contest with $K\in \mathbb{N}_{++}$. In the
symmetric MPE, $0=\widehat{V}(-K)<\widehat{V}(-K+1)<\ldots <\widehat{V}(K-1)<%
\widehat{V}(K)=v $.
\end{proposition}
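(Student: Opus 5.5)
Our plan is to reduce the symmetric MPE to a fixed-point problem in a small number of scalar unknowns and then use monotonicity to get existence and uniqueness. By symmetry, in state $i$ the opponent is in state $-i$, and the two winning values $\widehat{\Delta}(i)$ and $\widehat{\Delta}(-i)$ are exactly the values that enter \Cref{lemma:single}. So the battle in state $i$ has a unique equilibrium once the continuation values are fixed, and the gain ratio is $\widehat{\pi}(i)=\phi(\widehat{\Delta}(i)/\widehat{\Delta}(-i))$. Subgame perfection together with the Markov restriction then makes the symmetric MPE equivalent to a solution of \eqref{consective_win_valuation} with $\widehat{V}(K)=v$ and $\widehat{V}(-K)=0$. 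Because every state reached after a win is $1$ and every state reached after a loss is $-1$, the whole system is pinned down by the two numbers $a:=\widehat{V}(-1)$ and $b:=\widehat{V}(1)$.

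\textbf{Step 1: backward recursion given $(a,b)$.} Fix $(a,b)$ with $0\le a\le b\le v$ and treat the chains in pairs $(i,-i)$. For $i=K-1$ we have $\widehat{\Delta}(K-1)=v-a$ and $\widehat{\Delta}(-(K-1))=b-\widehat{V}(-K)=b$. \Cref{lemma:single} then gives $\widehat{V}(K-1)=a+(v-a)\phi(\cdot)$ and $\widehat{V}(-(K-1))=b\,\phi(\cdot)$. Iterating downward, each step uses $\widehat{V}(i+1)$ and $\widehat{V}(-i-1)$ already computed, with $\widehat{\Delta}(i)=\widehat{V}(i+1)-a$ and $\widehat{\Delta}(-i)=b-\widehat{V}(-i-1)$. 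This yields continuous maps $i\mapsto \widehat{V}(i;a,b)$. An MPE is a fixed point of $T(a,b):=(\widehat{V}(-1;a,b),\widehat{V}(1;a,b))$. Symmetry of the game also forces the consistency condition $\widehat{V}(0)+\widehat{V}(0)=$ the total surplus at state $0$, which serves as a check.

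\textbf{Step 2: existence.} $T$ maps the compact convex set $\{0\le a\le b\le v\}$ into itself, because $\phi(\theta)\in[0,1]$ for homogeneous $\gamma$ with $\gamma''\le0$. Hence every $\widehat{V}(i)$ is a convex combination of $\widehat{V}(i-1)$ (or $a$) and $\widehat{V}(i+1)$. The same convex-combination structure gives $\widehat{V}(i)\in[\text{neighbors}]$, which yields weak monotonicity. Brouwer then gives a fixed point. For strict monotonicity we need $\widehat{\pi}(i)\in(0,1)$ whenever both $\Delta$'s are positive; this follows from $\phi(\theta)=\gamma(\theta)-\theta\gamma'(\theta)\in(0,1)$ for $\theta\in(0,\infty)$. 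We then argue by induction from the ends that all $\widehat{\Delta}$'s are strictly positive, which gives the strict chain $0<\widehat{V}(-K+1)<\dots<v$.

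\textbf{Step 3: uniqueness (the main obstacle).} We would reduce to one dimension by expressing $b$ as a function of $a$ through the state-$1$ equation. The idea is to show the resulting scalar map $a\mapsto \widehat{V}(-1;a,b(a))$ has slope strictly less than one, or is a contraction, so it has a unique fixed point. The difficulty is that $\phi(\Delta(i)/\Delta(-i))$ depends on $a$ and $b$ in opposite directions. As in \Cref{prop:tow-mpe}, where the paper avoids the extra assumption of \citet{ewerhart2019multi}, the plan is to establish directly the monotonicity of the ratios $\widehat{\Delta}(i)/\widehat{\Delta}(-i)$ along the recursion. We would use the identity $\phi'(\theta)=-\theta\gamma''(\theta)\ge0$ and the symmetry $\gamma(\theta)+\gamma(1/\theta)=1$, which gives $\phi(\theta)+\phi(1/\theta)\le1$ with $\theta\phi(1/\theta)$ controlled. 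From this we would show that $\widehat{V}(1)-\widehat{V}(-1)$ is strictly monotone in the unknown. If the single-crossing argument fails, we would instead parametrize by the ratio $r=\widehat{\Delta}(0)/\widehat{\Delta}(0)$'s neighbor ratios and show the implied boundary value $\widehat{V}(K)$ is strictly increasing in the parameter, a shooting argument. A strictly increasing shooting map hits $v$ exactly once, which gives uniqueness.
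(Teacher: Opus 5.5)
Your proposal has two genuine gaps. Existence is applied on a set that is not invariant and that contains a spurious fixed point. Uniqueness contains no actual argument.

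\textbf{Existence.} Your reduction to the pair $(a,b)=(\widehat V(-1),\widehat V(1))$ with a backward recursion is the same as the paper's map $\bm\xi$. The trouble is the set $\{0\le a\le b\le v\}$ on which you invoke Brouwer: it is not invariant under $T$. Take $K=3$, $(a,b)=(0,v)$ and write $c=\phi(1)$. At states $\pm2$ both stakes equal $v$, so $\widehat V(\pm 2)=cv$. At states $\pm1$ the leader's stake is $cv$ and the laggard's is $(1-c)v$. Hence $\widehat V(1;0,v)=cv\,\phi(c/(1-c))<cv<cv+(1-c)v\,\phi((1-c)/c)=\widehat V(-1;0,v)$. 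For Tullock with $r=1$ this gives $T(0,v)=(43v/64,\,v/64)$. Your convex-combination structure only gives $\widehat V(1;a,b)\ge a$ and $\widehat V(-1;a,b)\le b$, never the cross inequality you need.

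Worse, your set contains $(v,v)$. At $a=v$ the stake $\widehat\Delta(K-1)=v-a$ vanishes, so \Cref{lemma:single} no longer applies and you need a continuous extension to zero stakes anyway. With that extension one checks $T(v,v)=(v,v)$: both players get $v$, which is no equilibrium. So even on the invariant square $[0,v]^2$, Brouwer is satisfied by this degenerate point. Your ``induction from the ends'' also cannot start there, since $\widehat\Delta(K-1)=0$.

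The missing ingredient is the budget constraint $a+b\le v$. The paper works on $\{(u,w)\in[0,v]^2:u+w\le v\}$. This set is preserved because $\Pi^*(\Delta',\Delta)+\Pi^*(\Delta,\Delta')\le\max\{\Delta,\Delta'\}$, which follows from $\phi(\theta)+\phi(1/\theta)\le 1$. At a fixed point, $\widehat V(-1)\le\widehat V(1)$ then holds automatically, because the laggard's value along the chain never exceeds his winning continuation $\widehat V(1)$. Since $T(0,0)=(0,v)$, the fixed point is not $(0,0)$. Together with the sum constraint this forces $\widehat V(1)>0$ and $\widehat V(-1)<v$, which is exactly what makes all stakes strictly positive.

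\textbf{Uniqueness.} Your Step 3 lists hoped-for strategies: slope below one, contraction, single crossing, and shooting in an unspecified parameter. Note that $\widehat\Delta(0)/\widehat\Delta(0)\equiv 1$ is not a parameter. The monotonicity of $\widehat\Delta(i)/\widehat\Delta(-i)$ in $i$ (\Cref{lemma:cw-reinf}) has no bearing on uniqueness.

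The idea you are missing is that homogeneity decouples ratios from levels. In the paper's notation, $(u,w)=(\widehat V(-1),\widehat V(1))$ and $(a_j,b_j)=(\widehat V(-K+j),\widehat V(K-j))$. The recursion gives $w-a_j=(w-a_{j-1})[1-\phi(1/r_{j-1})]$ and $b_j-u=(b_{j-1}-u)\phi(r_{j-1})$, where $r_j:=(b_j-u)/(w-a_j)$. Hence $r_j=\psi(r_{j-1})$ with $\psi(\theta)=\theta\phi(\theta)/[1-\phi(1/\theta)]=\phi(\theta)/[\gamma(\theta)/\theta+\gamma'(\theta)]$. This map is strictly increasing by concavity of $\gamma$ and $\gamma(\theta)+\gamma(1/\theta)=1$.

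The fixed-point condition forces $r_{K-1}=1$. So $r_0=(v-u)/w$ must be the unique root of $\psi^{K-1}(r)=1$. Then $w-u=w\prod_{t=0}^{K-2}[1-\phi(1/\psi^t(r_0))]$ supplies a second linear equation. The two equations pin down $(u,w)$, and hence the whole value function. A shooting argument of the kind you sketch works only once this autonomous ratio map and its monotonicity are identified. As written, uniqueness is unproven.
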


As in tug-of-war contests, winning in consecutive-win contests is also
self-reinforcing, which is demonstrated by the following.

\begin{lemma}[Self-reinforcement of Wins in Consecutive-win Contests]
\label{lemma:cw-reinf} In any symmetric MPE of a $K$-consecutive-win
contest, the following holds for $0\leq i\leq K-2$: 
\begin{equation}
\frac{\widehat{\Delta }(i+1)}{\widehat{\Delta }(-i-1)}=\underbrace{\frac{1-%
\widehat{\pi }(-i-1)}{\widehat{\pi }(i+1)}}_{>1}\frac{\widehat{\Delta }(i)}{%
\widehat{\Delta }(-i)}.
\end{equation}
\end{lemma}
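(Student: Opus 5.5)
The plan is to derive the identity directly from the equilibrium conditions \eqref{consective_win_valuation}. Each of $\widehat{\Delta }(i)$ and $\widehat{\Delta }(-i)$ will be written as a multiple of the corresponding winning value one state further along. The strict inequality then comes from the fact that the gain functions of the two players in the same battle sum to less than one.

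\emph{Leader's side.} Fix $0\leq i\leq K-2$, so that $i+1\leq K-1$ and the first branch of \eqref{consective_win_valuation} applies at state $i+1$. This gives $\widehat{V}(i+1)=\widehat{V}(-1)+\widehat{\pi }(i+1)\widehat{\Delta }(i+1)$. By definition $\widehat{\Delta }(i)=\widehat{V}(i+1)-\widehat{V}(-1)$, so
\begin{equation*}
\widehat{\Delta }(i)=\widehat{\pi }(i+1)\,\widehat{\Delta }(i+1).
\end{equation*}

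\emph{Laggard's side.} Here $\widehat{\Delta }(-i)=\widehat{V}(1)-\widehat{V}(-i-1)$. This holds by definition when $i\geq 1$. When $i=0$ the state is $0$ and $\widehat{\Delta }(0)=\widehat{V}(1)-\widehat{V}(-1)$, which is the same formula. Since $-(K-1)\leq -i-1<0$, the second branch of \eqref{consective_win_valuation} gives $\widehat{V}(-i-1)=\widehat{V}(-i-2)+\widehat{\pi }(-i-1)\widehat{\Delta }(-i-1)$, where $\widehat{\Delta }(-i-1)=\widehat{V}(1)-\widehat{V}(-i-2)$. Substituting,
\begin{equation*}
\widehat{\Delta }(-i)=\bigl(1-\widehat{\pi }(-i-1)\bigr)\widehat{\Delta }(-i-1).
\end{equation*}

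\emph{Combining.} Dividing the leader's relation by the laggard's relation and rearranging yields the displayed identity. This step requires all the quantities in the denominators to be nonzero. By \Cref{prop:cw-mpe}, the values $\widehat{V}$ are strictly increasing in the state, so every $\widehat{\Delta }(\cdot )>0$. Then the leader's relation forces $\widehat{\pi }(i+1)=\widehat{\Delta }(i)/\widehat{\Delta }(i+1)>0$. Likewise, the laggard's relation forces $1-\widehat{\pi }(-i-1)>0$.

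\emph{Strict inequality.} It remains to show $\bigl(1-\widehat{\pi }(-i-1)\bigr)/\widehat{\pi }(i+1)>1$, which is equivalent to $\widehat{\pi }(i+1)+\widehat{\pi }(-i-1)<1$. Set $\theta =\widehat{\Delta }(i+1)/\widehat{\Delta }(-i-1)\in (0,\infty )$. Then $\widehat{\pi }(i+1)=\phi (\theta )$ and $\widehat{\pi }(-i-1)=\phi (1/\theta )$. Using $\gamma (\theta )+\gamma (1/\theta )=1$ from \Cref{cond:bsf},
\begin{equation*}
\phi (\theta )+\phi (1/\theta )=1-\theta \gamma ^{\prime }(\theta )-\theta ^{-1}\gamma ^{\prime }(1/\theta )<1,
\end{equation*}
because $\gamma ^{\prime }>0$. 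The only delicate points are the edge case $i=0$, where $-i=0$ and we must check that the definition of $\widehat{\Delta }(0)$ matches the formula used, and the positivity of $\widehat{\pi }(i+1)$. Both are handled above, so I expect no real obstacle in this proof.
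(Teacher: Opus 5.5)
Your proof is correct and matches the paper's argument: substitute the equilibrium conditions at states $i+1$ and $-i-1$ into the numerator and denominator of $\widehat{\Delta}(i)/\widehat{\Delta}(-i)$, then rearrange. You also explicitly check that the denominators are positive and verify the underbraced $>1$ claim through $\phi(\theta)+\phi(1/\theta)<1$, both of which the paper leaves implicit.
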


\Cref{lemma:cw-reinf} shows that as the leader accumulates more consecutive
wins, he becomes stronger in the sense that his incentive to win the next
battle increases relative to that of his opponent. However, the laggard is
never completely discouraged, regardless of how many consecutive wins the
leader has accumulated, as long as final victory still requires a large
number of additional consecutive wins. Formally, let $\widehat{Q}(i;K,v)$
denote a player $i$'s equilibrium probability of ultimately winning the
prize when the state is $i$ of the $K$-consecutive-win contest. We have the
following result.

\begin{proposition}[Bounded Accumulated Advantage in Consecutive-win Contests%
]
\label{prop:cw-adv} Fix any symmetric MPE of the consecutive-win contest.
Under \Cref{cond:bsf}, it holds that $\lim_{K>|i|,K\rightarrow +\infty }%
\widehat{Q}(i;K,v)=\frac{1}{2}$ for all $i\in \mathbb{N}$.
\end{proposition}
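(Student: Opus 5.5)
The plan is to reduce the statement to a claim about the probability of an \emph{uninterrupted run}. By symmetry of the MPE, $\widehat{Q}(i)+\widehat{Q}(-i)=1$ and $\widehat{Q}(0)=\tfrac12$. Fix a state $i\ge 1$ and let $\widehat{p}(j)$ be the equilibrium probability that the player in state $j\ge 0$ wins the upcoming battle. Define
\[
r_i(K):=\prod_{j=i}^{K-1}\widehat{p}(j),
\]
the probability that the leader wins the next $K-i$ battles in a row. Any loss before that sends the leader to state $-1$. Hence
\[
\widehat{Q}(i)=r_i+(1-r_i)\widehat{Q}(-1)=r_i+(1-r_i)\bigl(1-\widehat{Q}(1)\bigr).
\]
Taking $i=1$ gives $\widehat{Q}(1)=1/(2-r_1)$. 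Substituting back, both $\widehat{Q}(i)$ and $\widehat{Q}(-i)=1-\widehat{Q}(i)$ are continuous functions of $(r_i,r_1)$ that equal $\tfrac12$ at $r_i=r_1=0$. The proposition therefore follows once we show $r_i(K)\to 0$ as $K\to+\infty$ for each fixed $i\ge 1$. Since $r_i\le \prod_{j\ge i}^{K-1}\widehat p(j)$, it suffices to bound $\widehat{p}(j)$ away from $1$ uniformly in $j$ and $K$. Then $r_i\le \bar p^{\,K-i}\to0$.

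To bound $\widehat p(j)$, use \Cref{lemma:single}. The leader's winning probability is a continuous increasing function $g$ of the incentive ratio $R(j):=\widehat\Delta(j)/\widehat\Delta(-j)$, with $g(\theta)<1$ for every finite $\theta$. Concretely, $g(\theta)=\gamma\bigl(\theta\gamma'(\theta)/\gamma'(1/\theta)\bigr)$ after substituting the equilibrium efforts. For the numerator, $\widehat\Delta(j)=\widehat V(j+1)-\widehat V(-1)\le v$. For the denominator, the monotonicity in \Cref{prop:cw-mpe} gives
\[
\widehat\Delta(-j)=\widehat V(1)-\widehat V(-j-1)\ge \widehat V(1)-\widehat V(-1)=\widehat\Delta(0).
\]
Hence $R(j)\le v/\widehat\Delta(0)$. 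It then remains to show that $\widehat\Delta(0)\ge\delta v$ for some $\delta>0$ independent of $K$. In that case $\widehat p(j)\le g(1/\delta)=:\bar p<1$ and we are done.

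The main obstacle is this uniform lower bound on $\widehat\Delta(0)$ as $K$ grows; the value levels themselves may all drift. I would argue by contradiction, along a subsequence with $\widehat\Delta(0)/v\to0$. A small stake in the symmetric state $0$ means both players exert little effort at state $0$. By \Cref{lemma:cw-reinf}, however, $R(j)$ is increasing in $j$, so for large $j$ the leader is highly favored. We know $\widehat V(K)-\widehat V(-1)\approx v$ because $\widehat V(-1)\le\widehat V(0)\le v/2$. Telescoping the equilibrium conditions \eqref{consective_win_valuation} along states $0,1,\dots,K$ shows that a vanishing $\widehat\Delta(0)=\widehat V(1)-\widehat V(-1)$ must be compensated by the gain factors $\widehat\pi(j)$ pushing the values up at higher $j$. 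I would then compare $\widehat V(1)$ with what the player at state $1$ obtains by exerting, in every subsequent battle, the efforts used at state $0$. A small $\widehat\Delta(0)$ would force the state-$1$ player to extract a non-negligible share of $v$ cheaply, contradicting the symmetry bound $\widehat V(1)-\widehat V(-1)\ge \bigl(\widehat{Q}(1)-\widehat{Q}(-1)\bigr)v-\text{(effort difference)}$.

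If this direct deviation argument proves delicate, the fallback is a weaker dichotomy. Either $\widehat\Delta(0)\ge\delta v$, which gives geometric decay of $r_i$ as above. Or $\widehat\Delta(0)$ is small, in which case I would show directly that $r_1$ is small through $\widehat{Q}(1)-\widehat{Q}(-1)=r_1/(2-r_1)$. The first step is to translate the bound $\widehat\Delta(0)\ll v$ into $\widehat{Q}(1)-\widehat{Q}(-1)\to0$, using that expected continuation efforts from states $\pm1$ can differ by at most $\widehat\Delta(0)$ up to bounded terms. The proposition then follows in both cases.
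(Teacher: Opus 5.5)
Your reduction to $r_i(K)=\prod_{j=i}^{K-1}\widehat p(j)\to 0$ is correct, and it is exactly how the paper finishes. The gap is in how you propose to obtain $r_i\to0$: the uniform bound $\widehat\Delta(0)\ge\delta v$ is false. Indeed $\widehat\Delta(0)=\widehat V(1)-\widehat V(-1)\le \widehat V(1)$. Moreover $\widehat V(1)/v\to0$, because $R(K-1)=(v-\widehat V(-1))/\widehat V(1)$ and, by \Cref{lemma:cw-reinf} and part (iii) of \Cref{lemma:gainratio}, $R(j)>R''$ for all $j\ge K_{R''}$ uniformly in $K$. Equivalently, $\widehat\Delta(0)\ge\delta v$ would give $\widehat V(0)\ge \widehat V(-1)+\phi(1)\widehat\Delta(0)\ge\phi(1)\delta v$, contradicting \Cref{thm:cw-rent}. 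So the subsequence with $\widehat\Delta(0)/v\to0$ that you try to rule out is what always happens, and no contradiction can be derived.

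The intermediate target fails too. Under \Cref{cond:bsf}, equilibrium efforts are in the ratio of the stakes, so $\widehat p(j)=\gamma(R(j))\ge\gamma(R'')$ for all $j\ge K_{R''}$. That is, $\widehat p(j)\to1$ along the streak, uniformly in $K$. There is no $\bar p<1$ and no geometric decay; $\sum_j[1-\widehat p(j)]$ can diverge only through a slowly decaying tail. Your fallback is therefore the entire proof, and the mechanism you offer for it is not an argument. You claim the effort differences from states $\pm1$ are bounded by $\widehat\Delta(0)$ ``up to bounded terms.'' But once $\widehat\Delta(0)/v\to0$, saying that the normalized difference in expected future efforts vanishes is equivalent to $\widehat Q(1)-\widehat Q(-1)\to0$, i.e., to the conclusion itself.

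The missing idea is to exploit the vanishing stake instead of excluding it. Since $\widehat V(i)=\widehat V(-1)+\widehat\pi(i)\widehat\Delta(i)$ for $i\ge0$, telescoping along the streak gives
\[
\prod_{i=1}^{K-1}\frac{1}{\widehat\pi(i)}=\frac{v-\widehat V(-1)}{\widehat V(1)-\widehat V(-1)}\longrightarrow+\infty ,
\]
with every factor in $(1,1/\pi_1^*]$; hence $\sum_{i=1}^{K-1}[1-\widehat\pi(i)]\to+\infty$. Next, $1-\widehat\pi(i)=\widehat p(-i)+\widehat x(i)/\widehat\Delta(i)$. Part (ii) of \Cref{prop:gensf}, together with the participation bound $\widehat x(-i)\le \widehat p(-i)\widehat\Delta(-i)$, yields $\widehat x(i)/\widehat\Delta(i)\le C\,\widehat x(-i)/\widehat\Delta(-i)\le C\,\widehat p(-i)$. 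Thus $(1+C)\sum_{i}\widehat p(-i)\to+\infty$, so $r_1=\prod_i[1-\widehat p(-i)]\le\exp\bigl(-\sum_i\widehat p(-i)\bigr)\to0$. Finally, $r_i\le r_1/(\pi_1^*)^{i-1}\to0$ for each fixed $i$, because $\widehat p(j)\ge\widehat\pi(j)\ge\pi_1^*$ for $j\ge0$. This is the paper's argument: the leader's effort-to-stake ratio along the streak is dominated by the laggard's winning probability, and the exploding stake ratio forces these small reversal probabilities to accumulate.
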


\Cref{prop:cw-adv} implies that a lead in the consecutive-win contest is 
\emph{insecure}, in sharp contrast to \Cref{prop:tow-adv}. Intuitively, when
the laggard wins a battle in a tug-of-war contest, the frontrunner's lead is
merely reduced, whereas in a $K$-consecutive-win contest, a single loss
entirely erases the lead and more than fully rebalances the playing field.
This feature largely mutes the discouragement effect and leads to the
following result.

\begin{theorem}[Full Rent Dissipation in the Limit of Consecutive-win Contests%
]
\label{thm:cw-rent} Under \Cref{cond:bsf}, for all $\epsilon >0$, there
exists $K^{\dagger }$ such that $\widehat{V}(0;K,v)<\epsilon v$ in the
equilibrium of the $K$-consecutive-win contest for $K>K^{\dagger }$, which
implies that the expected total effort is greater than $(1-2\epsilon )v$.
\end{theorem}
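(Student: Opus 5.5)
The plan is to show that the continuation value at state $1$ vanishes, $\widehat V(1)\to 0$ as $K\to+\infty$. Monotonicity from \Cref{prop:cw-mpe} then gives $\widehat V(0)<\widehat V(1)$, so $\widehat V(0)<\epsilon v$ for large $K$. Because the contest is symmetric, total expected effort equals $v-2\widehat V(0)$, which yields the second claim.

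The first step rewrites the equilibrium conditions \eqref{consective_win_valuation} as two telescoping products.

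\emph{Leader side.} For $1\le j\le K-1$ the condition reads $\widehat V(j)-\widehat V(-1)=\widehat\pi(j)\,[\widehat V(j+1)-\widehat V(-1)]$. Iterating from $j=1$ to $K-1$ and using $\widehat V(K)=v$ gives
\begin{equation*}
\widehat V(1)-\widehat V(-1)=\Big[\prod_{j=1}^{K-1}\widehat\pi(j)\Big]\big(v-\widehat V(-1)\big).
\end{equation*}

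\emph{Laggard side.} For $1\le i\le K-1$ the condition $\widehat V(-i)=\widehat V(-i-1)+\widehat\pi(-i)[\widehat V(1)-\widehat V(-i-1)]$ rearranges to $\widehat V(1)-\widehat V(-i)=(1-\widehat\pi(-i))[\widehat V(1)-\widehat V(-i-1)]$. Iterating with $\widehat V(-K)=0$ gives
\begin{equation*}
\widehat V(1)-\widehat V(-1)=\Big[\prod_{i=1}^{K-1}(1-\widehat\pi(-i))\Big]\widehat V(1).
\end{equation*}

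Equating the two expressions yields
\begin{equation*}
\widehat V(1)=\prod_{j=1}^{K-1}\frac{\widehat\pi(j)}{1-\widehat\pi(-j)}\,\big(v-\widehat V(-1)\big)\le v\prod_{j=1}^{K-1}\frac{\widehat\pi(j)}{1-\widehat\pi(-j)}.
\end{equation*}
Iterating \Cref{lemma:cw-reinf} from $i=0$, where the ratio equals $1$, the product on the right is exactly $R_K^{-1}$, with $R_K:=\widehat\Delta(K-1)/\widehat\Delta(-(K-1))$. Hence $\widehat V(1)\le v/R_K$.

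The remaining task is to show that $v/R_K\to0$, and I expect this to be the main obstacle. The reinforcement lemma only gives $R_K>1$; it gives no quantitative growth. I would argue by a dichotomy along an arbitrary subsequence.

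\emph{Case $R_K\to\infty$.} Then $\widehat V(1)\le v/R_K\to0$ directly.

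\emph{Case $R_K\le M$ along the subsequence.} By \Cref{lemma:cw-reinf}, the ratios $\theta_j:=\widehat\Delta(j)/\widehat\Delta(-j)$ are increasing in $j$, so $1\le\theta_j\le M$ for all $j\le K-1$. Assumption~\ref{cond:bsf} gives $\phi'(\theta)=-\theta\gamma''(\theta)\ge0$, so $\phi$ is nondecreasing. Therefore $\widehat\pi(j)=\phi(\theta_j)\le\phi(M)$ and $1-\widehat\pi(-j)=1-\phi(1/\theta_j)\ge1-\phi(1/M)$. This gives
\begin{equation*}
\widehat V(1)\le v\Big[\frac{\phi(M)}{1-\phi(1/M)}\Big]^{K-1}.
\end{equation*}
By \Cref{lemma:single}, $\phi(\theta)+\phi(1/\theta)<1$ for every finite $\theta>0$, since the two players' normalized payoffs plus strictly positive normalized equilibrium efforts add up to one. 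The bracket is therefore a constant strictly below $1$, and $\widehat V(1)\to0$ geometrically in this case too.

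Since every subsequence has a further subsequence along which $\widehat V(1)\to0$, the full sequence satisfies $\widehat V(1)\to0$. The delicate point is the strict inequality $\phi(M)+\phi(1/M)<1$; this needs care at $\theta_j=1$, where it reduces to $2\phi(1)=1-2\gamma'(1)<1$. If a uniform-in-$K$ bound fails to close, a fallback is to use \Cref{prop:cw-adv}. Since $\phi\le\gamma$, the product $\prod_j\widehat\pi(j)$ is at most the probability $r_K$ that the leader completes the streak from state $1$ without a loss. This satisfies $r_K=(\widehat Q(1)-\widehat Q(-1))/(1-\widehat Q(-1))\to0$, so at least $\widehat\Delta(0)\to0$.
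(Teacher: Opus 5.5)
Your overall route is the paper's. The paper reads the identity $R_K=\widehat\Delta(K-1)/\widehat\Delta(1-K)=(v-\widehat V(-1))/\widehat V(1)$ directly off the definitions $\widehat\Delta(K-1)=v-\widehat V(-1)$ and $\widehat\Delta(1-K)=\widehat V(1)$; your two telescoping products are a correct but roundabout rederivation of it. It then shows that this ratio blows up.

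The gap is in the quantitative step, your Case 2. The inequality $1-\widehat\pi(-j)=1-\phi(1/\theta_j)\ge 1-\phi(1/M)$ is backwards. Since $\theta_j\le M$ gives $1/\theta_j\ge 1/M$ and $\phi$ is nondecreasing, in fact $1-\phi(1/\theta_j)\le 1-\phi(1/M)$. Bounding numerator and denominator separately over $\theta_j\in[1,M]$ only yields
\[
\frac{\widehat\pi(j)}{1-\widehat\pi(-j)}\le\frac{\phi(M)}{1-\phi(1)},
\]
which exceeds $1$ as soon as $\phi(M)+\phi(1)>1$. For Tullock with $r=1$, $\phi(\theta)=\theta^2/(1+\theta)^2$ and $\phi(1)=1/4$, so this already happens at $M=7$. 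Since $M$ is whatever bound your subsequence happens to have, your displayed geometric estimate is unproven. The endpoint strictness $\phi(M)+\phi(1/M)<1$ that you single out as the delicate point is not what is needed. Your displayed bound would follow if $\theta\mapsto\phi(\theta)/(1-\phi(1/\theta))$ were increasing; this is true for Tullock $r=1$, where it equals $\theta/(\theta+2)$, but nothing you invoke establishes it under \Cref{cond:bsf}.

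The missing ingredient is a bound on the \emph{joint} quantity that is uniform over $[1,M]$. This is exactly \Cref{lemma:gainratio}\ref{pi-high}: $\phi(\theta)+\phi(1/\theta)\le 1-d_M$ for $\theta\in[1,M]$. Hence
\[
\frac{\phi(\theta_j)}{1-\phi(1/\theta_j)}\le 1-\frac{d_M}{1-\phi(1/\theta_j)}\le 1-d_M,
\]
and so $\widehat V(1)\le v(1-d_M)^{K-1}$. Continuity of this ratio on the compact interval $[1,M]$, together with your pointwise strict inequality, would also suffice.

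This is the paper's key step, used in the equivalent form $\theta_{i+1}\ge(1+d_{R''})\theta_i$ while the ratio stays below $R''$. It gives $\theta_i>R''$ for $i>\log_{1+d_{R''}}R''$, uniformly in $K$ and $v$, so $R_K=\theta_{K-1}\to+\infty$ directly and no subsequence dichotomy is needed. Indeed, once repaired your Case 2 is self-contradictory, since $\widehat V(1)\to0$ forces $R_K\ge(v-\widehat V(1))/\widehat V(1)\to+\infty$.

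Finally, drop the fallback. The paper's proof of \Cref{prop:cw-adv} relies on the conclusion of this theorem ($\widehat V(\pm1)/v\to0$), so invoking it here is circular. In any case it only yields $\widehat\Delta(0)\to0$, not $\widehat V(1)\to0$.
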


The theorem formally establishes that, in the absence of exchangeability,
the $K$-consecutive-win contest fully dissipates the rent in the limit.

\section{A Necessary-and-Sufficient Condition for Almost-Full Rent
Dissipation}
\label{sec:full-rent-condition}

In this section, we first provide a necessary-and-sufficient condition for
almost-full rent dissipation in dynamic multi-battle contests. We then
introduce an \emph{iterated incumbency contest} and demonstrate that it
satisfies this condition and fully dissipates rent in the limit. This
contest provides an intuitive framework for studying a wide range of
real-world competitive phenomena, spanning technological competition and
biological evolutionary processes.

\subsection{Transient Dominance Property and Almost-Full Rent Dissipation}

\Cref{sec:exchangeable} unveils the critical role played by exchangeability in
catalyzing the discouragement effect. The long memory it induces in the
contest allows a player to leverage his early success to accumulate a
persistent advantage, which undermines the fluidity and contestability of
the competition and thereby discourages effort provision. In contrast, the
consecutive-win contest achieves almost-full rent dissipation. It rules out
exchangeability and renders the impacts of winning a battle vis--\`{a}-vis
losing one asymmetric: advantage must accumulate gradually, but leadership
can be lost abruptly. This feature keeps the laggard hopeful and
incentivized, thereby nullifying the discouragement effect.

We now formalize such insights and first propose the following.

\begin{definition}
\label{def:si} \normalfont Consider a contest $\mathcal{M}$ and its
equilibrium. Fix any small $\epsilon>0$. We say that the equilibrium has the 
\emph{transient dominance property} if the following conditions are
satisfied:

\begin{enumerate}
\item[(i)] There exist two subsets of the nonterminal histories, $H_{A}^{-}$
and $H_{B}^{-}$, such that at any history $h\in H_{\ell }^{-}$, the
continuation value $V_{\ell }(h)\leq \epsilon v$.

\item[(ii)] In equilibrium, the probability that the realized outcome
history reaches both $H_{A}^-$ and $H_{B}^-$ is at least $1-\epsilon$.
\end{enumerate}
\end{definition}

Condition (i) identifies, for each player $\ell $, a set $H_{\ell }^{-}$ of
nonterminal histories at which the player $\ell $ is in a \textquotedblleft
weak\textquotedblright\ position: The small continuation value---bounded by $%
\epsilon v$---limits the player's incentive to exert costly effort, implying
that his opponent is likely to be in a dominant position in the contest.%
\footnote{%
It is possible that the opponent also has a low continuation value. However,
this implies that the contest is already close to full rent dissipation. So
we primarily interpret such a scenario as an outcome of players' ex post
asymmetry due to a temporary performance gap.} Condition (ii) nevertheless
requires that, along the equilibrium path, the contest can reach \emph{both} 
$H_{A}^{-}$ and $H_{B}^{-}$ with sufficiently large probabilities. The
outcome path must visit at least one of $H_{A}^{-}$ or $H_{B}^{-}$ before
the contest ends. Condition (ii) implies that, conditional on the outcome
path arriving at one of $H_{A}^{-}$ or $H_{B}^{-}$ for the first time, there
is a high chance of a shift in dominance down the road, i.e., a transition
from $H_{\ell }^{-}$ to $H_{-\ell }^{-}$ along the subsequent outcome path.
Taken together, the definition implies that a player's lead is temporary and
will almost certainly be reversed. Our analysis yields the following.

\begin{theorem}[Transient Dominance Property as a Necessary and Sufficient
Condition for Almost-Full Rent Dissipation]
\label{thm:sufficient} If an equilibrium of a contest $\mathcal{M}$ with
prize $v$ exhibits the transient dominance property, the expected
equilibrium total effort is greater than $(1-4\epsilon )v$. Conversely, if
the expected total effort in an equilibrium of contest $\mathcal{M}$ with
prize $v$ exceeds $(1-\epsilon )v$, then the equilibrium has the transient
dominance property.
\end{theorem}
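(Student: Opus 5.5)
The plan is to write total expected effort as $v-V_A^0-V_B^0$ (for infinite histories we would use the convention that the prize is split, so the same accounting holds). It therefore suffices to show $V_A^0+V_B^0\le 4\epsilon v$ under transient dominance; the converse will follow from the fact that every continuation value is at least zero.

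\textbf{Sufficiency.} Consider, on each realized path, the first time $\tau_A$ at which the history enters $H_A^-$, and define $\tau_B$ analogously for $H_B^-$. Both are stopping times for the outcome process. Fix player $A$. Playing the equilibrium strategy, $A$'s payoff from the start equals the expected value of (prize collected minus effort spent). We split paths according to whether $\tau_A<\infty$.

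On $\{\tau_A<\infty\}$, A's realized payoff consists of the effort sunk before $\tau_A$ (a nonpositive term) plus a term whose conditional expectation given the history at $\tau_A$ is $V_A(h_{\tau_A})\le\epsilon v$. On $\{\tau_A=\infty\}$, A's payoff is at most $v$. Condition (ii) gives $\Pr(\tau_A=\infty)\le \epsilon$. Hence
\begin{equation*}
V_A^0\le \epsilon v\cdot \Pr(\tau_A<\infty)+v\,\Pr(\tau_A=\infty)\le 2\epsilon v,
\end{equation*}
and symmetrically $V_B^0\le 2\epsilon v$. Total effort is therefore at least $(1-4\epsilon)v$.

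The main technical point is justifying the optional-stopping decomposition $V_A^0=\mathbb{E}[-\text{effort before }\tau_A+V_A(h_{\tau_A})]$ when histories may be infinite. I would argue directly through the equilibrium definition of continuation values: $V_A(h)$ is the conditional expected remaining payoff at $h$, and the law of iterated expectations applies because all payoffs are bounded above by $v$ and efforts are nonnegative. If integrability of total effort is a concern, I would note that equilibrium payoffs are finite and nonnegative, since a player can guarantee $0$ by exerting zero effort. This makes expected effort finite.

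\textbf{Necessity.} Suppose total effort exceeds $(1-\epsilon)v$. Then $V_A^0+V_B^0<\epsilon v$. Since a player can always exert zero effort, both values are nonnegative, so each is below $\epsilon v$. Take $H_A^-=H_B^-=\{\emptyset\}$, the initial (nonterminal) history. Condition (i) holds because $V_\ell(\emptyset)=V_\ell^0\le\epsilon v$. Condition (ii) holds with probability one, since every realized path starts at the initial history and so reaches both sets.

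If the definition is intended to exclude this degenerate choice, I would instead take $H_\ell^-=\{h: V_\ell(h)\le \epsilon v\}$. The initial history lies in both sets, so the same argument applies. This direction is essentially immediate; the only substantive obstacle in the proof is the stopping-time accounting in the sufficiency direction.
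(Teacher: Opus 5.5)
Your proposal is correct and follows essentially the same route as the paper. For sufficiency, the paper also splits $V_A^*$ according to the first prefix of each terminal history that lands in the set where $A$ is weak. It bounds the contribution of paths that never reach that set by $\epsilon v$ and the rest by $\epsilon v$, and for the converse it takes $H_A^-=H_B^-=\{h^0\}$. Your stopping-time formulation, and your explicit remark that continuation values are nonnegative (needed to pass from $V_A^0+V_B^0<\epsilon v$ to each $V_\ell^0<\epsilon v$), make precise points the paper leaves implicit.
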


\Cref{thm:sufficient} shows that the transient dominance property both
implies and is necessary for almost-full rent dissipation. A contest almost
fully dissipates its rent---that is, its expected equilibrium total effort
approaches $v$---if and only if its equilibrium satisfies the transient
dominance property.

Contests with the transient dominance property are not rare. An immediate
example arises from a simple modification of a standard tug-of-war contest.
Suppose that, after each battle, with probability $p\in (0,1)$, the contest
resets to its initial state, provided that no final winner has yet been
determined. That is, following a reset, each player must again achieve a
lead of $N$ wins over the opponent to secure final victory before another
reset occurs. In this environment, exchangeability fails, and any lead is
inherently fragile. We have the following result.

\begin{theorem}[Tug-of-war with Random Resets]
\label{thm:towp} There exists a unique symmetric MPE in a tug-of-war contest
with margin $N\geq 2$ and a probability of reset $p\in \lbrack 0,1)$. For
any small $\epsilon >0$, there exists $N^{\dagger }$ such that the
tug-of-war with margin $N\geq N^{\dagger }$ and reset probability $p\in
(0,1) $ satisfies the transient dominance property.
\end{theorem}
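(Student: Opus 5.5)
We split the argument into two parts: existence and uniqueness of a symmetric MPE, and the transient dominance property for large $N$. The state is the lead $i\in\{-N,\dots,N\}$. Write $W(i)$ for the value after the battle outcome but before the reset draw:
\[
W(i)=(1-p)V(i)+pV(0)\quad\text{for }|i|<N,\qquad W(N)=v,\qquad W(-N)=0 .
\]
With $\Delta(i)=W(i+1)-W(i-1)$ and $\pi(i)=\phi(\Delta(i)/\Delta(-i))$, the equilibrium conditions are
\[
V(i)=W(i-1)+\pi(i)\Delta(i),\qquad |i|\le N-1 .
\]

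\textbf{Existence and uniqueness.} We mimic the proof of \Cref{prop:tow-mpe}. Treat $V(0)=c$ as a parameter. For fixed $c$, the system is a tug-of-war with modified (discounted-plus-constant) terminal structure. We show, as in \Cref{prop:tow-mpe}, that it has a unique monotone solution $V_c(\cdot)$, depending continuously on $c$. The shooting argument there carries over because the map $V\mapsto W$ is affine and increasing with slope $1-p>0$. We then solve the scalar fixed-point equation $V_c(0)=c$ on $[0,v/2]$, using symmetry $V(i)+V(-i)\le v$. Existence follows from the intermediate value theorem. For uniqueness, we show that $c\mapsto V_c(0)-c$ is strictly decreasing. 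This holds because $V_c(0)$ depends on $c$ with slope at most $p<1$: the reset only transmits $c$ with weight $p$ per step, and the gain-function mapping is a contraction in this direction. The case $p=0$ is \Cref{prop:tow-mpe} itself.

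\textbf{Transient dominance.} Take $H_A^-$ to be the histories with state $i\le -M$ and $H_B^-$ those with $i\ge M$, where $M$ is chosen below with $M<N$. The proof has three steps.

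First, we need $V(-M)\le\epsilon v$. Starting from $-M$, player $A$ must win at least $M$ net battles before a reset. Each battle resets with probability $p$, so the probability of $M$ consecutive non-reset steps is at most $(1-p)^M$. Bounding $V$ by prize times winning probability plus the reset continuation gives
\[
V(-M)\le (1-p)^M v+\bigl(1-(1-p)^M\bigr)V(0)\text{-type terms}.
\]
So it suffices to show $V(0)$ is small. For that, observe that from state $0$ the chance of reaching $\pm N$ before a reset is at most $(1-p)^N\to 0$. Hence the winner is determined only after arbitrarily many reset cycles. Each cycle is a symmetric fight in which effort is expended, and by symmetry each player's share of surplus per cycle is at most the single-battle share. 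Quantitatively, we should get $V(0)\le K(1-p)^{N}v$ for some constant $K$, or at least $V(0)\to 0$ as $N\to\infty$.

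An alternative route to $V(0)\to 0$ is to adapt \Cref{thm:cw-rent}. Losing the reset lottery erases all leads, much as a loss erases a streak in the consecutive-win contest. We would show that $Q(i)\to 1/2$ uniformly for fixed $i$ as $N\to\infty$, analogous to \Cref{prop:cw-adv}. The value at $0$ then equals $v/2$ minus expected future effort, and that effort can be made to exhaust the rent by a telescoping argument.

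Second, with $V(0)\le\epsilon v/2$ in hand, choose $M$ with $(1-p)^M\le\epsilon/2$. Monotonicity of $V$ then gives condition (i) for all $i\le -M$, and symmetrically for $H_B^-$.

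Third, condition (ii) requires that, with probability at least $1-\epsilon$, the path visits both $i\le -M$ and $i\ge M$. The contest ends only at $\pm N$. To reach $+N$, the path must pass through $M$; that alone does not force a visit to $-M$. However, after resets the process restarts at $0$, and the number of reset cycles before termination tends to infinity in probability as $N\to\infty$. Each cycle independently reaches $-M$ with probability bounded below by a constant $q>0$ that depends only on $p$, $M$, and bounds on equilibrium win probabilities near the center (these stay away from $0$ and $1$ for $|i|<M$). Hence both sets are visited with probability at least $1-2(1-q)^{R}$, where $R$ is the number of cycles, and this is at most $\epsilon$ away from one for large $N$.

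\textbf{Main obstacle.} The hardest step is showing $V(0;N)\to 0$. Equivalently, we must show that the equilibrium effort does not collapse as $N$ grows: players remain motivated despite the remote prize. The plan is to show that $\Delta(i)$ remains comparable to $V$-differences via the relation in \Cref{lemma:tow-reinf}, then bound $V(0)$ through a cycle-renewal identity as sketched above.
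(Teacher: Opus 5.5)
Your proposal has two genuine gaps, and the second is the heart of the theorem.

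\textbf{Uniqueness.} There is no ``shooting argument'' in \Cref{prop:tow-mpe} to carry over. Its existence part is a Brouwer argument, and its uniqueness part is obtained in the paper as the $p=0$ case of the present theorem, so leaning on it is circular. As a result, your fixed-$c$ subproblem (which is not a standard tug-of-war) has no proof of unique solvability or of continuity in $c$, and your intermediate-value step needs both.

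Your contraction claim is also false as stated. Take $N=2$, Tullock $r=1$, $v=1$. The fixed-$c$ system gives $V(1)=c+(1-c)^3$ and $V(-1)=c^3$. Hence $V_c(0)=pc+(1-p)h(c)$ with $h(c)=\tfrac34c^3+\tfrac14c+\tfrac14(1-c)^3$. Since $h'(1/2)=5/8$, the slope at $c=v/2$ is $p+\tfrac58(1-p)>p$.

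The paper avoids any global fixed-point-in-$c$ argument:
\begin{itemize}
\item It normalizes $\widetilde\Delta(i)=(\widetilde V(i)-\widetilde V(0))/(\widetilde V(1)-\widetilde V(-1))$. This removes both the scale and the unknown $V(0)$; the reset becomes a factor $1-p$ in the recursion.
\item The center condition gives $\widetilde\Delta(\pm1)$ explicitly.
\item Each subsequent $\theta_k$ is pinned down uniquely because $\psi(\theta)=\theta\phi(\theta)/[1-\phi(1/\theta)]$ is strictly increasing (\Cref{lemma:psi}, which uses concavity of $\gamma$).
\item The boundary values then fix the affine rescaling.
\end{itemize}

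\textbf{Transient dominance.} You reduce condition (i) to $V(0;N)\to0$. That is essentially the almost-full-dissipation conclusion itself: if you had it, $H_A^-=H_B^-=\{h^0\}$ would already satisfy \Cref{def:si}, and your Steps 2--3 would be superfluous. You give no proof of it. The rate $K(1-p)^Nv$ is unsupported, and a per-cycle bound on surplus shares does not force $V(0)\to0$, since effort per cycle may be tiny. Your Step 1 logic is also inverted: a reset helps the laggard by moving him to $0$, and $V(-M)\le V(0)$ is just monotonicity.

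The missing idea is an $N$-uniform bound on the laggard's value at a fixed deficit $k$. The paper gets it from $R_k=-\widetilde\Delta(k)/\widetilde\Delta(-k)$:
\begin{itemize}
\item The margin-$N$ normalized sequence is the truncation of a single infinite sequence, so $R_k$ does not depend on $N$.
\item Using monotonicity of $R_k$, one gets $\widetilde V(-k)\le 1/R_k$ for every $N>k$.
\item $R_k$ is strictly increasing and diverges.
\end{itemize}
Divergence is exactly where $p>0$ enters. If $R_k\le M+1$ eventually, then $\theta_k\le\psi^{-1}((M+1)/p)$. This keeps the weight $r_k\ge p\,\phi(1/\theta_k)$ in the convex-combination representation of $R_{k+1}$ bounded away from zero, while the other component has $\liminf>M$, a contradiction. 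For $p=0$ this step fails, consistent with \Cref{thm:tow-rent}.

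Your many-cycles argument for condition (ii) is workable in principle. However, it needs battle-win probabilities at $|i|<M$ to be bounded away from $0$ and $1$ uniformly in $N$, which again rests on this $N$-independence. The paper's symmetry-after-reset argument, giving the bound $(1-p/2)^{m-1}$, sidesteps that requirement.
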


This tug-of-war with random resets exhibits the transient dominance property
when the required margin $N$ is sufficiently large. As predicted by %
\Cref{thm:sufficient}, rent is then almost fully dissipated in equilibrium.
In what follows, we present an alternative contest game that also exhibits
this property and discuss its implications.

\subsection{Application: Dynamic Competition in an Uncertain Environment}

We now examine an intuitive dynamic contest that illustrates how %
\Cref{def:si} applies and leads to full rent dissipation. Economic agents
often engage in protracted competition in uncertain environments. Consider,
for instance, two firms striving for dominance in an evolving market driven
by an emerging technology; technological progress and shifts in consumer
tastes continuously reshape the competitive landscape. As a result, a firm's
temporary market dominance may not suffice to eliminate its opponent or to
establish permanent market leadership. When the underlying fundamentals
change, the laggard may regain the opportunity to overtake the frontrunner.
We incorporate these features and construct a model of dynamic competition
with uncertainty, which we call an \emph{iterated incumbency contest} and
describe in detail below.

The dynamic competition unfolds over multiple \emph{rounds}, indexed by $%
n=0,1,\ldots ,N$. In round 0, a fair coin toss determines a temporary leader
(i.e., the \emph{incumbent}) and a laggard (i.e., the challenger). In each
round $n\geq 1$, an exogenous shock determines whether the fundamentals of
the contest have shifted and renewed competition emerges: With probability $%
q\in (0,1]$, the fundamentals change, and the firms must compete for the
incumbency; with the complementary probability $1-q$, the status quo is
preserved, and the incumbent retains his status without a battle. The firm
which exits round $N$ as the incumbent secures the final victory and
receives the prize $v$.

The competition for incumbency, once triggered, proceeds as a subcontest
with multiple successive battles, which we denote by $\mathcal{M}^{\text{sub}%
}$. To reflect the fact that an incumbent typically holds an advantageous
position in competition, the subcontest is assumed to be biased in favor of
the incumbent. This bias can take various forms. One example is that the
incumbent can win the subcontest by winning a single battle; while the
challenger must win $K>1$ battles before the incumbent wins one. We refer to
such a subcontest as a $\mathcal{M}(K,1)$ contest. Alternatively, $\mathcal{M%
}^{\text{sub}}$ can take the form of a tug-of-war contest with margin $K+1$
and initial state $K\geq 1$, whereby the incumbent enjoys a head start of $K$%
.

We focus on subcontests that are \emph{sufficiently biased} in favor of the
incumbent. Consider a subcontest $\mathcal{M}^{\text{sub}}$, and imagine an
auxiliary \emph{standalone} contest with the same structure and winning rule
as $\mathcal{M}^{\text{sub}}$, but with a unit prize. Let $V_{+}^{\text{sub}%
} $ and $V_{-}^{\text{sub}}$ denote the respective equilibrium payoffs of
the incumbent and the challenger in the auxiliary contest. For any fixed
small $\epsilon >0$, a (sub)contest $\mathcal{M}$ is said to be \emph{%
sufficiently biased} if%
\begin{equation*}
\frac{1-V_{+}^{\text{sub}}}{V_{-}^{\text{sub}}}>\frac{1}{\epsilon }.
\end{equation*}%
The condition requires infinitesimal $V_{-}^{\text{sub}}$, which implies his
unfavorable position in the competition.\footnote{%
This condition can also be satisfied if the subcontest or the auxiliary
contest is unbiased, but both $V_{+}^{\text{sub}}$ and $V_{-}^{\text{sub}}$
are infinitely small. In this case, the analysis is trivial because this
auxiliary contest itself would fully dissipate rent in equilibrium.} The two
intuitive contest formats mentioned above, $\mathcal{M}(K,1)$ contest and
tug-of-war contest with margin $K+1$ and initial state $K$, both satisfy the
requirement in the limit, which is formally verified below.

\begin{lemma}
\label{lemma:ratio-instances} Let the subcontest take the form of either an $%
\mathcal{M}(K,1)$ contest or a tug-of-war contest with margin $K+1$ and
initial state $K$. Then $\frac{1-V_{+}^{\text{sub}}}{V_{-}^{\text{sub}}}%
\rightarrow +\infty $ as $K\rightarrow +\infty $.
\end{lemma}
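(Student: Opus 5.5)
The plan is to reduce both cases to a one-dimensional recursion for the ratio of winning values, and to show that this ratio is multiplied by a factor bounded away from $1$ at every step. Let
\[
g(R):=\frac{1-\phi(R)}{\phi(1/R)},\qquad R>0 .
\]

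\textbf{Step 1: properties of $g$.} First I will check that $g(R)>1$ for every $R\in(0,+\infty)$.
\begin{itemize}
\item By \Cref{lemma:single}, in a single battle with values $\Delta_A=R$ and $\Delta_B=1$, total equilibrium effort equals $R+1-\Delta_A\phi(R)-\Delta_B\phi(1/R)$.
\item That effort is strictly positive, and $\Delta_A+\Delta_B=R+1$, so the weighted average satisfies $\frac{R}{R+1}\phi(R)+\frac{1}{R+1}\phi(1/R)<1$.
\item What is needed is the unweighted statement $\phi(R)+\phi(1/R)<1$. I will derive it directly from $\gamma(\theta)+\gamma(1/\theta)=1$ together with $\theta\gamma'(\theta)=\theta^{-1}\gamma'(1/\theta)>0$. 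These give $\phi(R)+\phi(1/R)=1-2R\gamma'(R)<1$.
\end{itemize}
Since $g$ is continuous, it follows that $c_M:=\min_{R\in[1,M]}g(R)>1$ for every $M<\infty$.

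\textbf{Step 2: the $\mathcal{M}(K,1)$ contest.} With a unit prize, let $j$ be the number of battles the challenger has won so far, with challenger value $W_j$ and incumbent value $U_j$. The boundary values are $W_K=1$ and $U_K=0$.
\begin{itemize}
\item At state $j$, the challenger's winning value is $W_{j+1}$, because a loss gives him $0$. The incumbent's winning value is $1-U_{j+1}$.
\item \Cref{lemma:single} then gives $W_j=\phi(1/R_j)\,W_{j+1}$ and $1-U_j=(1-\phi(R_j))(1-U_{j+1})$, where $R_j:=(1-U_{j+1})/W_{j+1}$.
\item Dividing, $(1-U_j)/W_j=g(R_j)R_j$, that is, $R_{j-1}=g(R_j)R_j$, starting from $R_{K-1}=1$.
\item Hence $(1-V_+^{\text{sub}})/V_-^{\text{sub}}=R_{-1}$ is obtained from $1$ by $K$ iterations of $R\mapsto g(R)R$. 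Each iteration multiplies $R$ by at least $c_M$ as long as $R\le M$, and since $g>1$ the sequence never drops below $1$.
\item So after $\lceil \log M/\log c_M\rceil$ steps the ratio exceeds $M$, and it stays above $M$ afterwards. Since $M$ is arbitrary, the ratio tends to $+\infty$.
\end{itemize}

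\textbf{Step 3: the tug-of-war with margin $N=K+1$ started at state $K$.} Here I will use the symmetric MPE from \Cref{prop:tow-mpe}.
\begin{itemize}
\item From \eqref{eqn:mpe} at $i=K$ and $i=-K$, I get $1-V(K)=(1-\pi(K))\Delta(K)$ and $V(-K)=\pi(-K)\Delta(-K)$, where $\Delta(K)=1-V(K-1)$ and $\Delta(-K)=V(-K+1)$.
\item Therefore the target ratio equals $g(\rho_K)\rho_K$, with $\rho_i:=\Delta(i)/\Delta(-i)$.
\item By \Cref{lemma:tow-reinf}, $\rho_{i+1}=g(\rho_{i+1})\,g(\rho_i)\,\rho_i\ge g(\rho_i)\rho_i$ for $0\le i\le K-1$, starting from $\rho_0=1$ by symmetry. (At $i=K-1$ the lemma's range stops at $N-2$, so I will verify this last step directly from \eqref{eqn:mpe}.)
\item The same counting argument as in Step 2 shows $\rho_K\ge M$ once $K\ge\lceil\log M/\log c_M\rceil$, uniformly in the equilibrium. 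This uniformity matters because the MPE changes with $N=K+1$.
\end{itemize}

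\textbf{Main obstacle.} I expect the main difficulty to be the uniformity of the growth. $g$ could approach $1$ as $R\to\infty$, so I cannot claim a single contraction factor. Working on compacts $[1,M]$ and using monotonicity of the iterates avoids this, which is why Step 1 proves strict inequality pointwise rather than a global bound. A secondary care point is the boundary step at $i=K-1$ in the tug-of-war, which falls just outside the range of \Cref{lemma:tow-reinf} and must be checked by hand.
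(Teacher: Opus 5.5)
Your proposal is correct and follows essentially the same route as the paper. For $\mathcal{M}(K,1)$ you run the same backward recursion, in which $(1-V_+)/V_-$ is multiplied each step by a factor bounded away from $1$ on compacts; the paper gets this factor from \Cref{lemma:gainratio}\ref{pi-high}, while you get it from the explicit identity $\phi(R)+\phi(1/R)=1-2R\gamma'(R)$. For the tug-of-war you use the identity \eqref{eqn:tow-eqm-4} at state $K$ together with the geometric growth of $\Delta(i)/\Delta(-i)$ from \Cref{lemma:tow-reinf}, exactly as in the proof of \Cref{lemma:int-N}.

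Two slips are harmless. First, the first factor in \eqref{eqn:tow-win-reinf} is $g(1/\rho_{i+1})$, not $g(\rho_{i+1})$; it is still $>1$, so your inequality $\rho_{i+1}\ge g(\rho_i)\rho_i$ stands. Second, no separate boundary check is needed, because $i=K-1=N-2$ already lies within the range of \Cref{lemma:tow-reinf}.
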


The structure of an iterated incumbency contest is illustrated by %
\Cref{fig:iter} below. If no competition occurs in a given round---which
occurs with probability $1-q$---the incumbent is deemed to win that round
automatically.

\begin{figure}[th]
\begin{center}
\resizebox{1\textwidth}{!}{
\tikzset{
		arrow1/.style = {
			draw =black, semithick, -{Stealth[length = 3mm, width = 2mm]},
		}	
	} 
\tikzset{
	arrow2/.style = {
		draw =black, semithick,densely dashed,-{Stealth[length = 3mm, width = 2mm]},
		}	
	}
\par
\begin{tikzpicture}[]
		\node[]at(-1.5,3.2){\footnotesize Round 1 };
	\node[]at(1.5,3.2){\footnotesize Round 2};
    \node[]at(4.2,3.2){\footnotesize Round 3};
	\node[]at(8,3.2){\footnotesize Round $N$};

	\node [draw, rectangle,inner sep=0.22cm,label distance = 0.05cm]at(-3,2) (a1){\footnotesize Incumbent};
		\node[right of=a1,draw,rectangle,inner sep=0.22cm,node distance=3cm,on grid](a2) {\footnotesize Incumbent};
	\node[right of=a2,draw,rectangle,inner sep=0.22cm,node distance=3cm,on grid](a3) {\footnotesize Incumbent};
    \node[right of=a3,rectangle,inner sep=0.22cm,node distance=2cm,on grid](a4) {};
    \node[right of=a4,rectangle,inner sep=0.22cm,node distance=2cm,on grid](a5) {}; 
    \node[right of=a5,rectangle,inner sep=0.22cm,node distance=2cm,on grid](a6) {}; 
    
    \node [draw, rectangle,inner sep=0.22cm,label distance = 0.05cm]at(-5,0.5) (a0){\footnotesize Initial};
    
    \node [draw, rectangle,inner sep=0.22cm,label distance = 0.05cm]at(10,2) (e1){\footnotesize Final Winner};

    \node [draw, rectangle,inner sep=0.22cm,label distance = 0.05cm]at(10,-1) (e3){\footnotesize Final Loser};

    \node [rectangle,inner sep=0.22cm,label distance = 0.05cm]at(6,2) (c1){\Large $\cdots$};
    \node [rectangle,inner sep=0.22cm,label distance = 0.05cm]at(6,0.5) (c2){\Large $\cdots$};
    \node [rectangle,inner sep=0.22cm,label distance = 0.05cm]at(6,-1) (c3){\Large $\cdots$};

		\node[below of=a1,draw,rectangle,inner sep=0.22cm,node distance=3cm,on grid](b1) {\footnotesize Laggard};
	\node[right of=b1,draw,rectangle,inner sep=0.22cm,node distance=3cm,on grid](b2) {\footnotesize Laggard};
	\node[right of=b2,draw,rectangle,inner sep=0.22cm,node distance=3cm,on grid](b3) {\footnotesize Laggard};
    \node[right of=b3,rectangle,inner sep=0.22cm,node distance=2cm,on grid](b4) {};
    \node[right of=b4,rectangle,inner sep=0.22cm,node distance=2cm,on grid](b5) {};
	\node[right of=b5,rectangle,inner sep=0.22cm,node distance=2cm,on grid](b6) {};

	\draw[arrow1](a1)node[above,xshift = 40pt]{\footnotesize win} -- (a2);	
	\draw[arrow1](a2)node[above,xshift = 40pt]{\footnotesize win} -- (a3);
	\draw[arrow1](b1)node[above,xshift = 33pt,yshift=15pt]{\footnotesize win} -- (a2);
	\draw[arrow1](b2)node[above,xshift = 33pt,yshift=15pt]{\footnotesize win} -- (a3);
    \draw[arrow1](a0)node[above,xshift = 20pt,yshift=18pt]{\footnotesize win} -- (a1);
    \draw[arrow1](b3)node[above,xshift = 25pt,yshift=15pt]{\footnotesize win} -- (a4);
    \draw[arrow1](a3)node[above,xshift = 35pt]{\footnotesize win} -- (a4);
    \draw[arrow1](b5)node[above,xshift = 25pt,yshift=15pt]{\footnotesize win} -- (a6);
    \draw[arrow1](a5)node[above,xshift = 25pt]{\footnotesize win} -- (a6);
	
		\draw[arrow2](a1)node[below,xshift = 37pt,yshift=-15pt]{\footnotesize lose} -- (b2);
	\draw[arrow2](b1)node[below,xshift = 40pt]{\footnotesize lose} -- (b2);
	\draw[arrow2](a2)node[below,xshift = 37pt,yshift=-15pt]{\footnotesize lose} -- (b3);
	\draw[arrow2](b2)node[below,xshift = 40pt]{\footnotesize lose} -- (b3);
    \draw[arrow2](b3)node[below,xshift = 35pt]{\footnotesize lose} -- (b4);
    \draw[arrow2](a3)node[below,xshift = 27pt,yshift=-15pt]{\footnotesize lose} -- (b4);
    \draw[arrow2](b5)node[below,xshift = 25pt]{\footnotesize lose} -- (b6);
    \draw[arrow2](a5)node[below,xshift = 27pt,yshift=-15pt]{\footnotesize lose} -- (b6);
	\draw[arrow2](a0)node[below,xshift = 22pt,yshift=-18pt]{\footnotesize lose} -- (b1);
	\end{tikzpicture}
}
\end{center}
\caption{$N$-round Iterated Incumbency Contest.}
\label{fig:iter}
\end{figure}
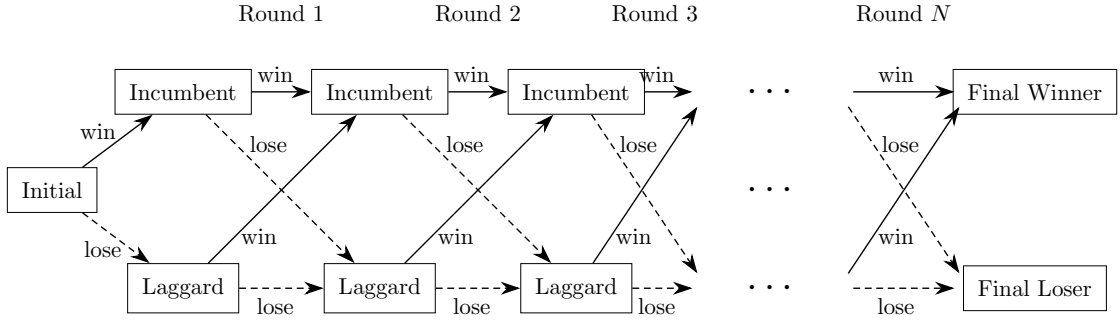

\subsubsection{Rent Dissipation in Iterated Incumbency Contests}

We are now ready to apply the sufficiency part of \Cref{thm:sufficient} to
show that the iterated incumbency contest can approximate full rent
dissipation. The following result is obtained.

\begin{theorem}[Full Dissipation in Iterated Incumbency Contests]
\label{prop:full-suff} Consider a $N$-round iterated incumbency contest with
subcontest $\mathcal{M}^{\text{sub}}$ and impose \Cref{cond:bsf}. If the
subcontest is sufficiently biased, the iterated incumbency contest satisfies
the transient dominance property as $N\rightarrow +\infty $, which leads to
asymptotic full rent dissipation. 
%%Suppose that its subcontest $\mathcal{M}^{\text{sub}}$ satisfies the
%%transient dominance property with parameters $M=\frac{1-V_{+}^{\text{sub}}}{%
%%V_{-}^{\text{sub}}}$ and $P=1-(qQ_{+}^{\text{sub}}+1-q)^{N}$. 
%%%%Therefore, if the ratio $%
%%%%\frac{1-V_{+}^{\text{sub}}}{V_{-}^{\text{sub}}}$ is sufficiently large, the
%%%%iterated incumbency contest fully dissipates rent with $N\rightarrow +\infty$.
\end{theorem}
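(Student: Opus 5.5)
Following the theorem's own framing, we verify the two conditions of \Cref{def:si} and then invoke the sufficiency part of \Cref{thm:sufficient}.

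\textbf{Step 1: values via backward induction on rounds.} Let $W_n$ and $w_n$ denote the continuation values of the incumbent and the challenger at the start of round $n$, measured before the shock. At the terminal round we set $W_{N+1}=v$ and $w_{N+1}=0$, and we write $D_{n}:=W_{n}-w_{n}$.

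In round $n$, a triggered subcontest is a standalone contest with a common stake. Its terminal payoffs are $W_{n+1}$ for the winner and $w_{n+1}$ for the loser. Subtracting the constant $w_{n+1}$ and scaling by $D_{n+1}$ reduces it to the auxiliary unit-prize contest; homogeneity of degree zero (\Cref{cond:bsf}) makes equilibrium payoffs scale linearly. This gives
\begin{equation*}
W_n = w_{n+1}+[qV_+^{\text{sub}}+(1-q)]D_{n+1},\qquad w_n = w_{n+1}+qV_-^{\text{sub}}D_{n+1}.
\end{equation*}

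\textbf{Step 2: choice of the weak sets.} Define $H_\ell^-$ as the set of nonterminal histories at which $\ell$ is the challenger at the start of some round $n$. It then remains to show $w_n\le \epsilon v$ for all $n$. From the recursions above we get
\begin{equation*}
D_n=[q(V_+^{\text{sub}}-V_-^{\text{sub}})+1-q]D_{n+1}\le D_{n+1}\le v.
\end{equation*}
Summing the $w$-recursion then yields
\begin{equation*}
w_n=qV_-^{\text{sub}}\sum_{m>n}D_m.
\end{equation*}

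A crude bound $D_m\le v$ is too weak, because the number of terms grows with $N$. We therefore exploit the decay of $D$: with $\rho:=q(V_+^{\text{sub}}-V_-^{\text{sub}})+1-q$, we have $D_m=\rho^{N+1-m}v$. Hence
\begin{equation*}
w_n\le qV_-^{\text{sub}}\,v/(1-\rho)=V_-^{\text{sub}}v/(1-V_+^{\text{sub}}+V_-^{\text{sub}}).
\end{equation*}
Sufficient bias, $(1-V_+^{\text{sub}})/V_-^{\text{sub}}>1/\epsilon$, makes this less than $\epsilon v$ uniformly in $N$. This geometric-summation step is the crux and the main obstacle: one must check that the subcontest indeed reduces to a positively scaled copy of the unit-prize auxiliary contest. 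That reduction needs $D_{n+1}>0$, which follows inductively since $V_+^{\text{sub}}\ge V_-^{\text{sub}}$ in a biased contest. One should also confirm that nonterminal histories inside a subcontest need not be included in $H_\ell^-$.

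\textbf{Step 3: both sets are reached.} Round $0$ places some player in $H^-_\ell$. Condition (ii) then requires that the incumbent be displaced at least once during rounds $1,\dots,N$. Let $Q_+^{\text{sub}}$ be the incumbent's equilibrium winning probability in the subcontest. This probability is invariant to scaling and is strictly below $1$ whenever $V_-^{\text{sub}}>0$, since a challenger with a positive payoff exerts positive effort. The probability that the incumbent is never displaced is then
\begin{equation*}
(qQ_+^{\text{sub}}+1-q)^N\to 0 \quad\text{as } N\to+\infty.
\end{equation*}
So for $N$ large, both $H_A^-$ and $H_B^-$ are visited with probability at least $1-\epsilon$. \Cref{thm:sufficient} then gives total effort above $(1-4\epsilon)v$, which is asymptotic full dissipation.
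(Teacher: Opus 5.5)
Your proof is correct and has the same skeleton as the paper's. The weak sets are round-start histories at which $\ell$ is the challenger, condition (i) comes from the same backward recursion over rounds, condition (ii) comes from incumbency being overturned with probability bounded away from zero each round, and the conclusion follows from \Cref{thm:sufficient}.

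The two technical steps are carried out differently, because you use homogeneity where the paper's appendix proof is written for general success functions.

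\textbf{Condition (i).} The paper never solves for the stakes $v_n=W_+(n+1)-W_-(n+1)$. It telescopes to $v-W_+(n)=\sum_{j\ge n}[v_j-V_+(v_j)]$ and $W_-(n)=\sum_{j\ge n}V_-(v_j)$, then applies the bias condition term by term to get $W_-(n)\le\epsilon\,[v-W_+(n)]\le\epsilon v$. Your geometric series with $D_m=\rho^{N+1-m}v$ is the closed-form version of the same identity under \Cref{cond:bsf}.

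\textbf{Condition (ii).} The paper uses \Cref{lemma:v-bound2}, a lower bound $\lambda(L)$ on the challenger's winning probability that holds uniformly in the prize. It needs this because, without homogeneity, the incumbent's retention probability would vary with the round-$n$ stake. You replace it with scale invariance of $Q_+^{\text{sub}}$.

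\textbf{What each route buys.} Yours is more elementary and more informative. It identifies the limit point in \Cref{fig:iter-vals} as $w_1\to V_-^{\text{sub}}v/(1-V_+^{\text{sub}}+V_-^{\text{sub}})$. Since $W_1=w_1+D_1$ with $D_1=\rho^N v\to 0$, it also gives total effort at least $(1-2\epsilon-\rho^N)v$ directly, without \Cref{thm:sufficient}. The paper's version carries over unchanged to the non-homogeneous success functions of \Cref{sec:general-bsf}.

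Three small points to tighten:
\begin{enumerate}
\item $Q_+^{\text{sub}}<1$ holds because effort is costly, so the challenger's payoff is at most his probability of winning the unit prize. Hence $1-Q_+^{\text{sub}}\ge V_-^{\text{sub}}>0$. The positivity of his effort is not the right reason.
\item When $q=1$ you need $V_+^{\text{sub}}>V_-^{\text{sub}}$ strictly, which bias toward the incumbent supplies. Otherwise $\rho=0$, rounds $1,\dots,N-1$ carry zero stake, and \Cref{cond:bsf} does not pin down battle outcomes there.
\item A displacement in round $N$ ends the game without producing a round-start history in $H_\ell^-$, so the exponent should be $N-1$. This is immaterial as $N\to+\infty$, and the paper's proof has the same slack.
\end{enumerate}
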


The intuition is as follows. First, a large $\left( 1-V_{+}^{\text{sub}%
}\right) $ relative to $V_{-}^{\text{sub}}$ ensures that the incumbent
remains strongly motivated to exert effort toward final victory and avoids
being relegated to the status of challenger, which is associated with a unit
prize in the auxiliary contest. Second, given homogeneity of each battle's
success function, the probability that the incumbent wins a subcontest is
independent of the total number of rounds $N$. As $N$ becomes large,
incumbency can almost certainly be overturned in some round. Consequently,
the challenger remains hopeful of overtaking the incumbent before the
contest ends, rendering leadership fragile and preserving contestability
throughout the game. Taken together, these features generate transient
dominance, which ensures full rent dissipation in the limit.

We now provide a graphical illustration to clarify the logic of the iterated
incumbency contest, i.e., how equilibrium total effort can rapidly converge
to the prize value even when $N$ is small. In \Cref{fig:iter-vals}, the blue
(diamond) dot near the upper left corner represents their continuation
values at the beginning of round $N$ when player $B$ is the incumbent. Since
this is the final round and its outcome decides the overall winner of the
contest, the location of this point is $(qV_{-}^{\text{sub}}v,(1-q+qV_{+}^{%
\text{sub}})v)$, as implied by the rule of the subcontest $\mathcal{M}^{%
\text{sub}}$ and the terminal payoff profiles $(0,v)$ or $(v,0)$. The red
(solid) arrow and the black (dotted) arrow, respectively, point to the
terminal payoffs when the incumbent wins and when the challenger wins.

Notably, the absolute value of the slope of the red arrow is exactly $\left(
1-V_{+}^{\text{sub}}\right) /V_{-}^{\text{sub}}$. Symmetrically, there is a
blue dot near the lower right corner corresponding to the case in which $A$
is the incumbent at the beginning of round $N$.

Then, moving one round backward, the black (round) dot near the upper left
corner represents the players' continuation values at the beginning of round 
$N-1$ when $B$ is the incumbent. Again, the red and black arrows illustrate
how the continuation values evolve once that round concludes. As we move
further backward, the continuation values of the incumbent and the
challenger become progressively closer. Eventually, as shown in %
\Cref{fig:iter-vals}, they converge to the same point, represented by the
circle in the figure.

\Cref{prop:full-suff} implies that when the subcontest is sufficiently
biased, this limiting point (i.e., the circle) can be made arbitrarily close
to the origin. In other words, their ex ante equilibrium expected payoff in
the contest, $V_{A}^{0}$ and $V_{B}^{0}$, converge to zero, implying full
rent dissipation.

\begin{figure}[th]
\centering
\begin{tikzpicture}[scale=7]
    \node[font=\footnotesize] at (-0.08,0.5) {$V_B$};
    \node[font=\footnotesize] at (0.5,-0.08) {$V_A$};
    \node[font=\footnotesize, align=left] at (0.3,0.92){Incumbent (i.e. player $B$) \\ wins round $N$};
    \node[font=\footnotesize, align=left] at (0.78,0.42){Laggard (i.e. player $A$) \\ wins round $N$};

%%  % 
%%\draw[-] (0,0) -- (1,0) node[pos=1, below, font = \footnotesize] {$1$};
%%\draw[-] (0,0) -- (0,1) node[pos=1, left, font = \footnotesize] {$1$};
%%\draw[-] (0,0) -- (0,0) node[pos=1, below left, font = \footnotesize] {$0$};
\draw[-] (0,0) -- (1,0) node[pos=1, below, font=\footnotesize] {$v$};
\draw[-] (0,0) -- (0,1) node[pos=1, left,  font=\footnotesize] {$v$};
\draw      (0,0) node[below left, font=\footnotesize] {$0$};

  % network
%%%  \draw[step=0.1,gray!20,thin] (0,0) grid (1,1);
  % node
  \coordinate (vup) at (0,1.0);
  \coordinate (vdown) at (1.0,0);
%%    \coordinate (ori) at (0,0);
  \coordinate (I1) at (0.03,0.831);
  \coordinate (L1) at (0.831,0.03);
  \coordinate (I2) at (0.054,0.695);
  \coordinate (L2) at (0.695,0.054);
  \coordinate (-1) at (0,0);
  \coordinate (0) at (0.03,0.03);
  \coordinate (1) at (0.054,0.054);
  \coordinate (3) at (0.1507,0.1507);

  \node[
        diamond,
        draw=blue,
        fill=blue,
        minimum size=4pt,
        yscale=1.5,
        inner sep=0pt,
        label=left:{}
      ] at (I1) {};
    \node[
        diamond,
        draw=blue,
        fill=blue,
        minimum size=4pt,
        yscale=1.5,
        inner sep=0pt,
        label=left:{}
      ] at (L1) {};
%%  \fill[black] (vup) circle (0.01) node[above left] {};
%%  \fill[black] (vdown) circle (0.01) node[below right] {};
  \fill[black] (I2) circle (0.01) node[above left] {};
  \fill[black] (L2) circle (0.01) node[below right] {};
  \node[draw, circle, minimum size=5pt, inner sep=0pt] at (3) {};

   \draw[red, thick, ->,
          >={Straight Barb[length=3pt,width=3pt]},
          shorten >=0.5mm, shorten <=0.7mm]
      (I1) -- (vup);

    \draw[dashed, thick, ->,
          >={Straight Barb[length=3pt,width=3pt]},
          shorten >=0.5mm, shorten <=0.7mm]
      (I1) -- (vdown);

    \draw[dashed, thick, ->,
          >={Straight Barb[length=3pt,width=3pt]},
          shorten >=0.5mm, shorten <=0.5mm]
      (I2) -- (L1);
      
    \draw[red, thick, ->,
          >={Straight Barb[length=3pt,width=3pt]},
          shorten >=0.5mm, shorten <=0.4mm]
      (I2) -- (I1);

  \draw[thin]
  (vdown) -- (L1) -- (L2) -- (3) -- (I2) ;
%  \draw[thin, dash pattern=on 1pt off 1pt](vdown) -- (vup);
  \draw[thin, dash pattern=on 1pt off 1pt]
  (I1) -- (L1) -- (0.03,0.03) -- (I1);

  \draw[thin, dash pattern=on 1pt off 1pt]
  (I2) -- (L2) -- (0.054,0.054) -- (I2);

 \draw[thin, dash pattern=on 1pt off 1pt]
  (0.3,0.12) -- (0.12,0.3) -- (0.12,0.12) -- (0.3,0.12);
\end{tikzpicture}
\caption{Transition of Continuation Values in an Iterated Incumbency
Contest. }
\label{fig:iter-vals}
\end{figure}
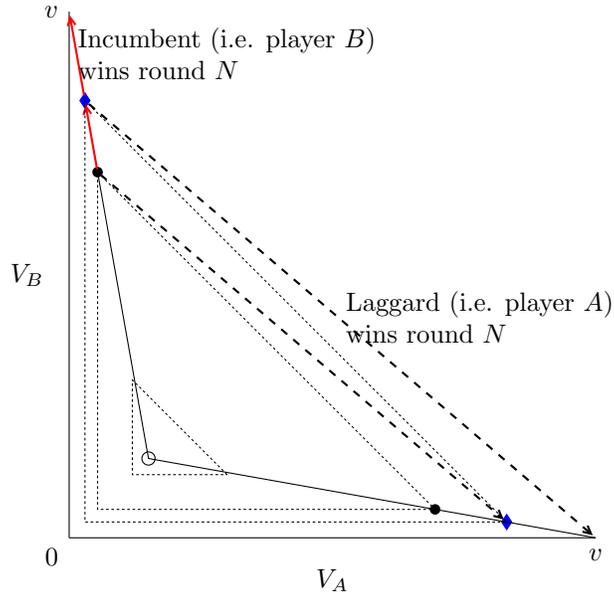

Next, we discuss how the model and analysis shed light on a range of
real-world competitive phenomena.

\subsubsection{Iterated Incumbency Contests: Applications and Relevance}

A diverse array of socioeconomic and biological phenomena exhibit a
structural pattern resembling an iterated incumbency contest. The model
captures protracted rivalries in which exogenous shocks periodically
rejuvenate competition, preventing temporary advantages from compounding
into permanent supremacy. In such environments, leadership is inherently
contingent: While incumbency confers a short-term edge, that advantage
operates only within a given regime and does not accumulate indefinitely
across regimes. Dominance is therefore transient, not because of explicit
institutional design, but because the surrounding landscape shifts and
repeatedly resets the contest.

Schumpeterian competition provides a natural economic illustration of this
mechanism. Radical innovations and technological paradigm shifts often
function as de facto reset mechanisms, narrowing the effective gap between
established firms and emerging challengers. Nokia's dominance in feature
phones, Kodak's leadership in film photography, and Blockbuster's command of
physical video rental all rested on capabilities finely tuned to a
particular technological environment. When the smartphone era, digital
imaging, and high-speed internet transformed those environments, competition
was redefined along new dimensions. Although incumbents retained residual
strengths---brand recognition, distribution networks, organizational
capital---the rules of performance were sufficiently altered that
challengers could re-enter the race without having to overturn decades of
accumulated superiority. In this sense, incumbency provided a bounded
advantage within each regime, but regime shifts prevented that advantage
from accumulating without limit. The insecurity of leadership sustains
incentives both for leapfrogging entry by challengers and for continual
reinvestment by incumbents.

A similar logic arises in biological evolution. In host-parasite arms races,
an immune system may successfully neutralize a pathogen strain, thereby
establishing a temporary lead. Yet the pathogen's rapid mutation generates
new variants, effectively resetting the contest and restoring competitive
pressure. Previous immunological victories do not accumulate indefinitely;
their effectiveness is contingent on the prevailing biological environment.
Predator-prey dynamics display the same structure: a predator may evolve
speed to gain an edge, but selective pressure on prey narrows that advantage
over time.

The finches of Daphne Major provide a particularly vivid example %
\citep{grant1980annual}. During prolonged periods of soft-seed abundance,
small-beaked finches prospered. A severe drought altered seed availability,
enabling large-beaked finches to dominate. Environmental volatility thus
redefines the basis of success, ensuring that advantages remain conditional
rather than permanently compounding.

More broadly, this structure aligns with the Red Queen hypothesis %
\citep{vanvalen1973new}: Species must continually adapt merely to preserve
relative position in an ever-changing ecosystem. Persistent environmental
fluctuation sustains contestability by preventing any participant from
converting temporary superiority into irreversible dominance. In such
systems, incentives can be sustained precisely because the lead is
ultimately ephemeral. The iterated incumbency framework provides a
parsimonious way to formalize this logic: Incumbency confers a bounded
advantage within each phase of competition, but exogenous shifts repeatedly
reopen the contest, maintaining long-run rivalry and sustained investment.
The volatility in the coevolutionary process leads to sustained effort,
forcing every species to \textquotedblleft run on a
treadmill\textquotedblright\ of adaptation simply to avoid extinction. A
natural incentive structure endogenously emerges without exogenous design.

\section{Extension: General Battle Success Functions}

\label{sec:general-bsf}

Our baseline analysis assumes a homogeneous success function (\Cref{cond:bsf}%
) for expositional efficiency. We now demonstrate that our results hold for
more general success functions that admit common properties for component
battles. Our discussion proceeds in three steps. We first extract the key
properties for component battles that underpin our main results. Second, we
demonstrate that these properties also emerge for other success functions.
Third, we explain how our results extend to the broader setting with more
general success functions.

\subsection{Key Properties for Component Battles}
\label{sec:key-component-battle-properties}

Consider a component battle in the contest with a winning value $\Delta
_{\ell }>0$ for player $\ell \in \{A,B\}$. Let $x_{\ell }^{\ast }=x^{\ast
}(\Delta _{\ell },\Delta _{-\ell })$ denote the player's equilibrium effort. 
%%%Define 
%%%\begin{equation*}
%%%\pi _{\ell }^{\ast }=\pi ^{\ast }(\Delta _{\ell },\Delta _{-\ell }):=\frac{%
%%%p(x_{\ell }^{\ast },x_{-\ell }^{\ast })\Delta _{\ell }-x_{\ell }^{\ast }}{%
%%%\Delta _{\ell }}\equiv p(x_{\ell }^{\ast },x_{-\ell }^{\ast })-\frac{x_{\ell
%%%}^{\ast }}{\Delta _{\ell }}
%%%\end{equation*}%
%%%to be the ratio of player $\ell $'s equilibrium payoff in the battle to his
%%%value of winning, which we call the gain ratio function of the battle. With
%%%homogeneous success function, $\pi _{\ell }^{\ast }=\phi (\Delta _{\ell
%%%}/\Delta _{-\ell })$. 
We have the following.

\begin{lemma}[Key Properties for Component Battles]
\label{prop:gensf} Consider a single battle between two players with
respective winning values $\Delta ,\Delta ^{\prime }>0$. With homogeneous
success functions as described in \Cref{cond:bsf}, the following hold.

\begin{enumerate}[label=(\roman*)]

\item \label{cond0} The battle yields a unique pure-strategy Nash
equilibrium; the equilibrium strategy $x^{\ast }(\Delta ,\Delta ^{\prime })$
is continuous in the winning values, and $x^{\ast }(\Delta ,\Delta ^{\prime
})>0$.

\item \label{cond1} There exists a constant $C>0$ such that $\frac{x^{\ast
}(\Delta ^{\prime },\Delta )}{x^{\ast }(\Delta ,\Delta ^{\prime })}<C\frac{%
\Delta ^{\prime }}{\Delta }$ for all $\Delta ^{\prime }\geq \Delta $.
\end{enumerate}
\end{lemma}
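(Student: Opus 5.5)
Part \ref{cond0} follows almost directly from \Cref{lemma:single}. Under \Cref{cond:bsf}, that lemma gives a unique pure-strategy equilibrium with
\begin{equation*}
x^{\ast}(\Delta,\Delta')=\Delta'\gamma'\!\left(\frac{\Delta'}{\Delta}\right).
\end{equation*}
Since $\gamma$ is twice differentiable, $\gamma'$ is continuous on $(0,+\infty)$. The ratio $\Delta'/\Delta$ is continuous in $(\Delta,\Delta')\in\mathbb{R}_{++}^2$, so $x^{\ast}$ is continuous as a composition of continuous maps. Positivity holds because $\Delta'>0$ and $\gamma'>0$ by assumption. Here I would add only one remark: \Cref{lemma:single} is stated for strictly positive winning values, which is exactly the domain in the statement, so nothing more is needed.

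For part \ref{cond1}, I set $\theta:=\Delta'/\Delta\ge 1$. Applying the formula with the roles of the players swapped gives
\begin{equation*}
\frac{x^{\ast}(\Delta',\Delta)}{x^{\ast}(\Delta,\Delta')}=\frac{\Delta\,\gamma'(1/\theta)}{\Delta'\,\gamma'(\theta)}=\frac{1}{\theta}\cdot\frac{\gamma'(1/\theta)}{\gamma'(\theta)}.
\end{equation*}
The goal is then to show $\gamma'(1/\theta)/\gamma'(\theta)\le C'\theta^2$ for some constant $C'$, which yields the claim with $C=C'+1$ to make the inequality strict.

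The key step is to use the symmetry $\gamma(x)+\gamma(1/x)=1$. Differentiating it gives $\gamma'(x)=x^{-2}\gamma'(1/x)$. Putting $x=\theta$ gives $\gamma'(1/\theta)=\theta^2\gamma'(\theta)$, so the ratio is exactly $\theta^2$. Hence the effort ratio equals $\theta=\Delta'/\Delta$, and any $C>1$ works, for instance $C=2$.

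I do not expect a real obstacle; the argument is a short computation. The one point that needs care is differentiating the symmetry identity on $(0,+\infty)$, where $\gamma$ is differentiable, and keeping track of which player's value enters which argument of $x^{\ast}$. Concavity of $\gamma$ is not needed for this step. It only matters for the existence and uniqueness already supplied by \Cref{lemma:single}.
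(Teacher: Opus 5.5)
Your proposal is correct and follows the paper's proof essentially verbatim. For part (i), both you and the paper read off the equilibrium efforts from \Cref{lemma:single}. For part (ii), both differentiate $\gamma(x)+\gamma(1/x)=1$ to get $\gamma'(1/x)=x^{2}\gamma'(x)$, so the effort ratio equals $\Delta'/\Delta$ exactly and any $C>1$ works.
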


The property laid out in \Cref{prop:gensf}\ref{cond1} requires that the
ratio of players' equilibrium efforts be bounded by a constant times the
ratio of winning values. Further, define 
\begin{equation*}
\pi _{\ell }^{\ast }=\pi ^{\ast }(\Delta _{\ell },\Delta _{-\ell }):=\frac{%
p(x_{\ell }^{\ast },x_{-\ell }^{\ast })\Delta _{\ell }-x_{\ell }^{\ast }}{%
\Delta _{\ell }}\equiv p(x_{\ell }^{\ast },x_{-\ell }^{\ast })-\frac{x_{\ell
}^{\ast }}{\Delta _{\ell }},
\end{equation*}%
which is the ratio of player $\ell $'s equilibrium payoff in the battle to
his value of winning, and we call it the battle's \emph{gain ratio function}%
. With homogeneous success function, $\pi _{\ell }^{\ast }=\phi (\Delta
_{\ell }/\Delta _{-\ell })$.

\begin{lemma}
\label{lemma:gainratio} Consider a single battle between two players with
respective winning values $\Delta ,\Delta ^{\prime }>0$. With homogeneous
success functions as described in \Cref{cond:bsf}, the gain ratio function
of the battle has the following properties.

\begin{enumerate}[label=(\roman*)]

\item \label{pi0} For all $\epsilon >0$, $\pi ^{\ast }(\Delta ^{\prime
},\Delta )$ uniformly converges to a continuous function $\pi _{0}(\Delta
^{\prime })>0$ as $\Delta \rightarrow 0^{+}$ across all $\Delta ^{\prime
}\in \lbrack \epsilon ,+\infty )$.

\item \label{pi-ratio} For all $R>0$, there exists ${\pi _{R}^{\ast }}>0$
such that if $\Delta ^{\prime }\geq R\Delta $, then $\pi ^{\ast }(\Delta
^{\prime },\Delta )\geq {\pi _{R}^{\ast }}$. Moreover, there exists $%
R^{\prime }>0$ and $\widetilde{\pi }>\frac{1}{2}$ such that if $\Delta
^{\prime }\geq R^{\prime }\Delta $, then $\pi ^{\ast }(\Delta ^{\prime
},\Delta )\geq \widetilde{\pi }$.

\item \label{pi-high} For all $R^{\prime \prime }>1$, there exists $%
d_{R^{\prime \prime }}>0$ such that if $1\leq \frac{\Delta ^{\prime }}{%
\Delta }\leq R^{\prime \prime }$, $\pi ^{\ast }(\Delta ^{\prime },\Delta
)+\pi ^{\ast }(\Delta ,\Delta ^{\prime })\leq 1-d_{R^{\prime \prime }}.$
\end{enumerate}
\end{lemma}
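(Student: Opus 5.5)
The proof reduces everything to the single-variable function $\phi$. By \Cref{lemma:single}, under \Cref{cond:bsf} we have $\pi^{\ast}(\Delta',\Delta)=\phi(\theta)$ with $\theta:=\Delta'/\Delta$. So each claim becomes a statement about $\phi(\theta)=\gamma(\theta)-\theta\gamma'(\theta)$ on $(0,+\infty)$. Two facts drive the argument:
\begin{equation*}
\phi'(\theta)=-\theta\gamma''(\theta)\ge 0,\qquad \gamma'(\theta)=\theta^{-2}\gamma'(1/\theta).
\end{equation*}
The first says $\phi$ is nondecreasing. The second comes from differentiating $\gamma(\theta)+\gamma(1/\theta)=1$.

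For \ref{pi0}, I would first show $\phi(\theta)\to 1$ as $\theta\to+\infty$. We have $\gamma(\theta)\to\gamma(+\infty)=1$. By concavity, $0\le(\theta/2)\gamma'(\theta)\le\gamma(\theta)-\gamma(\theta/2)$, and the right-hand side tends to $1-1=0$, so $\theta\gamma'(\theta)\to 0$. I then take $\pi_0\equiv 1$, which is continuous and positive. Uniformity over $\Delta'\in[\epsilon,+\infty)$ follows from monotonicity of $\phi$: for such $\Delta'$ we have $\theta\ge\epsilon/\Delta$, hence
\begin{equation*}
\sup_{\Delta'\ge\epsilon}|\pi^{\ast}(\Delta',\Delta)-1|\le 1-\phi(\epsilon/\Delta)\to 0 \quad\text{as }\Delta\to 0^{+}.
\end{equation*}

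For \ref{pi-ratio}, monotonicity gives $\pi^{\ast}(\Delta',\Delta)\ge\phi(R)$ whenever $\Delta'\ge R\Delta$, so I set $\pi_R^{\ast}:=\phi(R)$. For the second part, since $\phi(\theta)\to 1$ there is an $R'$ with $\phi(R')>1/2$, and I take $\widetilde{\pi}:=\phi(R')$.

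The main obstacle is showing that $\phi(R)>0$ for every $R>0$. Concavity with $\gamma(0)=0$ only gives $\phi\ge 0$. If $\phi(\theta_0)=0$ for some $\theta_0>0$, then monotonicity forces $\phi\equiv 0$ on $[0,\theta_0]$. Hence $\gamma''\equiv 0$ there, so $\gamma(\theta)=c\theta$ on $[0,\theta_0]$.

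I would try to rule this out using the reciprocal symmetry together with twice differentiability, as follows.
\begin{itemize}
\item If $\theta_0\ge 1$, then $c=\gamma(1)=1/2$, and symmetry gives $\gamma(x)=1-1/(2x)$ for $x\ge 1$. This has $\gamma''(1^{+})=-1\ne 0=\gamma''(1^{-})$, contradicting twice differentiability at $1$.
\item If $\theta_0<1$, then linearity on $[0,\theta_0]$ transfers by symmetry to the form $1-c/x$ on $[1/\theta_0,\infty)$. I would then look for the analogous contradiction, or for a contradiction with $\gamma'>0$ and concavity on the intermediate range. I expect this case to require the most care.
\end{itemize}

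For \ref{pi-high}, the symmetry $\gamma(\theta)+\gamma(1/\theta)=1$ gives
\begin{equation*}
\phi(\theta)+\phi(1/\theta)=1-\theta\gamma'(\theta)-\theta^{-1}\gamma'(1/\theta).
\end{equation*}
Using $\gamma'(1/\theta)=\theta^{2}\gamma'(\theta)$, this equals $1-2\theta\gamma'(\theta)$. The map $\theta\mapsto\theta\gamma'(\theta)$ is continuous and strictly positive on the compact interval $[1,R'']$ because $\gamma'>0$. So it has a positive minimum $m$ there, and $d_{R''}:=2m$ works.
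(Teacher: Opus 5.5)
Your reduction to $\phi$ follows the paper's route. Your arguments for (i), (iii), and the second half of (ii) are sound; the identity $\phi(\theta)+\phi(1/\theta)=1-2\theta\gamma'(\theta)$ even justifies the positivity on $[1,R'']$ that the paper only asserts. The gap is the step you flagged, and it cannot be closed. With only $\gamma''\le 0$, the case $\theta_0<1$ admits no contradiction, so the first claim of (ii) is false for small $R$. For instance, let
\[
\gamma(y)=\begin{cases}
\frac{15}{29}\,y, & 0\le y\le \frac12,\\
\frac{15}{29}\,y-\frac{4}{29}\bigl(y-\frac12\bigr)^{3}, & \frac12\le y\le 1,
\end{cases}
\qquad \gamma(x):=1-\gamma(1/x)\ \text{ for } x>1 .
\]
This function has the following properties.
\begin{itemize}
\item $\gamma(1)=\frac12$.
\item $\gamma'\ge\frac{12}{29}$ and $\gamma''\le0$ on $[0,1]$.
\item $\gamma''(1^-)=-\frac{12}{29}=-\gamma'(1)$. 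Differentiating $\gamma'(x)=x^{-2}\gamma'(1/x)$ at $x=1$ shows this is exactly the condition for the branch $1-\gamma(1/x)$ to join twice differentiably at $1$; the result is even $C^2$.
\item For $x>1$ and $y=1/x$, $\gamma''(x)=-x^{-3}\bigl[2\gamma'(y)+y\gamma''(y)\bigr]\le -x^{-3}\bigl(\frac{24}{29}-\frac{12}{29}\bigr)<0$.
\end{itemize}
So this $\gamma$ meets every requirement of \Cref{cond:bsf}, yet $\phi\equiv0$ on $[0,\frac12]$. Whenever $\Delta'\le\Delta/2$, the weaker player's equilibrium gain ratio $\pi^*(\Delta',\Delta)=\phi(\Delta'/\Delta)$ is exactly $0$. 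Hence no $\pi_R^*>0$ exists for $R\le\frac12$.

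The paper's proof gets past this point only by declaring $\phi$ \emph{strictly} increasing. That does not follow from $\gamma''\le0$; it needs strict concavity, or at least that $\gamma$ is not linear on any $[0,\theta_0]$. Under $\gamma''<0$ your argument closes at once: $\phi'=-\theta\gamma''>0$ and $\phi(0)=0$ give $\phi(R)>0$ for every $R>0$.

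Your case $\theta_0\ge1$ does prove something unconditionally, namely the claim for $R\ge1$. A more direct route is that $\gamma''(1)=-\gamma'(1)<0$ gives $\phi'(1)>0$, so $\phi$ cannot vanish on all of $[0,1]$, hence $\phi(1)>0$. This covers constants such as $\pi_1^*$. It does not cover the small-$R$ constants, such as $\pi^*_{\alpha_n d}$, that the paper invokes later.
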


\Cref{lemma:gainratio} concerns the fraction of winning value that a player
can extract as his equilibrium payoff in the battle. The properties can be
intuitively interpreted. \Cref{lemma:gainratio}\ref{pi0} characterizes a
standard continuity property. \Cref{lemma:gainratio}\ref{pi-ratio}, in
words, states that for any $R>0$, if a player's winning value $\Delta
^{\prime }$ does not fall below a fraction $R$ of his opponent's winning
value $\Delta $, then his equilibrium payoff must be at least a fraction ${%
\pi _{R}^{\ast }}$ of $\Delta ^{\prime }$. Further, it states that a
player's equilibrium payoff is more than half of his winning value $%
\Delta^{\prime }$, as long as he is sufficiently strong compared to his
opponent, i.e., $\Delta ^{\prime }\geq R^{\prime }\Delta $ for a constant $%
R^{\prime }$, which can be chosen to be arbitrarily large. It is noteworthy
that $\pi ^{\ast }(\Delta ^{\prime },\Delta )+\pi ^{\ast }(\Delta ,\Delta
^{\prime })$ in \Cref{lemma:gainratio}\ref{pi-high}, as the sum of the two
players' gain ratios, is indicative of rent dissipation in the battle. The
property says that when the battle is not too unbalanced, i.e., $\Delta
^{\prime }/\Delta $ falls within $[1,R^{\prime \prime }]$, the fractions of
winning value they can extract must also be bounded, i.e., $\pi ^{\ast
}(\Delta ^{\prime },\Delta )+\pi ^{\ast }(\Delta ,\Delta ^{\prime })\leq
1-d_{R^{\prime \prime }}$.

The equilibrium properties laid out in \Cref{prop:gensf} and %
\Cref{lemma:gainratio} play critical roles in shaping our main results.

\subsection{Alternative Success Functions}

We derive the properties in \Cref{prop:gensf} and \Cref{lemma:gainratio}
from our baseline setting of homogeneous success function. However, these
properties can be satisfied in alternative contexts. For example, consider
the popularly adopted ratio-form success functions, which are specified as
follows. Given an effort profile $(x,x^{\prime })$, a player exerting an
effort $x$ wins the battle with a probability 
\begin{equation}  \label{ratio-bsf}
p(x,x^{\prime })=%
\begin{cases}
\frac{\gamma (x)}{\gamma (x)+\gamma (x^{\prime })}, & \text{ if }%
(x,x^{\prime })\neq (0,0), \\ 
\frac{1}{2}, & \text{ if }(x,x^{\prime })=(0,0),%
\end{cases}%
\end{equation}%
where $\gamma :\mathbb{R}_{+}\rightarrow \mathbb{R}_{+}$ is continuous and
twice differentiable with $\gamma (0)=0$, $\gamma ^{\prime }>0$, $\gamma
^{\prime \prime }<0$, and $\inf_{x>0}\frac{x\gamma ^{\prime }(x)}{\gamma (x)}%
>0$. We obtain the following.

\begin{proposition}
\label{lemma:bsf-ex} Ratio-form success functions have the properties in~%
\Cref{prop:gensf} and \Cref{lemma:gainratio}.
\end{proposition}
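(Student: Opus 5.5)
The plan is to work directly with the first-order conditions of the ratio-form battle. Write $g=\gamma$ and let player $\ell$ have winning value $\Delta_\ell$. Player $\ell$ maximizes $\frac{g(x)}{g(x)+g(x')}\Delta_\ell-x$. Because $g$ is strictly concave and increasing with $g(0)=0$, the payoff is strictly concave in own effort whenever $x'>0$. Zero effort is never a best reply to zero effort, since a small positive effort wins with probability one. The FOCs are
\[
\frac{g'(x_A)g(x_B)}{(g(x_A)+g(x_B))^2}\Delta_A=1,\qquad \frac{g'(x_B)g(x_A)}{(g(x_A)+g(x_B))^2}\Delta_B=1 .
\]
Dividing them gives $\frac{g'(x_A)g(x_B)}{g'(x_B)g(x_A)}=\frac{\Delta_B}{\Delta_A}$. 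Define $\eta(x):=xg'(x)/g(x)$, the elasticity. Concavity together with $g(0)=0$ gives $\eta\le 1$, and by assumption $\eta\ge\underline\eta>0$. The ratio identity becomes
\[
\frac{x_B}{x_A}=\frac{\Delta_B}{\Delta_A}\cdot\frac{\eta(x_B)}{\eta(x_A)}.
\]

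\textbf{Properties of \Cref{prop:gensf}.} For \ref{cond0}, I would show existence and uniqueness of the interior solution as follows. Reparametrize by $s=g(x_A)/(g(x_A)+g(x_B))$. Then show that the system reduces to a single monotone equation, since $g'/g$ is strictly decreasing (log-concavity follows from concavity). Continuity then follows from the implicit function theorem, and strict positivity of efforts follows from the argument above. For \ref{cond1}, the ratio identity immediately gives $x^*(\Delta',\Delta)/x^*(\Delta,\Delta')\le \Delta'/(\underline\eta\,\Delta)$, so one may take $C=2/\underline\eta$.

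\textbf{Properties of \Cref{lemma:gainratio}.} The win probability of the player with value $\Delta'$ is $p'=g(x')/(g(x')+g(x))$. From the FOC, $x'/\Delta'=\eta(x')\,p'(1-p')$. So the gain ratio is
\[
\pi^*(\Delta',\Delta)=p'-\eta(x')p'(1-p')\ \ge\ p'^2 .
\]

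\begin{enumerate}[label=(\roman*)]
\item For \ref{pi-ratio}, I need that $\Delta'\ge R\Delta$ forces $p'$ bounded below, and that $p'\to1$ as $R\to\infty$. Suppose instead $p'\le 1/2$, i.e. $x'\le x$. Take logs in the ratio of FOCs and use log-concavity: $g'(x')/g(x')\ge g'(x)/g(x)$, so $g(x)/g(x')\ge \Delta'/\Delta\ge R$. This bounds $1-p'$ in terms of $R$, which gives a lower bound on $p'$ and hence a uniform $\pi^*_R$. To get $p'\to1$, I would invoke the elasticity bound $g(tx)\ge t^{\underline\eta}g(x)$ for $t\ge1$. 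Together with $x'/x\ge \underline\eta\,\Delta'/\Delta$, this gives $g(x')/g(x)\gtrsim R^{\underline\eta}$, so $p'\to1$ and $\pi^*\ge p'^2>1/2$ for large $R'$.
\item For \ref{pi-high}, the sum of gain ratios is $1-p'(1-p')(\eta(x)+\eta(x'))\le 1-2\underline\eta\,p'(1-p')$. On $1\le\Delta'/\Delta\le R''$, the estimates above keep $p'$ in a compact subinterval of $(0,1)$, giving $d_{R''}$.
\item For \ref{pi0}, I expect the main obstacle, because the limit as $\Delta\to0$ is not scale-free. As $\Delta\to0$, the FOC forces $x\to0$ and $x'\to0$, with $p'$ close to one. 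I would then show $\pi^*\to 1$ uniformly on $\Delta'\ge\epsilon$, so that $\pi_0\equiv1$. It suffices to prove $p'\to1$ uniformly, using $\Delta'/\Delta\to\infty$ uniformly together with the bound in (ii). Then $\pi^*\ge p'^2\to1$ and $\pi^*\le1$ together give the uniform limit.
\end{enumerate}

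The delicate point in both (ii) and (iii) is justifying the elasticity comparison $g(x')/g(x)\ge (x'/x)^{\underline\eta}$. I would derive it by integrating $(\log g)'\ge \underline\eta/x$.
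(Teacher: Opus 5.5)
Your plan follows the paper's proof closely: the elasticity bound $\eta=xg'/g\in[\underline\eta,1]$, the FOC ratio for \Cref{prop:gensf}\ref{cond1}, $\pi^*=p'-\eta(x')p'(1-p')\ge p'^2$, integrating $(\log g)'\ge\underline\eta/x$ to force $p'\to1$, the identity $1-\pi^*(\Delta',\Delta)-\pi^*(\Delta,\Delta')=p'(1-p')(\eta(x)+\eta(x'))$, and $\pi_0\equiv1$. However, the step meant to give a lower bound $\pi^*_R$ for arbitrary $R>0$ is wrong as written.

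The FOC ratio reads $g'(x')g(x)\Delta'=g'(x)g(x')\Delta$. When $x'\le x$, log-concavity ($g'(x')/g(x')\ge g'(x)/g(x)$) only yields $\Delta'\le\Delta$. It orders the efforts but gives no quantitative control of $g(x)/g(x')$. The inequality you then record, $g(x)/g(x')\ge R$, is a lower bound on $(1-p')/p'$. That is an \emph{upper} bound on $p'$, the opposite of what you claim.

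For $R\ge1$ this is harmless, since log-concavity alone already gives $x'\ge x$ and $p'\ge1/2$. For $R<1$, however, your argument produces no $\pi^*_R$. This is exactly the case in which $p'\le1/2$ can occur, and small ratios are used later (e.g.\ in \Cref{lemma:v-bound2}). The missing ingredient is concavity rather than log-concavity: if $x'\le x$ then $g'(x)\le g'(x')$, so
\[
\frac{1-p'}{p'}=\frac{g(x)}{g(x')}=\frac{g'(x)}{g'(x')}\cdot\frac{\Delta}{\Delta'}\le\frac{1}{R}.
\]
Hence $p'\ge\min\{1/2,\,R/(1+R)\}$, and one may take $\pi^*_R=\bigl(\min\{1/2,\,R/(1+R)\}\bigr)^2$.

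The same omission carries over to \Cref{lemma:gainratio}\ref{pi-high}. You appeal to ``the estimates above'' to keep $p'$ in a compact subinterval of $(0,1)$. But everything you derived, including $g(x')/g(x)\ge(x'/x)^{\underline\eta}$, bounds $p'$ only from below. The upper bound comes from applying the corrected inequality to the opponent, whose value ratio is $\Delta/\Delta'\ge1/R''$. This gives $1-p'\ge1/(1+R'')$, so $p'(1-p')\ge R''/(1+R'')^2$, and $d_{R''}=2\underline\eta R''/(1+R'')^2$ works. The paper obtains this upper bound from $\eta\le1$, via $\log\frac{g(x')}{g(x)}\le\log\frac{x'}{x}\le\log\bigl(C\frac{\Delta'}{\Delta}\bigr)$.

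With these two one-line repairs your proposal is complete. The concavity bound actually treats $R<1$ more explicitly than the paper, whose integral estimate is written only for $\Delta'\ge\Delta$.
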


\subsection{How the results extend}

Our results largely hold as long as the success functions for component
battles allow for the properties laid out in \Cref{prop:gensf} and %
\Cref{lemma:gainratio}. For example, \Cref{thm:L} remains intact with the
upper bound $\left[ 1-{(\pi _{1}^{\ast })}^{L(\mathcal{M})}\right] v$ for
rent dissipation. Similarly, \Cref{lemma:tow-reinf}, \Cref{prop:tow-adv}, %
\Cref{thm:tow-rent}, \Cref{lemma:cw-reinf}, \Cref{prop:cw-adv}, %
\Cref{thm:cw-rent}, \Cref{thm:sufficient}, \Cref{lemma:ratio-instances}, and %
\Cref{prop:full-suff} also remain intact.

\Cref{prop:homogeneous-ex} requires an additional property: There exist $%
\widehat{\theta }>1$ and $M>0$ such that, for any single battle in which the
winning values satisfy $\frac{\Delta _{\ell }}{\Delta _{-\ell }}>\widehat{%
\theta }$, it holds that $\frac{p(x_{-\ell }^{\ast },x_{\ell }^{\ast
})\Delta _{\ell }}{p(x_{\ell }^{\ast },x_{-\ell }^{\ast })\Delta _{-\ell }}%
>M $. This additional property, together with \Cref{prop:gensf} and %
\Cref{lemma:gainratio}, reinstates \Cref{prop:homogeneous-ex}. It can be
verified that this property (and thus \Cref{prop:homogeneous-ex}) holds for
both the homogeneous and ratio-form success functions. The details are
provided in \Cref{app:assumption_3_ex}.

%%%\Cref{prop:homogeneous-ex} can be reinstated once this condition is
%%%satisfied, together with the properties in \Cref{prop:gensf} and \Cref{lemma:gainratio}. 
%%%
%%%However, we can
%%%verify that both homogeneous and ratio-form success functions accommodate
%%%this requirement. 
%%%

A caveat is worth noting for \Cref{thm:towp}. 
% Suppose that the properties in
% \Cref{prop:gensf} and \Cref{lemma:gainratio} hold. 
To extend \Cref{thm:towp} beyond homogeneous success functions, we need to
verify the existence of an MPE. This is technically challenging: We
conjecture that this equilibrium exists but cannot prove it in general.
However, assuming equilibrium existence, the transient dominance property
can still be established for tug-of-war with resets. 
% We conjecture that this equilibrium exists but
% cannot prove it in general, which should be attempted in future research.

Finally, we provide the following remark to close this section.

\begin{remark}
\label{lemma:noisier} Suppose that $p(x^{\prime },x)$ admits the properties
in \Cref{prop:gensf} and \Cref{lemma:gainratio}. Then the success function $%
q\times p(x^{\prime },x)+(1-q)\times \frac{1}{2}$ for any $q\in (0,1]$ also
has those properties.
\end{remark}

\Cref{lemma:noisier} states that the properties in \Cref{prop:gensf} and %
\Cref{lemma:gainratio} are robust to introducing additional noise into the
success function by mixing it with a fair coin toss. Consequently, our main
results also continue to hold for such variations in the battle success
function.

\section{Conclusion}

This paper develops a general analysis of dynamic multi-battle contests to
explore how contest architecture governs rent dissipation. We first show
that a contest cannot fully dissipate its rent as long as its minimum length
remains finite. Further, we establish in the case of tug-of-war contests
that full rent dissipation may not arise asymptotically even if the minimum
length of the contest approaches infinity.

We identify a structural property common in many dynamic contests,
exchangeability, which plays a critical role in limiting rent dissipation in
equilibrium. Exchangeability enables each battle outcome to exert a
long-lasting impact along the dynamics. The long memory of each battle
cements one's lucky early wins into persistent advantage, which, in turn,
discourages the opponent and disincentivizes both players.

Further, we propose a transient dominance property that defies the long
memory caused by exchangeability and establish that dynamic contests
exhibiting such a property approach full rent dissipation in the limit. An
iterated incumbency contest model is then proposed to demonstrate the
principle of transient dominance, which provides a natural and intuitive
account of dynamic competitions with random exogenous shocks, such as
Schumpeterian competition with technological disruptions and coevolutionary
processes in nature.

Our main analysis is set up in two-player settings. We conjecture that the
qualitative insights---e.g., the roles played by exchangeability and
transient dominance---may extend to multiplayer contests, but a thorough
investigation is left for future research.

\clearpage

\bibliographystyle{apalike-fullnames}
\bibliography{multi-battle}

\clearpage
\begin{appendices}
\appendix

\label{app:proofs}

\renewcommand{\thetable}{A\arabic{table}}
\setcounter{table}{0}
\renewcommand{\thefigure}{A\arabic{figure}}
\setcounter{figure}{0}
\setcounter{equation}{0}
\renewcommand{\theequation}{A\arabic{equation}}

\section{Single-battle Properties}\label{app:single}
In this appendix, we establish single-battle properties (i.e., \Cref{prop:gensf}, \Cref{lemma:gainratio}, and \Cref{assumption_3_ex}) under the homogeneous (and also the ratio-form) success function. Proofs of \Cref{lemma:bsf-ex} and \Cref{lemma:noisier} are also included here. Then in \Cref{sec:proof}, we prove the results for a general success function based on these single-battle properties wherever possible.

\subsection{Proof of \Cref{prop:gensf} and \Cref{lemma:gainratio}}
\begin{proof}
Suppose that the battle success function is homogeneous as given by \Cref{cond:bsf}.
By the FOC, we have $x_A = \Delta_B\gamma'(\frac{\Delta_B}{\Delta_A})$ and $x_B = \Delta_A\gamma'(\frac{\Delta_A}{\Delta_B})$ in equilibrium, so \Cref{prop:gensf}\ref{cond0} is satisfied. 
    Taking derivative of $\gamma(x)+\gamma(1/x) = 1$ yields that $\gamma'(x) x^2 = \gamma'(1/x)$. As a result,
    \begin{align*}
        \frac{x_A}{x_B} = \frac{\Delta_A}{\Delta_B}\frac{\frac{\Delta_B}{\Delta_A}\gamma'\left(\frac{\Delta_B}{\Delta_A}\right)}{\frac{\Delta_A}{\Delta_B}\gamma'\left(\frac{\Delta_A} {\Delta_B}\right)} = \frac{\Delta_A}{\Delta_B},
    \end{align*} and \Cref{prop:gensf}\ref{cond1} is naturally satisfied. 
    
    For \Cref{lemma:gainratio}, we have
    \begin{align*}
        \pi^*(\Delta',\Delta) = \phi\left(\frac{\Delta'}{\Delta}\right) :=\gamma\left(\frac{\Delta'}{\Delta}\right) -  \frac{\Delta'}{\Delta}\gamma'\left(\frac{\Delta'}{\Delta}\right),
    \end{align*}
    where $\phi$ is continuous and strictly increasing, with $\phi(0) = 0$, $\phi(1)\in (0,\frac{1}{2})$, and $\lim_{x\to +\infty}\phi(x) = 1$. Parts~\ref{pi0} and~\ref{pi-ratio} follow immediately. For part~\ref{pi-high}, since $1-\phi(x) - \phi(1/x)$ is a continuous and positive function on the closed interval $[1,R'']$, letting $d_{R''} := \min_{x\in [1,R'']}\{1-\phi(x) - \phi(1/x)\}$ finishes the proof.\end{proof}

\subsection{Proof of \Cref{lemma:bsf-ex}}
\begin{proof}
Suppose that the battle success function is given by \eqref{ratio-bsf}.
  \Cref{prop:gensf}\ref{cond0} is satisfied because the equilibrium is determined by the FOC:
    \begin{align}
        \frac{\gamma'(x_A)\gamma (x_B)}{(\gamma(x_A)+\gamma(x_B))^2}\Delta_A = \frac{\gamma'(x_B)\gamma (x_A)}{(\gamma(x_A)+\gamma(x_B))^2}\Delta_B = 1. \label{FOC_ratioform}
    \end{align}
    
    For \Cref{prop:gensf}\ref{cond1},  since $\gamma(\cdot)$ is concave and $\inf_{x>0}\frac{x\gamma'(x)}{\gamma(x)}>0$, there exists constant $C>1$ such that $\frac{x\gamma'(x)}{\gamma(x)}\in[1/C,1]$. Therefore,
    \begin{align}
        \frac{x_A}{x_B} = \underbrace{\frac{x_A\gamma'(x_A)}{\gamma(x_A)}}_{\text{bounded in }[1/C,1]}
        \underbrace{
         \frac{\gamma(x_A)\gamma'(x_B)}{\gamma(x_B)\gamma'(x_A)}}_{=\frac{\Delta_A}{\Delta_B}\text{ by FOC}}
         \underbrace{\frac{\gamma(x_B)}{x_B\gamma'(x_B)}}_{\text{bounded in }[1,C]} \in  \left[\frac{1}{C}\frac{\Delta_A}{\Delta_B},C\frac{\Delta_A}{\Delta_B}\right]. \label{lemma1(i)}
    \end{align}
    
    For \Cref{lemma:gainratio}, we first show that 
    $\pi^*(\Delta',\Delta)\in \left[p(x',x)^2,p(x',x)\right]$.
    Let $x' = x^*(\Delta',\Delta), x= x^*(\Delta,\Delta')$, we have that
    \begin{align*}
        \pi^*(\Delta',\Delta)\quad & = p(x',x)- \frac{x'}{\Delta'} = \frac{\gamma(x')}{\gamma(x')+\gamma(x)} - \frac{x'}{\Delta'}\\
        \quad & \underbrace{=}_{\text{FOC}} \frac{\gamma(x')}{\gamma(x')+\gamma(x)}  - \frac{x'\gamma'(x')\gamma(x)}{(\gamma(x)+\gamma(x'))^2} \\
        \quad & = \frac{\gamma(x')^2 + \gamma(x)\gamma(x') -x'\gamma'(x')\gamma(x)}{(\gamma(x)+\gamma(x'))^2} \underbrace{\geq}_{\gamma(x')\geq x'\gamma'(x') } p(x',x)^2.
    \end{align*}

    When $\Delta'\geq\Delta$, by \eqref{FOC_ratioform}, $x'\geq x$, then it follows that
    \begin{equation}
    \begin{split}
        \log\frac{\gamma(x')}{\gamma(x)} = \int_{x}^{x'} \frac{\gamma'(t)}{\gamma(t)}dt \in \left[\int_{x}^{x'} \frac{1}{Ct}dt,\int_{x}^{x'} \frac{1}{t}dt\right] \underbrace{\subseteq}_{\text{by \eqref{lemma1(i)}}} \left[\frac{1}{C}\log\left(\frac{1}{C}\frac{\Delta'}{\Delta}\right), \log\left(C\frac{\Delta'}{\Delta}\right)\right].\label{control_R}
        \end{split}
    \end{equation}
Let $\pi^*_R:=\left[G\left(\frac{1}{C}\log\left(\frac{1}{C}R\right)\right)\right]^2$, where $G(z) = \frac{e^z}{1+e^z}$.
Then we have $\pi^*(\Delta',\Delta)\geq [p(x',x)]^2\geq\pi^*_R$. Moreover, since $\pi^*_R\to1$ as $R\to+\infty$, parts~\ref{pi0} and~\ref{pi-ratio} are proved. For part~\ref{pi-high}, notice that
    \begin{align*}
        1-\pi^*(\Delta',\Delta)-\pi^*(\Delta,\Delta') = \frac{x'\gamma'(x')\gamma(x)+ x\gamma'(x)\gamma(x')}{(\gamma(x)+\gamma(x'))^2} \geq \frac{1}{C}p(x,x')p(x',x).
    \end{align*}
Since $p(x,x')p(x',x)\geq (1-G(\log (CR)))G(\log (CR))$ by \eqref{control_R}, the proof is completed.\end{proof}

%%%\begin{lemma}
%%%\label{assumption_3_ex} Both the homogeneous and ratio-form success
%%%functions admit the following single-battle property: There exist $%
%%%\widehat{\theta }>1$ and $M>0$ such that $\frac{p(x_{-\ell }^{\ast
%%%},x_{\ell }^{\ast })\Delta _{\ell }}{p(x_{\ell }^{\ast },x_{-\ell }^{\ast
%%%})\Delta _{-\ell }}>M$ for all $\frac{\Delta _{\ell }}{\Delta _{-\ell }}>%
%%%\widehat{\theta }$.
%%%\end{lemma}
\subsection{Proof of \Cref{assumption_3_ex}}\label{app:assumption_3_ex}
%%%\begin{lemma}
%%%Both Ratio-form and Homogeneous success functions satisfy Assumption~\ref{assumption_3_ex}. 
%%%\end{lemma}
\begin{lemma}
\label{assumption_3_ex} Both the homogeneous and ratio-form success
functions admit the following single-battle property: There exist $%
\widehat{\theta }>1$ and $M>0$ such that $\frac{p(x_{-\ell }^{\ast
},x_{\ell }^{\ast })\Delta _{\ell }}{p(x_{\ell }^{\ast },x_{-\ell }^{\ast
})\Delta _{-\ell }}>M$ for all $\frac{\Delta _{\ell }}{\Delta _{-\ell }}>%
\widehat{\theta }$.
\end{lemma}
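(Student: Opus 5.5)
The plan is to rewrite the target ratio in terms of the winning-value ratio $\theta:=\Delta_\ell/\Delta_{-\ell}$ and the stronger player's equilibrium winning odds, and then bound it below by a constant that does not depend on $\theta$. Since $p(x_{-\ell}^{\ast},x_\ell^{\ast})=1-p(x_\ell^{\ast},x_{-\ell}^{\ast})$, the quantity in \Cref{assumption_3_ex} equals
\[
\theta\cdot\frac{p(x_{-\ell}^{\ast},x_\ell^{\ast})}{p(x_\ell^{\ast},x_{-\ell}^{\ast})}.
\]
So it suffices to show that the stronger player's winning odds grow at most linearly in $\theta$. This will in fact hold for all $\theta>1$, so any $\widehat\theta>1$ can be used; I would take, say, $\widehat\theta=2$.

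\emph{Homogeneous case.} In the proof of \Cref{prop:gensf} we showed that $x_\ell^{\ast}/x_{-\ell}^{\ast}=\theta$. Hence $p(x_\ell^{\ast},x_{-\ell}^{\ast})=\gamma(\theta)\le 1$ and $p(x_{-\ell}^{\ast},x_\ell^{\ast})=\gamma(1/\theta)$. Next, the identity $\gamma(x)+\gamma(1/x)=1$ gives $\gamma(1)=1/2$. Because $\gamma$ is concave with $\gamma(0)=0$, we have $\gamma(t)\ge t\gamma(1)=t/2$ for every $t\in[0,1]$. Putting these together, the ratio is at least $\theta\cdot\frac{1}{2\theta}=1/2$. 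Therefore $M=1/4$ works, and the required inequality is strict.

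\emph{Ratio-form case.} Here the ratio equals $\theta\,\gamma(x_{-\ell}^{\ast})/\gamma(x_\ell^{\ast})$. I would reuse inequality \eqref{control_R} from the proof of \Cref{lemma:bsf-ex}, applied with $\Delta'=\Delta_\ell\ge\Delta=\Delta_{-\ell}$. Its upper half, which comes from $\gamma'(t)/\gamma(t)\le 1/t$ (a consequence of concavity) together with $x_\ell^{\ast}/x_{-\ell}^{\ast}\le C\theta$ from \eqref{lemma1(i)}, gives $\gamma(x_\ell^{\ast})/\gamma(x_{-\ell}^{\ast})\le C\theta$. The ratio is then at least $1/C$, so $M=1/(2C)$ works.

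I do not expect any step to be a real obstacle. The only point needing care is to check that \eqref{control_R} and \eqref{lemma1(i)} are used in the correct direction, namely the upper bounds on the effort ratio and on the $\gamma$-ratio, with $\Delta'\ge\Delta$. I also need to confirm that $\gamma(1)=1/2$ follows in the homogeneous case, which it does by setting $x=1$ in the symmetry identity.
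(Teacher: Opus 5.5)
Your proof is correct, but it follows a different route from the paper in both cases. For the ratio-form SF, the paper does not use \eqref{control_R}. It substitutes the first-order conditions \eqref{FOC_ratioform} to rewrite the ratio exactly as $\gamma'(x^*_{-\ell})/\gamma'(x^*_\ell)$. This exceeds one by concavity when $x^*_\ell>x^*_{-\ell}$; when $x^*_\ell\le x^*_{-\ell}$, the ratio is at least $\Delta_\ell/\Delta_{-\ell}\ge 1$. That argument gives every $M<1$ and never uses the elasticity constant $C$. Your route is equally valid, but it imports $C$ from $\inf_{x>0}x\gamma'(x)/\gamma(x)>0$ and yields only $M=1/(2C)$.

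For the homogeneous SF, the paper sets $w=1/\theta$ in $\theta[1-\phi(\theta)]/[1-\phi(1/\theta)]$. It shows that this exceeds $M=\gamma'(1)$ for all sufficiently small $w$, so $\widehat\theta$ has to be taken large. You instead use $x^*_\ell/x^*_{-\ell}=\theta$ and the chord bound $\gamma(1/\theta)\ge 1/(2\theta)$. This gives $M=1/4$ uniformly over all $\theta>1$, in two lines.

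Your homogeneous computation also has a substantive advantage. The paper's expression $\theta[1-\phi(\theta)]/[1-\phi(1/\theta)]$ equals $\bigl[p(x^*_{-\ell},x^*_\ell)\Delta_\ell+x^*_\ell\bigr]/\bigl[p(x^*_\ell,x^*_{-\ell})\Delta_{-\ell}+x^*_{-\ell}\bigr]$. That is the quantity $\eta$ from Step 1 of the proof of \Cref{prop:homogeneous-ex}, not the ratio in the statement. The stated ratio is $\theta\gamma(1/\theta)/\gamma(\theta)$, as you computed. For example, with the Tullock SF and $r=1$, the stated ratio is identically $1$, while the paper's expression is $(1+2\theta)/(\theta+2)$. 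So your argument is the one that proves the lemma as stated. Combined with $x^*_\ell/x^*_{-\ell}=\theta\ge 1$ and the mediant inequality, it also gives the lower bound on $\eta$ that the downstream proof actually uses.
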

\begin{proof}
\textbf{Ratio-form SF.} For any $M\in (0,1)$, suppose that $\Delta_{\ell}>\Delta_{-\ell}$. If $x^*_{\ell}>x^*_{-\ell}$, we have that $\gamma'(x^*_{\ell})<\gamma'(x^*_{-\ell})$ and
\begin{align*}
\frac{p(x^*_{-\ell},x^*_{\ell})\Delta_{\ell}}{p(x^*_{\ell}, x^*_{-\ell})\Delta_{-\ell}}
&= \frac{\gamma(x^*_{-\ell})\Delta_{\ell}}{\gamma(x^*_{\ell})\Delta_{-\ell}}
\underbrace{=}_{\text{FOC}}
\frac{\gamma(x^*_{-\ell})\gamma(x^*_{\ell})\gamma'(x^*_{-\ell})}
{\gamma(x^*_{\ell})\gamma'(x^*_{\ell})\gamma(x^*_{-\ell})} \\
&= \frac{\gamma'(x^*_{-\ell})}{\gamma'(x^*_{\ell})}
> 1 > M.
\end{align*}
Otherwise if $x_\ell^*\leq x^*_{-\ell}$, it is clear that 
\[\frac{p(x^*_{-\ell},x^*_{\ell})\Delta_{\ell}}{p(x^*_{\ell}, x^*_{-\ell})\Delta_{-\ell}}\geq
\frac{\Delta_{\ell}}{\Delta_{-\ell}}\geq1>M.
\]

\textbf{Homogeneous SF.} Let $\theta=\frac{\Delta_\ell}{\Delta_{-\ell}}$, then
in this case $\frac{p(x^*_{-\ell},x^*_{\ell})\Delta_{\ell}}{p(x^*_{\ell}, x^*_{-\ell})\Delta_{-\ell}}$
becomes $\frac{\theta (1-\phi(\theta))}{1-\phi(1/\theta)}$.
We have that
\begin{align*}
\frac{\theta (1-\phi(\theta))}{1-\phi(1/\theta)}
&= \frac{\theta\bigl(1-\gamma(\theta)+\theta \gamma'(\theta)\bigr)}
{1-\gamma\!\left(\frac{1}{\theta}\right)+\frac{1}{\theta}\gamma'\!\left(\frac{1}{\theta}\right)}
\underbrace{=}_{w=1/\theta}
\frac{\gamma(w)+w\gamma'(w)}{w\bigl(1-\gamma(w)+w\gamma'(w)\bigr)}.
\end{align*}
Carrying out algebra, $\frac{\theta (1-\phi(\theta))}{1-\phi(1/\theta)}>M$ is equivalent to
\begin{align}\label{ex-eq0}
\frac{\gamma(w)}{w}(1+Mw) + \gamma'(w)(1-Mw) > M.
\end{align}
Let $M=\gamma'(1)$. For any $w\in\left(0,\min\{1,\frac{1}{2M}\}\right)$, concavity implies $\frac{\gamma(w)}{w}\ge \gamma'(w)$ and monotonicity of $\gamma'$ implies $\gamma'(w)>\gamma'(1)=M$. Hence
\[
\frac{\gamma(w)}{w}(1+Mw)+\gamma'(w)(1-Mw)
\;\ge\;
\frac{\gamma(w)}{w}+\frac12\gamma'(w)
\;\ge\;
\frac32\gamma'(w)
\;>\;
\frac32 M,
\]
so \eqref{ex-eq0} holds for all sufficiently small $w$, completing the proof.\end{proof}

\subsection{Proof of \Cref{lemma:noisier}}
\begin{proof}
    Let $p'(x',x) = q\times p(x',x)+ (1-q)\times \frac{1}{2}$ for $q\in (0,1]$.
    Suppose $p$ satisfies \Cref{prop:gensf} and \Cref{lemma:gainratio}.
    We proceed to prove that $p'$ also satisfies \Cref{prop:gensf} and \Cref{lemma:gainratio}.
     For the winning value profile $(\Delta_A,\Delta_B)$, notice that
    \begin{align}
     \arg\max_{x_\ell}\{p'(x_\ell,x_{-\ell})\Delta_\ell - x_\ell\}
     \;\Leftrightarrow\;
     \arg\max_{x_\ell}\{p(x_\ell,x_{-\ell})q\Delta_\ell - x_\ell\}
\label{eq:argmax}
   \end{align}
    Therefore, the equilibrium strategy under $p'$ is $x_{p'}^*(\Delta',\Delta)=x_{p}^*(q\Delta',q\Delta)$, and \Cref{prop:gensf}\ref{cond0} naturally holds. 
    Moreover, we have
    \begin{align*}
        \frac{x_{p'}^*(\Delta',\Delta)}{x_{p'}^*(\Delta,\Delta')} = \frac{x_{p}^*(q\Delta',q\Delta)}{x_{p}^*(q\Delta,q\Delta')} < C \frac{q\Delta'}{q\Delta} = C \frac{\Delta'}{\Delta},
    \end{align*}
    which proves \Cref{prop:gensf}\ref{cond1}. \Cref{lemma:gainratio} follows immediately from the fact that
    \begin{align*}
        \pi_{p'}^*(\Delta',\Delta) = \frac{1-q}{2}+  q\pi_{p}^*(\Delta',\Delta).
    \end{align*}
%%%    For the details, let $\pi_0(\Delta')_{p'} = \frac{1-q}{2}+q\pi_0(\Delta')_{p}$, $\pi^*_{R,p'} = \frac{1-q}{2}+q\pi^*_{R,p}$, $\tpi_{p'} = \frac{1-q}{2}+q\tpi_{p}$, $d_{R'',p'}= qd_{R'',p}$.
\end{proof}

\section{Proofs for Main Results}
\label{sec:proof}
In this appendix, we prove the results for a general success function that admits single-battle properties given in \Cref{prop:gensf} and \Cref{lemma:gainratio}, except for the uniqueness part of \Cref{prop:tow-mpe}, \Cref{prop:homogeneous-ex}, and \Cref{thm:towp}: 
We prove \Cref{prop:homogeneous-ex} with \Cref{assumption_3_ex} in addition to \Cref{prop:gensf,lemma:gainratio}; and
we prove the uniqueness part of \Cref{prop:tow-mpe} and \Cref{thm:towp} for the homogeneous success function (i.e., under \Cref{cond:bsf}). 

\subsection{Proof of \Cref{thm:L} Based on \Cref{prop:gensf,lemma:gainratio}}
\begin{proof}
%%%In a single battle with continuation values (V_A^A>V_A^B)
For a single battle in a dynamic contest, we denote by $V_\ell^{\ell'}$ the continuation value of player $\ell\in{A,B}$ if player $\ell'$ wins this battle. It is useful to prove the following intermediate result. 
    \begin{lemma}\label{lemma:battleshare}
    Consider a single-battle contest with $V_A^A>V_A^B$ and $V_B^B>V_B^A$. The equilibrium payoffs in the single-battle contest $(U_A^*,U_B^*)$ satisfy 
    \begin{equation}\label{eqn:1}
        U_A^*+U_B^*>{\pi^*_1}\max\{V_A^A+V_B^A,V_A^B+V_B^B\}.
    \end{equation}
\end{lemma}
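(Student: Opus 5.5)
Write $\Delta_A := V_A^A - V_A^B > 0$ and $\Delta_B := V_B^B - V_B^A > 0$ for the two winning values. Shifting a player's payoff by a constant does not change the battle equilibrium, so I would write each equilibrium payoff as the losing continuation value plus the battle rent:
\begin{equation*}
U_A^* = V_A^B + \pi^*(\Delta_A,\Delta_B)\,\Delta_A,\qquad U_B^* = V_B^A + \pi^*(\Delta_B,\Delta_A)\,\Delta_B .
\end{equation*}
Here $\pi^*$ is the gain ratio function defined before \Cref{lemma:gainratio}, and the decomposition uses the uniqueness in \Cref{prop:gensf}\ref{cond0}. By symmetry of the claim under relabeling $A\leftrightarrow B$, it suffices to show that $U_A^*+U_B^*$ strictly exceeds $\pi^*_1(V_A^A+V_B^A)$; the bound against $\pi^*_1(V_A^B+V_B^B)$ then follows identically.

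Next, expand $V_A^A = V_A^B+\Delta_A$. This gives
\begin{equation*}
U_A^*+U_B^* = V_A^B + V_B^A + \pi^*(\Delta_A,\Delta_B)\Delta_A + \pi^*(\Delta_B,\Delta_A)\Delta_B ,
\end{equation*}
while the target is $\pi^*_1(V_A^B+\Delta_A+V_B^A)$. The comparison splits into two cases according to which winning value is larger.

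\textbf{Case $\Delta_A\ge\Delta_B$.} \Cref{lemma:gainratio}\ref{pi-ratio} with $R=1$ gives $\pi^*(\Delta_A,\Delta_B)\ge\pi^*_1$. Hence $\pi^*(\Delta_A,\Delta_B)\Delta_A\ge \pi^*_1\Delta_A$. The remaining terms satisfy
\begin{equation*}
V_A^B+V_B^A \ge \pi^*_1(V_A^B+V_B^A),
\end{equation*}
since $\pi^*_1<1$ and continuation values are nonnegative. The term $\pi^*(\Delta_B,\Delta_A)\Delta_B$ is strictly positive, because $x^*>0$ implies both players' rents are positive (for homogeneous SF, $\phi>0$ on $(0,\infty)$). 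This yields the strict inequality.

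\textbf{Case $\Delta_A<\Delta_B$.} Here I would use B's rent instead: $\pi^*(\Delta_B,\Delta_A)\Delta_B\ge\pi^*_1\Delta_B>\pi^*_1\Delta_A$. Combined with the same bound on $V_A^B+V_B^A$ and the nonnegativity of A's rent term, this again gives the strict inequality.

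The main obstacle is justifying that $V_A^B+V_B^A\ge \pi^*_1(V_A^B+V_B^A)$, which needs nonnegative continuation values. In equilibrium each player can guarantee a nonnegative payoff by exerting zero effort throughout, and on infinite histories the payoff $v/2$ is also nonnegative, so continuation values are $\ge0$. If the lemma is meant for arbitrary real $V$'s, I would instead rely on the fact that it will be applied only to equilibrium continuation values, and state that restriction explicitly.

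This lemma is the step in the proof of \Cref{thm:L}: iterating it backward from the shortest terminal history of length $L(\mathcal{M})$ shows that the total payoff is at least $(\pi^*_1)^{L(\mathcal{M})}v$.
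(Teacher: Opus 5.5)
Your proof is correct and is essentially the paper's argument. Both write each payoff as the losing continuation value plus the battle rent, apply $\pi^*\ge\pi^*_1$ to the player with the larger stake, and discard terms that are nonnegative because continuation values are nonnegative; the paper also uses this nonnegativity without stating it, so you are right to flag it. The only difference is bookkeeping: the paper assumes $\Delta_A\ge\Delta_B$ and notes that this makes $V_A^A+V_B^A$ the maximum, whereas you bound $U_A^*+U_B^*$ against each term of the maximum separately, using a case split on the stakes.
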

\begin{proof}
Without loss of generality, we assume that $\frac{V_A^A-V_A^B}{V_B^B-V_B^A}\geq1$. Note that the value of winning is $\Delta_A=V_A^A-V_A^B$ for player $A$ and $\Delta_B=V_B^B-V_B^A$ for player $B$. 
We have that 
\[
\begin{split}
    U_A^*+U_B^*&=V_A^B+(V_A^A-V_A^B)\pi_A^*+V_B^A+(V_B^B-V_B^A)\pi_B^*\\
    &=(V_A^A+V_B^A)\pi_A^*+(V_A^B+V_B^B)\pi_B^*+\left(1-\pi_A^*-\pi_B^*\right)(V_A^B+V_B^A).
\end{split}
\]
Note that $\pi_A^*+\pi_B^*=p_A^*+p_B^*-\frac{x_A^*}{\Delta_A}-\frac{x_B^*}{\Delta_B}\leq1$. 
Further, since $\frac{V_A^A-V_A^B}{V_B^B-V_B^A}\geq1$, $V_A^A+V_B^A\geq V_A^B+V_B^B$. 
As a result, 
$ U_A^*+U_B^*> (V_A^A+V_B^A)\pi_A^*=\max\{V_A^A+V_B^A,V_A^B+V_B^B\}\pi_A^*$.
Since $\Delta_A\geq\Delta_B$, by \Cref{lemma:gainratio}\ref{pi-ratio}, $\pi_A^*\geq{\pi^*_1}$. This completes the proof. 
\end{proof}
\Cref{lemma:battleshare} means that the expected aggregate payoff in a battle is higher than a factor ${\pi^*_1}$ times the maximum of aggregate continuation payoff across both possible outcomes. Applying \Cref{lemma:battleshare} repeatedly along the shortest path to a terminal node (which has aggregate payoff $v$) implies that the initial aggregate payoff is no less than ${(\pi^*_1)}^{L(\mathcal M)}v$. Therefore, the expected total effort is less than $\left[1-{(\pi^*_1)}^{L(\mathcal M)}\right]v$. 
\end{proof}

\subsection{Proof for the Existence Part of \Cref{prop:tow-mpe} Based on \Cref{prop:gensf,lemma:gainratio}}
\begin{proof}
Our goal is to construct a mapping from a set of values $(v_i)_{-N\leq i\leq N}$ to a new set of values, such that the fixed point of this mapping satisfies \eqref{eqn:mpe}, and then prove that a fixed point exists for this mapping. 

First, we define the following function $\Pi^*(\Delta',\Delta):[0,v]^2\to[0,v]$, which augments the equilibrium gain function for a single battle in a natural way: 
\begin{equation}\label{def:payoff}
\Pi^*(\Delta',\Delta):=
\begin{cases}
p(x^*(\Delta',\Delta),x^*(\Delta,\Delta'))\Delta'-x^*(\Delta',\Delta),&\text{ if }\Delta',\Delta>0,\\
\pi_0(\Delta')\Delta',&\text{ if }\Delta=0.
\end{cases}
\end{equation}
By \Cref{lemma:gainratio}\ref{pi0}, $\Pi^*(\Delta',\Delta)$ is continuous on $[0,v]^2$. Moreover, it can be verified that the function $\bm\Pi^*(\Delta',\Delta):=(\Pi^*(\Delta',\Delta),\Pi^*(\Delta,\Delta'))$ is continuous on $[0,v]^2$. 

Next, we proceed to define a mapping from $\mathcal V:=\{\bm v:=(v_i)_{i=-(N-1)}^{N-1}:0\leq v_{-N+1}\leq v_{-N+2}\leq\ldots\leq v_{N-1}\leq v\text{ and }v_i+v_{-i}\leq v,\text{ for all }-(N-1)\leq i\leq N-1\}$ to itself. Let $v_{-N}=0$, $v_N=v$, and $\bm\Phi:\mathcal V\to\mathcal V$. $\bm v'=\bm\Phi(\bm v)$ is recursively given by the following, with initial values $v'_N=v$ and $v'_{-N}=0$:
\begin{align}
\begin{pmatrix}
v_i'\\v_{-i}'
\end{pmatrix}
&=
\begin{pmatrix}
v_{i-1}\\ 
v'_{-i-1}
\end{pmatrix}
+\bm\Pi^*(v'_{i+1}-v_{i-1},v_{-i+1}-v'_{-i-1}),\text{ for }i=N-1,\ldots,1,
\label{eqn:phi1}
\\
v_0'&=v_{-1}'+\Pi^*(v'_1-v'_{-1},v'_1-v'_{-1}).
\label{eqn:phi2}
\end{align}
It is easy to see that the any fixed point of $\bm\Phi$ corresponds to the value function in a symmetric MPE of the tug-of-war, and vice versa. 

We now verify that $\bm \Phi$ is well-defined. 
First, for $i=N-1$, it is clear that $v'_{i+1}-v_{i-1}\in[0,v]$ and $v_{-i+1}-v'_{-i-1}\in[0,v]$. 
Suppose that $v'_{i+1}-v_{i-1}\in[0,v]$ and $v_{-i+1}-v'_{-i-1}\in[0,v]$ hold for $i=i'\in\{1,\ldots,N-1\}$,
we show below that they also hold for $i=i'-1$. 
Since $0\leq\Pi^*(\Delta',\Delta)\leq \Delta'$ for all $(\Delta',\Delta)\in[0,v]^2$, 
it follows from \eqref{eqn:phi1} that $v'_{i'+1}\geq v_{i'}'\geq v_{i'-1}$.
This implies that $v \geq v_{i'}'\geq v_{i'-2}$, and $v'_{i+1}-v_{i-1}\in[0,v]$ holds for $i=i'-1$.
Again, since $\Pi^*(\Delta',\Delta)\leq \Delta'$ for all $(\Delta',\Delta)\in[0,v]^2$, it follows from \eqref{eqn:phi1} that $v_{-i'}'\leq v_{-i'+1}\leq v_{-i'+2}$, so $v_{-i+1}-v'_{-i-1}\in[0,v]$ holds for $i=i'-1$. Finally, when $i=1$, \eqref{eqn:phi1} implies that $v_2'\geq v_1'\geq v_0\geq v_{-1}'$, so $v'_1-v'_{-1}\in[0,v]$. 

Then we verify that $\Phi(\bm v)\in\mathcal V$. From \eqref{eqn:phi1}, it is easy to see that $v_i'\leq v_{i+1}'$ and $v_{-i}'\geq v_{-i-1}'$ for $i=N-1,\ldots,1$. We have shown above that $v_1'\geq v_0\geq v_{-1}'$. Therefore, $v\geq v_{N-1}'\geq\ldots\geq v_1'\geq v_0\geq v_{-1}'\geq\ldots\geq v_{-N+1}'\geq0$. By \eqref{eqn:phi2}, $v_1'\geq v_0'\geq v_{-1}'$, and thus $1\geq v_{N-1}'\geq\ldots\geq v_1'\geq v_0'\geq v_{-1}'\geq\ldots\geq v_{-N+1}'\geq0$. 
Note that $\Pi^*(\Delta',\Delta)+\Pi^*(\Delta,\Delta')\leq\max\{\Delta',\Delta\}$. By \eqref{eqn:phi2}, for $1\leq i\leq (N-1)$,
\[
\begin{split}
v_i'+v_{-i}'&\leq v_{i-1}+v'_{-i-1}+\max\{v'_{i+1}-v_{i-1},v_{-i+1}-v'_{-i-1}\}\\
&=\max\{v'_{i+1}+v'_{-i-1},v_{-i+1}+v_{i-1}\}\leq\max\{v'_{i+1}+v'_{-i-1},v\}.
\end{split}
\]
Applying the inequality iteratively yields that $v_i'+v_{-i}'\leq \max\{v'_{N}+v'_{-N},v\}=v$. Finally, by \eqref{eqn:phi2}, 
$2v_0'\leq 2v_{-1}'+\max\{v'_1-v'_{-1},v'_1-v'_{-1}\}=v_{-1}'+v_1'\leq v$. 

Since $\mathcal V$ is non-empty, convex, and compact, and $\bm\Phi$ is continuous due to the continuity of $\bm\Pi^*$, by Brouwer's fixed-point theorem, there exists $\bm v^*\in\mathcal V$ such that $\bm v^*=\bm \Phi(\bm v^*)$.
%%%It is straightforward to verify that 

It only remains to show that $0<v_{-N+1}^*<\ldots<v_{N-1}^*<v$. 
Suppose that there exists $-N+1\leq i'\leq N-1$ such that $v^*_{i'-1}=v_{i'}^*<v_{i'+1}^*$. Then $v_{i'}^*=v^*_{i'-1}+\Pi^*(v_{i'+1}^*-v_{i'-1}^*,v_{-i'+1}^*-v_{-i'-1}^*)>v_{i'-1}^*$, a contradiction. Therefore, if $v_i^*=v_{i+1}^*$ for some $i$, then $v_i^*=v_j^*$ for all $j> i$. Suppose that $i'$ is the smallest integer satisfying $v_{i'}^*=v_{i'+1}^*$, then 
\begin{equation}
\label{eqn:exist-contradiction}
0=v_{-N}^*<v_{-N+1}^*<\ldots<v_{i'}^*=\ldots=v_{N}^*=v.
\end{equation}
 If $i'>0$, from $v_{i'}^*=v_{i-1}^*+\Pi^*(v_{i'+1}^*-v_{i'-1}^*,v_{-i'+1}^*-v_{-i'-1}^*)=v_{i'+1}^*$, it follows that $v_{-i'+1}^*=v_{-i'-1}^*$, contradicting \eqref{eqn:exist-contradiction}.
 If $i'\leq0$, $v^*_{i'}+v_{-i'}^*=2v$, contradicting $v_{-i}^*+v_i^*\leq v$.\end{proof}

\subsection{Proof for the Uniqueness Part of \Cref{prop:tow-mpe} under \Cref{cond:bsf}}
\begin{proof}
See the proof of \Cref{thm:towp} in \Cref{sec:towp-proof}.
\end{proof}

\subsection{Proof of \Cref{lemma:tow-reinf} Based on \Cref{prop:gensf,lemma:gainratio}}
\begin{proof}
For $-(N-1)\leq i\leq N-1$, the equilibrium condition \eqref{eqn:mpe} can be rearranged in the following two ways: 
\begin{align}
V(i) - V(i-1) &= \pi(i)\,\bigl[V(i+1) - V(i-1)\bigr], \label{eqn:tow-eqm-1}\\
V(i+1) - V(i) &= \bigl[1-\pi(i)\bigr]\,\bigl[V(i+1) - V(i-1)\bigr]. \label{eqn:tow-eqm-2}
\end{align}
Replacing $i$ with $-i$ in \eqref{eqn:tow-eqm-1} yields
\begin{equation}
V(-i) - V(-i-1) = \pi(-i)\,\bigl[V(-i+1) - V(-i-1)\bigr], \label{eqn:tow-eqm-3}
\end{equation}
and dividing \eqref{eqn:tow-eqm-2} by \eqref{eqn:tow-eqm-3} yields
\begin{equation}
\frac{V(i+1) - V(i) }{V(-i) - V(-i-1)} = \frac{1-\pi(i)}{\pi(-i)}\frac{V(i+1) - V(i-1)}{V(-i+1) - V(-i-1)}. \label{eqn:tow-eqm-4}
\end{equation}
Replacing $i$ with $-(i+1)$ in \eqref{eqn:tow-eqm-4}, we have that 
\begin{equation}
\frac{V(-i) - V(-i-1) }{V(i+1) - V(i)} = \frac{1-\pi(-i-1)}{\pi(i+1)}\frac{V(-i) - V(-i-2)}{V(i+2) - V(i)}\text{ for }-(N-2)\leq i\leq N-2. \label{eqn:tow-eqm-5}
\end{equation}
Finally, multiplying \eqref{eqn:tow-eqm-4} and \eqref{eqn:tow-eqm-5} and simple rearrangement give \eqref{eqn:tow-win-reinf}.
\end{proof}

\subsection{Proof of \Cref{prop:tow-adv} Based on \Cref{prop:gensf,lemma:gainratio}}
\begin{proof}
It is useful to establish the following intermediate result first. 
\begin{lemma}\label{lemma:int-N}
There exists a positive integer $N'$ such that for all $v>0$, $i\geq N'$ and $N>i$, $\pi(i;N,v)>\tpi>\frac12$. 
\end{lemma}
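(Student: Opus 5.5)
The plan is to reduce the claim to a statement about the ratio of winning values and then use the multiplicative recursion in \Cref{lemma:tow-reinf}. Write $r_i:=\Delta(i)/\Delta(-i)$ for $0\le i\le N-1$. By the strict monotonicity $0=V(-N)<\cdots<V(N)=v$ from \Cref{prop:tow-mpe}, all $\Delta(\pm i)>0$, so $r_i$ is well defined and every $\pi(\cdot)\in(0,1)$. Also $r_0=1$ by symmetry.

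By \Cref{lemma:gainratio}\ref{pi-ratio}, there exist $R'>1$ and $\tpi>\frac12$ such that $\pi(i)\ge\tpi$ whenever $r_i\ge R'$. (One can shrink $\tpi$ slightly to get the strict inequality, or use the fact that $R'$ may be taken larger.) So it suffices to find $N'$, independent of $N$ and $v$, with $r_i\ge R'$ for all $N'\le i\le N-1$. This uses only the ratio $\Delta(i)/\Delta(-i)$, so $v$ drops out automatically.

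Next, \Cref{lemma:tow-reinf} gives, for $0\le i\le N-2$,
\begin{equation*}
r_{i+1}=\frac{1-\pi(-(i+1))}{\pi(i+1)}\cdot\frac{1-\pi(i)}{\pi(-i)}\cdot r_i ,
\end{equation*}
with both factors exceeding $1$. Hence $r_i$ is increasing in $i$, so once $r_i\ge R'$ it stays above $R'$ for all larger $i$ up to $N-1$.

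The key step is a uniform growth rate while the ratio is still moderate. Take $d:=d_{R'}$ from \Cref{lemma:gainratio}\ref{pi-high}. If $1\le r_i\le R'$, that lemma gives $\pi(i)+\pi(-i)\le 1-d$. Therefore
\begin{equation*}
\frac{1-\pi(i)}{\pi(-i)}\ge\frac{\pi(-i)+d}{\pi(-i)}\ge 1+d ,
\end{equation*}
using $\pi(-i)\le1$. If also $r_{i+1}\le R'$, the same argument bounds the other factor below by $1+d$, giving $r_{i+1}\ge(1+d)^2r_i$. If instead $r_{i+1}>R'$, we have already reached the threshold.

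Consequently, starting from $r_0=1$, the ratio exceeds $R'$ after at most $k$ steps, where $k$ is the smallest integer with $(1+d)^{2k}\ge R'$. Setting $N':=k$, which depends only on the success function, finishes the argument for all $N>i\ge N'$.

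The main obstacle I anticipate is bookkeeping at the boundary. The recursion of \Cref{lemma:tow-reinf} is only available for $i\le N-2$, so I need to check that reaching $i\le N-1$ is enough: $r_{N-1}$ is obtained from $r_{N-2}$ by that recursion. I also need to handle the case where the threshold is crossed in the middle of a two-factor step; monotonicity of $r_i$ takes care of this.
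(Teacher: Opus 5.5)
Your proof is correct and follows the paper's own argument. Both use \Cref{lemma:tow-reinf} to show that $\Delta(i)/\Delta(-i)$ increases from $1$, then \Cref{lemma:gainratio}\ref{pi-high} for a uniform geometric growth factor while the ratio is at most $R'$, and finally \Cref{lemma:gainratio}\ref{pi-ratio}. The differences are cosmetic: you use both factors to get $(1+d_{R'})^2$ per step, whereas the paper uses only $\frac{1-\pi(i)}{\pi(-i)}\ge 1+d_{R'}$; and you explicitly handle the strict inequality $\pi(i)>\tpi$, which the paper's proof only delivers as $\pi(i)\geq\tpi$.
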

\begin{proof}
By \Cref{lemma:tow-reinf}, $\frac{\Delta(i)}{\Delta(-i)}$ strictly increases in $i$ with $\frac{\Delta(0)}{\Delta(-0)}=1$. 
For all $i$ such that $\frac{\Delta(i)}{\Delta(-i)}\leq R'$, by \Cref{lemma:gainratio}\ref{pi-high}, $\frac{1-\pi(i)}{\pi(-i)}\geq 1+d_{R'}$. 
This and \eqref{eqn:tow-win-reinf} imply that $\frac{\Delta(i)}{\Delta(-i)}\geq(1+d_{R'})^i$ for all $i\geq0$ satisfying $\frac{\Delta(i)}{\Delta(-i)}\leq R'$. Therefore, there exists a positive integer $N'$ (independent of $N$ and $v$) such that $\frac{\Delta(i)}{\Delta(-i)}\geq R'$ for all $i\geq N'$. 
%%%The selection of $N'$ is unrelated to $v$, $N$.
We know from \Cref{lemma:gainratio}\ref{pi-ratio} that $\pi(i)\geq\tpi$ for $i\geq N'$. 
\end{proof}
Dividing \eqref{eqn:tow-eqm-1} by \eqref{eqn:tow-eqm-2}, we have $\frac{V(i) - V(i-1)}{V(i+1) - V(i)} = \frac{\pi(i)}{1-\pi(i)}$, which can be rewritten as $V(i+1) - V(i)=\frac{1-\pi(i)}{\pi(i)}[V(i) - V(i-1)]$. Therefore, 
\[
v-V(i)=\sum_{j=i}^{N-1} [V(j+1)-V(j)]=[V(-N+1)-V(-N)]\sum_{j=i}^{N-1} \left[\prod_{k=-N+1}^{j} \frac{1-\pi(k)}{\pi(k)}\right],
\]
which yields that
%%\begin{align}
%%    \frac{V(i)}{v}\quad & = \frac{\sum_{j=-N}^{i-1}(V(j+1)-V(j))}{\sum_{j=-N}^{N-1}(V(j+1)-V(j))} \notag\\
%%    \quad & = \frac{\sum_{j=-N}^{i-1}\left[(V(1-N)-V(N))\Pi_{k=1-N}^j \frac{1-\pi(k)}{\pi(k)}\right]}{\sum_{j=-N}^{N-1}\left[(V(1-N)-V(N))\Pi_{k=1-N}^j \frac{1-\pi(k)}{\pi(k)}\right]} \notag\\
%%    \quad & = \frac{\sum_{j=-N}^{i-1}\left[\Pi_{k=1-N}^j \frac{1-\pi(k)}{\pi(k)}\right]}{\sum_{j=-N}^{N-1}\left[\Pi_{k=1-N}^j \frac{1-\pi(k)}{\pi(k)}\right]}\label{expression_valuation}
%%\end{align}
%%It follows from \eqref{expression_valuation} that
\begin{align}
    \frac{v-V(i+1)}{v-V(i)} =
     \frac{\sum_{j=i+1}^{N-1} \left[\prod_{k=-N+1}^{j} \frac{1-\pi(k)}{\pi(k)}\right]}{\sum_{j=i}^{N-1} \left[\prod_{k=-N+1}^{j} \frac{1-\pi(k)}{\pi(k)}\right]} 
     = \frac{\sum_{j=i+1}^{N-1}\left[\Pi_{k=i+1}^j \frac{1-\pi(k)}{\pi(k)}\right]}{1+\sum_{j=i+1}^{N-1}\left[\Pi_{k=i+1}^j \frac{1-\pi(k)}{\pi(k)}\right]}.\label{expression_valuation2}
\end{align}
By \Cref{lemma:int-N}, for $i>N'$, we have that
\begin{align}\label{expression-ratio1}
    \frac{\sum_{j=i+1}^{N-1}\left[\prod_{k=i+1}^j \frac{1-\pi(k)}{\pi(k)}\right]}{1+\sum_{j=i+1}^{N-1}\left[\prod_{k=i+1}^j \frac{1-\pi(k)}{\pi(k)}\right]} \leq \frac{\sum_{j=1}^{+\infty}(\frac{1-\tpi}{\tpi})^j}{1+\sum_{j=1}^{+\infty}(\frac{1-\tpi}{\tpi})^j}=\frac{1-\tilde{\pi}}{\tilde{\pi}}.
\end{align}
Consequently,
\begin{align}\label{expression-ratio2}
    \frac{v-V(N'+i)}{v}=  \frac{v-V(N')}{v} \prod_{j = N'}^{N'+i-1} \frac{v-V(j+1)}{v-V(j)} \leq  \frac{v-V(N')}{v} \left(\frac{1-\tilde{\pi}}{\tilde{\pi}}\right)^i,
\end{align}
which implies that $ \frac{v-V(N'+i)}{v}\to0$ as $i\to+\infty$
holds uniformly for $N,v$ and over all symmetric MPE.
Notice that $V(N'+i;N,v) \leq  vQ(N'+i;N,v)$. It then follows that
\begin{align*}
    Q(N'+i;N,v) \geq \frac{V(N'+i)}{v} \to 1
\end{align*}
uniformly for any $v, N$ and over all symmetric MPE.
\end{proof}

\subsection{Proof of \Cref{thm:tow-rent} Based on \Cref{prop:gensf,lemma:gainratio}}

\begin{proof}
Using \eqref{expression_valuation2} with $i = -1$, we have that
\begin{align}
    \frac{v-V(0)}{v}<\frac{v - V(0)}{v-V(-1)} = \frac{\sum_{j=0}^{N-1}\left[\prod_{k=0}^j \frac{1-\pi(k)}{\pi(k)}\right]}{1+\sum_{j=0}^{N-1}\left[\prod_{k=0}^j \frac{1-\pi(k)}{\pi(k)}\right]} \label{eq:V(0)}.
\end{align} 
By \Cref{lemma:int-N} and \Cref{lemma:gainratio}\ref{pi-ratio}, the following estimation holds.
\begin{align}
    \sum_{j=0}^{N-1}\left[\prod_{k=0}^j \frac{1-\pi(k)}{\pi(k)}\right]\quad & 
     \leq \sum_{j=0}^{N'-1}\left(\prod_{k=0}^j \frac{1-{\pi^*_1}}{{\pi^*_1}}\right) + \prod_{k=0}^{N'-1} \left(\frac{1-{\pi^*_1}}{{\pi^*_1}} \right)\sum_{j=N'}^{+\infty} \left( \prod_{k=N'}^{j}\frac{1-\tpi}{\tpi}\right) \notag\\
    \quad & \leq \sum_{j=0}^{N'-1}\left(\prod_{k=0}^j \frac{1-{\pi^*_1}}{{\pi^*_1}}\right) + \left(\frac{\frac{1-\tpi}{\tpi}}{1-\frac{1-\tpi}{\tpi}}\right)\prod_{k=0}^{N'-1} \left(\frac{1-{\pi^*_1}}{{\pi^*_1}} \right).
    \label{eq:V(0)_est}
\end{align}
Plug \eqref{eq:V(0)_est} into \eqref{eq:V(0)}, we have that 
\begin{align*}
    \frac{V(0)}{v}\geq \left[1+\sum_{j=0}^{N'-1}\left(\prod_{k=0}^j \frac{1-{\pi^*_1}}{{\pi^*_1}}\right) + \left(\frac{\frac{1-\tpi}{\tpi}}{1-\frac{1-\tpi}{\tpi}}\right)\prod_{k=0}^{N'-1} \left(\frac{1-{\pi^*_1}}{{\pi^*_1}} \right)\right]^{-1} =: \alpha,
\end{align*}
which finishes the proof.
\end{proof}

\subsection{Proof of \Cref{prop:homogeneous-ex}  Based on Lemmas \ref{prop:gensf}, \ref{lemma:gainratio}, and \ref{assumption_3_ex}}
\begin{proof}
To prove this result, we show that there exists $\widetilde\alpha>0$ depending only on the success function such that for every sequence $\{M_k\}_{k=1}^K$ of exchangeable contests with $M_{k+1}$ being the $N_{k+1}$-extension of $M_k$,
\[
\frac1K\sum_{k=1}^K V_{0;k}\ge \widetilde\alpha v.
\]
Then, when $\lim_{k\to+\infty} V_{0;k}$ exists, it coincides with $\lim_{K\to+\infty}\frac1K\sum_{k=1}^K V_{0;k}$, and thus is also bounded from below.

Consider a sequence of exchangeable contests $\{M_k\}_{k=1}^K$ where $M_{k+1}$ is the $N_{k+1}$-extension of $M_k$.
In contest $M_k$, the game ends once some player achieves $N_k$ consecutive wins from the start.
Let $V_{i;k}$ be the equilibrium continuation value along the \emph{consecutive path} in contest $M_k$:
\begin{itemize}
\item $V_{i;k}$ ($i\ge 0$): value of the player who currently has a streak of $i$ consecutive wins;
\item $V_{-i;k}$ ($i\ge 0$): value of the player who currently has a streak of $i$ consecutive losses.
\end{itemize}
Note that $V_{N_k;k}=v$ and $V_{-N_k;k}=0$.

The extension structure implies the following along the consecutive path: at state $i\in\{1,\dots,N_k-1\}$ in $M_k$,
if the current streak leader loses the next battle, the continuation becomes contest $M_{k-1}$ at state $i-1$.
Hence, for $-N_k+1\leq i\leq N_k-1$, the stake and stake ratio are
\[
\Delta_{i;k}:=V_{i+1;k}-V_{i-1;k-1},\quad
\theta_{i,k}:=\frac{\Delta_{i;k}}{\Delta_{-i,k}}.
\]
We proceed to prove the desired result in the following steps. 

\textbf{Step 1 (useful bounds in a single-battle).}
We set $\theta^*>\max\{\widehat\theta,R'\}$.
Let 
\[
 \eta(\Delta_\ell,\Delta_{-\ell}):=\frac{1-\pi^*(\Delta_\ell,\Delta_{-\ell})}{1-\pi^*(\Delta_{-\ell},\Delta_{\ell})}\frac{\Delta_\ell}{\Delta_{-\ell}}.
\]
We show next that
\begin{equation}\label{eq:forward-vs-lossdiag}
\eta(\Delta_\ell,\Delta_{-\ell})\leq C:=\max\left\{\frac{\theta^*}{d''_{\theta^*}},\frac{1}{\min\{M,1\}}\right\}\text{ if }\Delta_\ell/\Delta_{-\ell}\leq \theta^*,
\end{equation}
where $\widehat\theta$ and $M$ are given in \Cref{assumption_3_ex}. 
On one hand, if $\Delta_\ell/\Delta_{-\ell}\in [\frac{1}{\theta^*},\theta^*]$, by \Cref{lemma:gainratio}\ref{pi-high}, $\eta(\Delta_\ell,\Delta_{-\ell})\in [\frac{d''_{\theta^*}}{\theta^*},\frac{\theta^*}{d''_{\theta^*}}]$. On the other hand, if $\Delta_\ell/\Delta_{-\ell}<1/\theta^*$, $\Delta_{-\ell}/\Delta_{\ell}> {\theta^*}>\widehat \theta$, thus
    \begin{align*}
        \eta(\Delta_{-\ell},\Delta_{\ell}) = \frac{p(x^*_{-\ell},x^*_l)\Delta_{\ell}+x^*_l}{p(x^*_l, x^*_{-\ell})\Delta_{-\ell}+x^*_{-\ell}}\geq \min\left\{\frac{p(x^*_{-\ell},x^*_l)\Delta_{\ell}}{p(x^*_l, x^*_{-\ell})\Delta_{-\ell}}, \frac{x^*_l}{x^*_{-\ell}}\right\}\geq \min\{M,1\}.
    \end{align*}
    which indicates that $\eta(\Delta_{\ell},\Delta_{-\ell})\leq \frac{1}{\min\{M,1\}}$.

It is also worth pointing out that, by \Cref{lemma:gainratio}\ref{pi-ratio}, 
 \begin{equation}\label{eq:theta-star}
\frac{1-\pi^*(\Delta_\ell,\Delta_{-\ell})}{\pi^*(\Delta_\ell,\Delta_{-\ell})}\leq\frac{1-\widetilde \pi}{\widetilde\pi}\text{ if }\Delta_\ell/\Delta_{-\ell}>\theta^*.
\end{equation}

\textbf{Step 2 (within-contest telescoping).}
Since $V_{N_k;k}=v$,
\[
v-V_{1;k}=\sum_{i=1}^{N_k-1}(V_{i+1;k}-V_{i;k}).
\]
Summing over $k=2,\dots,K$ and using the base case $v-V_{1;1}=0$ yields
\begin{equation}\label{eq:big-telescope}
Kv-\sum_{k=1}^K V_{1;k}=\sum_{(k,i)\in\mathcal R}(V_{i+1;k}-V_{i;k}),
\end{equation}
where
\[
\mathcal R := \{(k,i): 2\le k\le K,\; 1\le i\le N_k-1\}.
\]

%%%By Lemma~\ref{lem:phi-limits}, pick $\theta^\ast>1$ large enough that
%%%\begin{equation}\label{eq:theta-star}
%%%\frac{\phi(\theta^\ast)}{1-\phi(\theta^\ast)} > 3
%%%\quad\Longleftrightarrow\quad
%%%\frac{1-\phi(\theta^\ast)}{\phi(\theta^\ast)}<\frac13.
%%%\end{equation}
%%%Also choose $\theta^\ast>\widehat\theta$, where $\widehat\theta$ and $M$ come from Lemma~\ref{lem:eta-large}.
%%%Define
%%%\[
%%%\beta:=\sup_{\theta\in[1/\theta^\ast,\theta^\ast]} \eta(\theta) <+\infty
%%%\qquad\text{and}\qquad
%%%C:=\max\{\beta,1/M\}.
%%%\]
%%%Then $\eta(\theta)\le C$ for all $\theta\in(0,\theta^\ast]$:
%%%if $\theta\in[1/\theta^\ast,\theta^\ast]$, this is by $\beta$;
%%%if $\theta<1/\theta^\ast$, then $1/\theta>\theta^\ast>\widehat\theta$, so Lemma~\ref{lem:eta-large} gives $\eta(1/\theta)>M$ and by \eqref{eq:eta-involution} we have $\eta(\theta)=1/\eta(1/\theta)<1/M\le C$.

\textbf{Step 3 (split the RHS of \eqref{eq:big-telescope}).}
Split $\mathcal R$ into
\[
\mathcal R_{\text{large}}:=\{(k,i)\in\mathcal R:\theta_{i,k}>\theta^\ast\},
\qquad
\mathcal R_{\text{small}}:=\{(k,i)\in\mathcal R:\theta_{i,k}\le \theta^\ast\}.
\]
Then
\[
\sum_{(k,i)\in\mathcal R}(V_{i+1;k}-V_{i;k})
=
\sum_{(k,i)\in\mathcal R_{\text{large}}}(V_{i+1;k}-V_{i;k})
+
\sum_{(k,i)\in\mathcal R_{\text{small}}}(V_{i+1;k}-V_{i;k}).
\]

\textbf{Step 4 (bound the large-$\theta$ part).}
If $(k,i)\in\mathcal R_{\text{large}}$, then by \eqref{eq:theta-star},
\[
V_{i+1;k}-V_{i;k}
=
\frac{1-\pi^*(\Delta_{i;k},\Delta_{-i;k})}{\pi^*(\Delta_{i;k},\Delta_{-i;k})}(V_{i;k}-V_{i-1;k-1})
\leq 
\frac{1-\widetilde \pi}{\widetilde\pi} (V_{i;k}-V_{i-1;k-1}).
\]
Hence
\[
\sum_{(k,i)\in\mathcal R_{\text{large}}}(V_{i+1;k}-V_{i;k})
\leq
\frac{1-\widetilde \pi}{\widetilde\pi}\sum_{(k,i)\in\mathcal R}(V_{i;k}-V_{i-1;k-1})
\le
\frac{1-\widetilde \pi}{\widetilde\pi}\Bigl(Kv-\sum_{k=1}^K V_{0;k}\Bigr).
\]
To see the reason for the last inequality, group the terms of $\sum_{(k,i)\in\mathcal R}(V_{i;k}-V_{i-1;k-1})$ by diagonals $d:=k-i$.
Along a fixed diagonal, the sum telescopes:
\[
\sum_{\substack{(k,i)\in\mathcal R\\k-i=d}}
\bigl(V_{i;k}-V_{i-1;k-1}\bigr)
=
V_{i_{\max};\,d+i_{\max}}-V_{0;d},
\]
where $i_{\max}$ is the largest $i$ on that diagonal in $\mathcal R$.
Because values are bounded above by the prize, $V_{i_{\max};\,d+i_{\max}}\le 1$, so the diagonal sum is at most $v-V_{0;d}$.
Summing over $d=1,\dots,K$ yields the inequality.

\textbf{Step 5 (bound the small-$\theta$ part).}
If $(k,i)\in\mathcal R_{\text{small}}$, then $\theta_{i,k}\le\theta^\ast$ and by \eqref{eq:forward-vs-lossdiag},
\[
V_{i+1;k}-V_{i;k}=\eta(\theta_{i,k})(V_{1-i;k-1}-V_{-i;k})\le C (V_{1-i;k-1}-V_{-i;k}).
\]
Therefore
\[
\sum_{(k,i)\in\mathcal R_{\text{small}}}(V_{i+1;k}-V_{i;k})
\le
C\sum_{(k,i)\in\mathcal R}(V_{1-i;k-1}-V_{-i;k})
\le
C\sum_{k=1}^K V_{0;k}.
\]
To see why the last inequality holds, again fix a diagonal $d:=k-i$.
For some $d$, a summand on that diagonal is
\[
V_{1-j;\,d+j-1}-V_{-j;\,d+j}=V_{-(j-1);\;d+j-1}-V_{-j;\;d+j},
\]
for $j=\{1,\ldots,j_{\max}\}$, where $j_{\max}$ is the largest number such that $(j+d,j)\in\mathcal R$.
Summing over all such $j$ on the diagonal telescopes to
\[
V_{0;d}-V_{-j_{\max};\,d+j_{\max}}\le V_{0;d},
\]
since values are nonnegative.
Summing over $d=1,\dots,K$ gives the desired inequality.

\textbf{Step 6 (combine with \eqref{eq:big-telescope}).}
Plugging Steps 4--5 into \eqref{eq:big-telescope} gives
\[
Kv-\sum_{k=1}^K V_{1;k}
\le
C\sum_{k=1}^K V_{0;k}
+\frac{1-\widetilde \pi}{\widetilde\pi}\Bigl(Kv-\sum_{k=1}^K V_{0;k}\Bigr).
\]
Since $V_{1;k}\ge V_{0;k}$ for each $k$, we have $\sum V_{0;k}\le \sum V_{1;k}$, so
\[
Kv-\sum_{k=1}^K V_{1;k}
\leq
C\sum_{k=1}^K V_{1;k}+\frac{1-\widetilde \pi}{\widetilde\pi}Kv.
\]
Rearrange:
\begin{equation}\label{eq:avgV1}
\frac1K\sum_{k=1}^K V_{1;k}\ge \frac{(2\tpi-1)}{(C+1)\tpi}v.
\end{equation}

\textbf{Step 7 (from $V_{1;k}$ to $V_{0;k}$).}
By \Cref{lemma:gainratio}\ref{pi-ratio},
\[
V_{0;k}=[1-\pi^*(\Delta_{0;k},\Delta_{0;k})]V_{-1;k}+\pi^*(\Delta_{0;k},\Delta_{0;k})V_{1;k}\ge \pi_1^*V_{1;k}.
\]
Average and use \eqref{eq:avgV1}:
\[
\frac1K\sum_{k=1}^K V_{0;k}\ge  \frac{\pi_1^*(2\tpi-1)}{(C+1)\tpi}v.
\]
The proof is completed by setting $\widetilde\alpha:= \frac{\pi_1^*(2\tpi-1)}{(C+1)\tpi}>0$.\end{proof}

\subsection{Proof of the Existence Part of \Cref{prop:cw-mpe} Based on \Cref{prop:gensf,lemma:gainratio}}
\begin{proof}
Our approach is to build a mapping whose fixed point is $(\widehat V(-1),\widehat V(1))$, and show that a fixed point exists for the mapping.

Recall the single-battle equilibrium net gain (over the losing continuation payoff) function $\bm\Pi^*(\Delta',\Delta):=(\Pi^*(\Delta',\Delta),\Pi^*(\Delta,\Delta'))$, whose components are defined by \eqref{def:payoff} in the proof of \Cref{prop:tow-mpe}. Define
%%%\medskip
%%%\noindent\textbf{Step 1: A single-battle equilibrium operator.}
%%%Let $\Pi^*(\Delta_A,\Delta_B)$ be player $A$'s equilibrium \emph{net gain}
%%%(over his losing continuation payoff) in the single battle where $A$'s
%%%winning value is $\Delta_A$ and $B$'s winning value is $\Delta_B$.
%%%Define the equilibrium payoff vector
%%%\[
%%%\bm{\Pi}^*(\Delta_A,\Delta_B)
%%%:=
%%%\begin{pmatrix}
%%%\Pi^*(\Delta_A,\Delta_B)\\[2pt]
%%%\Pi^*(\Delta_B,\Delta_A)
%%%\end{pmatrix}.
%%%\]
%%%By Assumption~\ref{assumption2}, $\bm{\Pi}^*$ is continuous (in winning values).
%%%
%%%Now fix two continuation payoff vectors:
%%%\[
%%%\text{if $A$ wins: }(v_1,v_2), \qquad \text{if $B$ wins: }(v_a,v_b),
%%%\]
%%%where $v_1\ge v_a$ and $v_b\ge v_2$ so that both players have
%%%nonnegative winning values
%%%\[
%%%\Delta_A=v_1-v_a,\qquad \Delta_B=v_b-v_2.
%%%\]
%%%\emph{Interpretation:} the current battle is exactly a single-battle contest
%%%with winning values $(\Delta_A,\Delta_B)$, and the losing continuation payoffs
%%%are $v_a$ for $A$ and $v_2$ for $B$. Therefore the equilibrium payoffs before
%%%the battle are
%%%\[
%%%(V_A^0,V_B^0)=(v_a,v_2)+\bm{\Pi}^*(v_1-v_a,\;v_b-v_2).
%%%\]
%%%This motivates the definition of the operator
\begin{equation}\label{eq:zeta-def-readable}
\bm{\zeta}_{(v_1,v_2)}(v_a,v_b)
:=
\begin{pmatrix}
v_a\\
v_2
\end{pmatrix}
+\bm{\Pi}^*(v_1-v_a,\;v_b-v_2),
\end{equation}
which gives the equilibrium payoffs when the continuation values are $(v_1,v_2)$ after player $A$ wins the current battle and $(v_a,v_b)$ after player $B$ wins. By definition, we have that
\begin{equation}\label{eq:vrange}
\bm{\zeta}_{(v_1,v_2)}(v_a,v_b)\in [v_a,v_1]\times[v_2,v_b],
\end{equation}
which means each player's equilibrium payoff lies between his losing and winning continuation payoffs.
Moreover, 
\begin{equation}\label{eq:vrange2}
\bm{\zeta}_{(v_1,v_2)}(v_a,v_b)\in(v_a,v_1)\times(v_2,v_b),\text{ if }v_1>v_a\text{ and }v_b>v_2.
\end{equation}

%%\medskip
%%\noindent\textbf{Step 2: The fixed-point map for $(\widehat V(-1),\widehat V(1))$.}
%%Let us write
%%\[
%%u:=\widehat V(-1),\qquad w:=\widehat V(1).
%%\]

Define the domain
\[
\mathcal X:=\{(u,w)\in[0,v]^2:\ u+w\le v\},
\]
and define the mapping $\bm{\xi}:\mathcal X\to\mathcal X$ by
\begin{equation}\label{eq:xi-def-readable}
\bm{\xi}(u,w):=\bm{\zeta}_{(w,u)}^{\,K-1}(0,v),
\end{equation}
where $\bm{\zeta}^{\,m}$ denotes the $m$-fold composition.
Note that, if an equilibrium exists, $\left(\widehat V(-1),\widehat V(1)\right)$ is a fixed point of $\bm{\xi}(u,w)$, for the following reason. 
%%%Consider the situation in which $B$ currently
%%%has a winning streak and is $1$ win away from ending the contest.
%%%From $A$'s perspective (as the laggard), the continuation payoffs are:
%%%\[
%%%\text{if $A$ wins: }(w,u)\quad \text{(reset to states $+1$ and $-1$),}
%%%\qquad
%%%\text{if $A$ loses: }(0,v)\quad \text{(terminal).}
%%%\]
%%%Hence the equilibrium payoff vector one step away from the terminal outcome is
%%%.
By definition of $\bm{\zeta}$, it is easy to see that the continuation values when $B$ is one battle win away from winning overall is $\bm{\zeta}_{\left(\widehat V(1),\widehat V(-1)\right)}(0,v)$.
Applying this relation repeatedly, $\bm{\zeta}^{K-1}_{\left(\widehat V(1),\widehat V(-1)\right)}(0,v)$ is the continuation payoffs when $B$ is $K-1$ battle wins away from ultimately winning, or equivalently, has a winning streak of $1$. By symmetry, we have that $\left(\widehat V(-1),\widehat V(1)\right)=\bm{\zeta}^{K-1}_{\left(\widehat V(1),\widehat V(-1)\right)}(0,v)$, which is exactly the fixed point equation for $\bm{\xi}(u,w)$.

We proceed to show that a fixed point exists for $\bm{\xi}$.
We first prove that $\bm{\xi}$ maps $\mathcal X$ into itself and is continuous.
Continuity follows immediately from continuity of $\bm{\Pi}^*$.
To see $\bm{\xi}(\mathcal X)\subseteq\mathcal X$, it suffices to show that
the sum of coordinates never exceeds $v$ along the iteration. Fix $(u,w)\in\mathcal X$
and let $(a_0,b_0)=(0,v)$ and $(a_1,b_1)=\bm{\zeta}_{(w,u)}(a_0,b_0)$, etc.
Write $(a',b')=\bm{\zeta}_{(w,u)}(a,b)$. Suppose $(a,b)\in\mathcal X$, since $\Pi^*(\Delta',\Delta)+\Pi^*(\Delta,\Delta')\leq\max\{\Delta,\Delta'\}$, we have that
\[
\begin{split}
a'+b' &= a+u + \Pi^*(w-a,\;b-u)+\Pi^*(b-u,\;w-a)\\
&\le a+u+\max\{w-a,\;b-u\}=\max\{u+w,\;a+b\}\le v.
\end{split}
\]
Hence all iterates satisfy $a_j+b_j\le v$, and in
particular $\bm{\xi}(u,w)=(a_{K-1},b_{K-1})\in\mathcal X$.  
Since $\mathcal X$ is nonempty, compact, and convex, Brouwer's fixed-point
theorem yields $(u^*,w^*)\in\mathcal X$ such that $\bm{\xi}(u^*,w^*)=(u^*,w^*)$.

Next, we construct $\widehat V(\cdot)$ from the fixed point $(u^*,w^*)$.
% For that purpose, we first show that $u^*<w^*$. 
Define the iterates
\begin{equation}\label{eq:fix-point-consec-win}
    (a_0^*,b_0^*):=(0,v),\qquad
(a_j^*,b_j^*):=\bm{\zeta}_{(w^*,u^*)}(a_{j-1}^*,b_{j-1}^*),
\quad j=1,\ldots,K-1.
\end{equation}
By the fixed-point condition, $(a_{K-1}^*,b_{K-1}^*)=(u^*,w^*)$.

By \eqref{eq:vrange}, 
$0=a_0^*\leq a_1^*\leq\cdots\leq a_{K-1}^*\leq w^*$, so $u^*=a_{K-1}^*\leq w^*$. Since $\bm{\zeta}^{K-1}_{(0,0)}(0,v)\neq (0,0)$, $(u^*, w^*)\neq(0,0)$. In combination, $u^*\leq w^*$, $(u^*, w^*)\neq(0,0)$, and $u^*+w^*\leq v$ imply that $w^*>0$ and $v>u^*$. 
Then repeatedly applying \eqref{eq:vrange2} yields that 
\[
0=a_0^*<a_1^*<\cdots<a_{K-1}^*=u^*<w^*=b_{K-1}^*<\cdots<b_1^*<b_0^*=v.
\]
%%Again by \eqref{eq:vrange}, 
%%$w^*=b_{K-1}^*\leq\cdots\leq b_1^*\leq b_0^*=v$

Then it is straightforward to verify that the following construction satisfies the equilibrium condition:
\[
\widehat V(-K)=0,\quad \widehat V(K)=v,
\qquad
\widehat V(-K+j):=a_j^*,\quad
\widehat V(K-j):=b_j^* \quad (j=1,\ldots,K-1),
\]
and 
\[
\widehat V(0):=u^*+\Pi^*(w^*-u^*,\,w^*-u^*).
\]
\end{proof}

\subsection{Proof of the Uniqueness Part of \Cref{prop:cw-mpe} Based on \Cref{cond:bsf}}
\begin{proof}
We prove uniqueness under \Cref{cond:bsf}. The $K=1$ case follows from
\Cref{lemma:single}. So suppose $K\ge 2$.
Let $(u,w):=(\widehat V(-1),\widehat V(1))$ be induced by a symmetric MPE.
As in the existence proof, define
\[
(a_0,b_0):=(0,v), \qquad
(a_j,b_j):=\bm\zeta_{(w,u)}(a_{j-1},b_{j-1}), \quad j=1,\ldots,K-1.
\]
Then the fixed-point condition is
\begin{equation}
(a_{K-1},b_{K-1})=(u,w).
\label{eq:cw-fixed-point}
\end{equation}

Under \Cref{cond:bsf}, \eqref{eq:zeta-def-readable} becomes
\begin{align}
a_j
&=a_{j-1}
+(w-a_{j-1})\phi\!\left(\frac{w-a_{j-1}}{b_{j-1}-u}\right), \label{eq:cw-aj}\\
b_j
&=u
+(b_{j-1}-u)\phi\!\left(\frac{b_{j-1}-u}{w-a_{j-1}}\right). \label{eq:cw-bj}
\end{align}

Now define
\[
r_j:=\frac{b_j-u}{\,w-a_j\,}, \qquad j=0,\ldots,K-1.
\]
Using \eqref{eq:cw-aj}--\eqref{eq:cw-bj}, we obtain
\[
w-a_j=(w-a_{j-1})\Bigl[1-\phi(1/r_{j-1})\Bigr]
\]
and
\[
b_j-u=(b_{j-1}-u)\phi(r_{j-1}),
\]
so
\[
r_j
=
\frac{r_{j-1}\phi(r_{j-1})}{1-\phi(1/r_{j-1})}
=: \psi(r_{j-1}).
\]
Hence
\[
r_j=\psi^j(r_0), \qquad r_0=\frac{v-u}{w}.
\]

By \Cref{lemma:psi}, $\psi$ is strictly increasing, so $\psi^{K-1}$ is strictly increasing as well.
From \eqref{eq:cw-fixed-point}, we have
\[
r_{K-1}=\frac{b_{K-1}-u}{w-a_{K-1}}=\frac{w-u}{w-u}=1.
\]
Therefore $r_0$ is uniquely determined as the unique solution to
\[
\psi^{K-1}(r)=1.
\]
Denote this unique solution by $\rho_K$. Then
\begin{equation}
\frac{v-u}{w}=\rho_K,
\qquad\text{so}\qquad
u=v-\rho_K w.
\label{eq:cw-rho}
\end{equation}

Next, iterate the recursion for $w-a_j$:
\[
w-a_j
=
w\prod_{t=0}^{j-1}\Bigl[1-\phi(1/r_t)\Bigr].
\]
Since $r_t=\psi^t(\rho_K)$, the product
\[
P_K
:=
\prod_{t=0}^{K-2}
\Bigl[1-\phi\!\bigl(1/\psi^t(\rho_K)\bigr)\Bigr]
\]
depends only on $K$ and the success function.
Using again \eqref{eq:cw-fixed-point}, namely $a_{K-1}=u$, we get
\[
w-u=w-a_{K-1}=wP_K.
\]
Substituting \eqref{eq:cw-rho} gives
\[
(1+\rho_K-P_K)w=v.
\]
Since $0<\phi(\theta)<1$ for every $\theta>0$, we have $0<P_K<1$, hence
$1+\rho_K-P_K>0$. Therefore
\[
w=\frac{v}{1+\rho_K-P_K},
\qquad
u=v-\rho_K w
\]
are uniquely determined.

So $(u,w)=(\widehat V(-1),\widehat V(1))$ is unique. Once $(u,w)$ is fixed,
the whole sequence $(a_j,b_j)$ is uniquely generated by
\[
(a_j,b_j)=\bm\zeta_{(w,u)}^j(0,v),
\]
and therefore the value function $\widehat V(\cdot)$ is unique as well.
\end{proof}

\subsection{Proof of \Cref{lemma:cw-reinf} Based on \Cref{prop:gensf,lemma:gainratio}}
\begin{proof}
By \eqref{consective_win_valuation}, we have that
    \begin{align*}
         \quad & \frac{\widehat{\Delta}(i)}{\widehat{\Delta}(-i)} = \frac{\widehat{V}(i+1) - \widehat{V}(-1) }{\widehat{V}(1) - \widehat{V}(-i-1)} = \frac{\left[\widehat V(-1)+\widehat{\pi}(i+1)\widehat\Delta(i+1)\right]-\widehat V(-1)}{\widehat V(1) - \left[\widehat V(-i-2)+\widehat{\pi}(-i-1)\widehat\Delta(-i-1)\right]} \\
        \quad & = \frac{\widehat{\pi}(i+1)\widehat\Delta(i+1)}{\widehat\Delta(-i-1)-\widehat{\pi}(-i-1)\widehat\Delta(-i-1)}  = \frac{\widehat{\pi}(i+1)}{1-\widehat{\pi}(-i-1)}\frac{\widehat{\Delta}(i+1)}{\widehat{\Delta}(-i-1)}.
    \end{align*}
    for $i = 0,1,\cdots,K-2$. 
\end{proof}

\subsection{Proof of \Cref{prop:cw-adv} Based on \Cref{prop:gensf,lemma:gainratio} }
\begin{proof}
Consider a symmetric MPE in the $K$-consecutive-win contest, with valuation profile
\[
     v=\widehat V(K)>\widehat V(K-1)>\cdots>\widehat V(1)>\widehat V(-1)>\cdots>\widehat V(-K)=0 .
\]
Let $\widehat p(i)$ denote the probability that the player at state $i$ wins the next battle,
for $i=K-1,\ldots,1-K$, with $\widehat p(i)+\widehat p(-i)=1$.
For convenience, define the equilibrium effort at state $i\geq0$ by
\[
\widehat x(i)
:=x^*\!\big(\widehat V(i+1)-\widehat V(-1),\;\widehat V(1)-\widehat V(-i-1)\big).
\]

We now focus on the $i\geq0$ case and turn to the $i\leq-1$ case at the end of the proof. By definition of the value function, for $i\ge 0$,
\[
\widehat V(i)
= \widehat p(i)\widehat V(i+1)
+[1- \widehat p(i)]\widehat V(-1)
- \widehat x(i).
\]
Rearranging yields that
\begin{equation}
\frac{\widehat V(i+1)-\widehat V(-1)}{\widehat V(i)-\widehat V(-1)}
=\frac{\widehat V(i)-\widehat V(-1)+\widehat x(i)}{\widehat p(i)[\widehat V(i)-\widehat V(-1)]}.
\label{eq:consecutive_recursion}
\end{equation}
Iterating \eqref{eq:consecutive_recursion} from $i=1$ to $K-1$, and using the fact that $\widehat V(K)=v$, gives
\[
%%%\frac{\widehat V(K)-\widehat V(-1)}{\widehat V(1)-\widehat V(-1)}
%%%=
\frac{v-\widehat V(-1)}{\widehat V(1)-\widehat V(-1)}
=\prod_{i=1}^{K-1}
\frac{\widehat V(i)-\widehat V(-1)+\widehat x(i)}
{\widehat p(i)\,[\widehat V(i)-\widehat V(-1)]}.
\]
%%%%\[
%%%%\begin{split}
%%%%\frac{\widehat V(K)-\widehat V(-1)}{\widehat V(1)-\widehat V(-1)}
%%%%&=\prod_{i=1}^{K-1}
%%%%\frac{\widehat V(i)-\widehat V(-1)+\widehat x(i)}
%%%%{\widehat p(i)\,[\widehat V(i)-\widehat V(-1)]}
%%%%\\
%%%%&\leq 
%%%%\prod_{i=1}^{K-1}
%%%%\frac{\widehat V(i)-\widehat V(-1)+\widehat p(i)[\widehat V(i+1)-\widehat V(-1)]}
%%%%{\widehat p(i)\,[\widehat V(i)-\widehat V(-1)]}.
%%%%\end{split}
%%%%\]
As will be shown in the proof of \Cref{thm:cw-rent}, both $\widehat V(1)/v$ and $\widehat V(-1)/v$ approach 0 as $K\to+\infty$. Therefore, the left-hand side diverges. 
This implies that the right-hand side also diverges. 
Notably,
\[\begin{split}
1<\frac{\widehat V(i)-\widehat V(-1)+\widehat x(i)}
{\widehat p(i)\,[\widehat V(i)-\widehat V(-1)]}
&\leq
\frac{\widehat V(i)-\widehat V(-1)+\widehat p(i)\left[\widehat V(i+1)-\widehat V(-1)\right]}
{\widehat p(i)\,[\widehat V(i)-\widehat V(-1)]}
\\
&=
\frac{1+\frac{\widehat p(i)}{\widehat \pi(i)}}{\widehat p(i)}\leq
\frac{1+\frac{1}{\pi_1^*}}{\pi_1^*},
\end{split}
\]
where the last inequality follows from $1\geq\widehat p(i)\geq\widehat \pi(i)\geq\pi_1^*$ for $i\geq0$, by \Cref{lemma:gainratio}\ref{pi-ratio}.
Notice that for a series $\frac{1+\frac{1}{\pi_1^*}}{\pi_1^*}>a_{Ki}>1$, with $1\leq i\leq K-1$,  $\prod_{i=1}^{K-1}a_{Ki}\to+\infty$ as $K\to+\infty$ is equivalent to $\sum_{i=1}^{K-1}\left(1-\frac{1}{a_{Ki}}\right)\to+\infty$ as $K\to+\infty$. It follows that 
\begin{align}
\sum_{i=1}^{K-1}
\left(
1-\frac{\widehat p(i)[\widehat V(i)-\widehat V(-1)]}
{\widehat V(i)-\widehat V(-1)+\widehat x(i)}
\right)
&=
\sum_{i=1}^{K-1}
\left(
\frac{\widehat p(i)\widehat x(i)}{\widehat V(i)-\widehat V(-1)+\widehat x(i)}
+\widehat p(-i)
\right)
\to+\infty .
\label{eq:consecutive_sum}
\end{align}
By \eqref{eq:consecutive_recursion}, \Cref{prop:gensf}\ref{cond1}, and the fact that $\widehat x(-i)\leq \widehat p(-i)\widehat\Delta(-i)$, we have that 
\[
\frac{\widehat p(i)\widehat x(i)}{\widehat V(i)-\widehat V(-1)+\widehat x(i)}
=\frac{\widehat x(i)}{\widehat V(i+1)-\widehat V(-1)}=\frac{\widehat x(i)}{\widehat\Delta(i)}\leq C\frac{\widehat x(-i)}{\widehat\Delta(-i)}\leq C\widehat p(-i).
%%%
%%%\le C\,\widehat p(-i).
\]
Substituting this bound into \eqref{eq:consecutive_sum} implies
\[
(C+1)\sum_{i=1}^{K-1}\widehat p(-i)\to+\infty,
\qquad\text{hence}\qquad
\sum_{i=1}^{K-1} [1-\widehat p(i)]\to+\infty,
\]
which, together with the fact that $\widehat p(i)\leq e^{-[1-\widehat p(i)]}$, yields that 
\[
\prod_{i=1}^{K-1}\widehat p(i)\to0.
\]

Finally, we turn to the probability of ultimately winning the contest from state $i$, $\widehat Q(i;K,v)$.
It satisfies
\[
\widehat Q(i;K,v)
=\widehat p(i)\widehat Q(i+1;K,v)
+\widehat p(-i)\widehat Q(-1;K,v).
\]
Rearranging,
\[
\widehat Q(i+1;K,v)-\widehat Q(-1;K,v)
=\frac{\widehat Q(i;K,v)-\widehat Q(-1;K,v)}{\widehat p(i)}.
\]
Iterating this relation yields
\[
\widehat Q(K;K,v)-\widehat Q(-1;K,v)
=\frac{\widehat Q(i;K,v)-\widehat Q(-1;K,v)}
{\prod_{j=i}^{K-1}\widehat p(j)}.
\]
Since $\widehat Q(K;K,v)=1$ and $\widehat Q(-1;K,v)\ge0$, the left-hand side is bounded,
while for fixed $i$ the denominator converges to zero as $K\to+\infty$.
Therefore,
\[
\widehat Q(i;K,v)-\widehat Q(-1;K,v)\to0.
\]
Since $\widehat Q(i;K,v)\geq \frac{1}{2}\geq  \widehat Q(-1;K,v)$, $\lim_{K>i,K\to +\infty}\widehat Q(i;K,v)=\frac{1}{2}$ for any given $i\geq0$. Then $\lim_{K>i,K\to +\infty}\widehat Q(-i;K,v)=\frac{1}{2}$ follows from the fact that $\widehat Q(-i;K,v)=1-\widehat Q(i;K,v)$.%
%%Finally, by symmetry $\widehat Q(-1,K,v)=\frac12$, implying
%%\[
%%\lim_{K\to\infty,\;K>|i|}\widehat Q(i,K,v)=\frac12 .
%%\]
\end{proof}

\subsection{Proof of \Cref{thm:cw-rent} Based on \Cref{prop:gensf,lemma:gainratio}}
\begin{proof}
We show that for all $R''>1$ there exists $K_{R''}\in\N_{++}$ such that $\frac{\widehat{\Delta}(i;K,v)}{\widehat{\Delta}(-i;K,v)}>R''$ for all $i\geq K_{R''}$, $K>i$ and $v>0$.
This is because, by \Cref{lemma:gainratio}\ref{pi-high}, for all $i$ such that $\frac{\widehat{\Delta}(i;K,v)}{\widehat{\Delta}(-i;K,v)}\leq R''$, $\frac{1-\widehat{\pi}(-i-1)}{\widehat{\pi}(i+1)}\geq 1+d_{R''}$. Together with \Cref{lemma:cw-reinf}, this implies that 
%%$\frac{\widehat{\Delta}(i;K,v)}{\widehat{\Delta}(-i;K,v)}$ grows faster than exponentially and 
$K_{R''}:=\log_{1+d_{R''}}R''$ satisfies our requirement.

Therefore, 
$\frac{\widehat{\Delta}(K-1;K,v)}{\widehat{\Delta}(1-K;K,v)}=\frac{\widehat{V}(K;K,v) - \widehat{V}(-1;K,v) }{\widehat{V}(1;K,v) - \widehat{V}(-K;K,v)}=\frac{v- \widehat{V}(-1;K,v) }{\widehat{V}(1;K,v) }\to+\infty$ as $K\to+\infty$, which implies that $\frac{v}{\widehat{V}(1;K,v) }\to+\infty$.  
    As a result, for all $\epsilon >0$, there exists $K^\dagger$ such that when $K>K^\dagger$, $\widehat{V}(-1)<\widehat{V}(1)<\frac{\epsilon}{2}v$, which indicates that $\widehat{V}(0)<\epsilon v$. 
\end{proof}

\subsection{Proof of \Cref{thm:sufficient} Based on \Cref{prop:gensf,lemma:gainratio}}
\begin{proof}
\textbf{The sufficiency part.} By symmetry, it is without loss to focus on the initial value of player $A$, denoted by $V_A^*$.
For a terminal history $h\in H^\dagger$, 
let $\widetilde H(h):=\{h': h'\text{ is a prefix of }h\}$ denote the set of all subhistories of $h$.
If $\widetilde H(h)\cap H^B\neq\emptyset$, let $h_B$ denote the \emph{shortest} prefix (i.e., subhistory) of $h$ that belongs to $H^B$.
It holds that
\[
V_A^* \le \sum_{\substack{h\in H^\dagger\text{ and} \\ \widetilde H(h)\cap H^B=\emptyset}} \Pr(h)v+
\sum_{\substack{h\in H^\dagger\text{ and} \\ \widetilde H(h)\cap H^B\neq\emptyset}}\Pr(h) V_A(h_B),
\]
where $V_A(h_B)$ denotes the continuation value of player $A$ when a terminal history $h$ first reaches $H^B$. 

By assumption and \Cref{def:si}, we have that
\[\sum_{\substack{h\in H^\dagger\text{ and} \\ \widetilde H(h)\cap H^B=\emptyset}} \Pr(h)\leq \epsilon \text{ and } V_A(h_B)\leq \epsilon v.\]
%%which implies $V_A(h_B)< \frac{v}{M}$. 
Therefore, 
$V_A^*\leq 2\epsilon v,$
and the expected total effort is 
$v-V_A^*-V_B^*\geq\left(1-4\epsilon\right)v$.

\noindent\textbf{The necessity part.} Since the expected total effort in equilibrium is greater than $(1-\epsilon)v$, the total equilibrium payoff is less than $\epsilon v$. This means that the transient dominance property is satisfied by setting
\(H_A^- = H_B^- = \{h^0\}\), where \(h^0\) denotes the initial (empty) history before any battle.\end{proof}

\subsection{Proof of \Cref{thm:towp} under \Cref{cond:bsf}}
\label{sec:towp-proof}
\begin{proof}
Since the success function is homogeneous, we normalize the prize to $v=1$ throughout this proof. 
We first prove the existence and uniqueness of the symmetric MPE. 

\textbf{Step 1: States and value functions.}
Let the state $i\in\{-N,-N+1,\dots,N\}$ denote a player's lead in the number
of net battle wins since the last reset. States $\pm N$ are absorbing: if $i=N$,
the player has won the contest and receives $1$, whereas if $i=-N$, the player
receives $0$.

To incorporate the reset lottery, it is convenient to distinguish two types of
continuation values:
\begin{itemize}
\item $V(i)$ is the player's continuation value at a \emph{decision node} in
state $i$, i.e., immediately before the next battle when the reset lottery (if
any) has already been realized.
\item $\widetilde V(i)$ is the player's continuation value at an
\emph{intermediate node} in state $i$, i.e., immediately after the most recent
battle outcome has updated the lead to $i$, but \emph{before} the reset lottery
is realized.
\end{itemize}

By definition of the reset rule, for every nonterminal $i\in\{-N+1,\dots,N-1\}$,
\begin{equation}\label{eq:reset-link}
\widetilde V(i)=p\,V(0)+(1-p)\,V(i).
\end{equation}
At terminal states, $\widetilde V(N)=V(N)=1$ and $\widetilde V(-N)=V(-N)=0$.

\textbf{Step 2: Bellman equations using the single-battle characterization.}
By \Cref{lemma:single}, the current battle has a unique equilibrium, and player
$A$'s equilibrium payoff from this battle equals the losing continuation value
plus the equilibrium gain:
\begin{equation}\label{eq:Vprime}
V(i)=\widetilde V(i-1)
+\bigl(\widetilde V(i+1)-\widetilde V(i-1)\bigr)\,
\phi\!\left(
\frac{\widetilde V(i+1)-\widetilde V(i-1)}{\widetilde V(1-i)-\widetilde V(-i-1)}
\right),\text{ for }i=1-N,\dots,N-1.
\end{equation}
Combining \eqref{eq:reset-link} and \eqref{eq:Vprime} yields a closed system for
$\{\widetilde V(i)\}_{i=-N}^N$. We proceed to show that this system has a unique solution with a constructive approach. 

\textbf{Step 3: Normalization and recursive construction.}
We define the normalized values:
\begin{equation}\label{eq:Delta-def}
\widetilde \Delta(i):=\frac{\widetilde V(i)-\widetilde V(0)}{\widetilde V(1)-\widetilde V(-1)},
\qquad i=-N,\dots,N.
\end{equation}
Then $\widetilde\Delta(0)=0$ and $\widetilde\Delta(1)-\widetilde\Delta(-1)=1$.

At state $i=0$, we have $\widetilde V(0)=V(0)$ and \eqref{eq:Vprime} becomes
\[
\widetilde V(0)=\widetilde V(-1)+(\widetilde V(1)-\widetilde V(-1))\phi(1).
\]
Rearranging and using the normalization \eqref{eq:Delta-def} gives the
\emph{initial conditions}
\begin{equation}\label{eq:Delta-init}
\widetilde\Delta(1)=1-\phi(1),
\qquad
\widetilde\Delta(-1)=-\phi(1).
\end{equation}

For each $i\in\{1,\dots,N-1\}$ define the ratio
\begin{equation}\label{eq:theta-def}
\theta_i
:=
\frac{\widetilde\Delta(i+1)-\widetilde\Delta(i-1)}{\widetilde\Delta(1-i)-\widetilde\Delta(-i-1)}
\;=\;
\frac{\widetilde V(i+1)-\widetilde V(i-1)}{\widetilde V(1-i)-\widetilde V(-i-1)}.
\end{equation}
Subtract $\widetilde V(0)$ from both sides of \eqref{eq:reset-link}--\eqref{eq:Vprime},
divide by $\widetilde V(1)-\widetilde V(-1)$, and use \eqref{eq:theta-def}. This yields, for $i=1,\dots,N-1$,
\begin{equation}\label{eq:Delta-rec}
\begin{aligned}
\widetilde \Delta(i)
&=(1-p)\Bigl[\widetilde\Delta(i-1)+\bigl(\widetilde\Delta(i+1)-\widetilde\Delta(i-1)\bigr)\phi(\theta_i)\Bigr],\\
\widetilde \Delta(-i)
&=(1-p)\Bigl[\widetilde\Delta(-i-1)+\bigl(\widetilde\Delta(1-i)-\widetilde\Delta(-i-1)\bigr)\phi(1/\theta_i)\Bigr]\\
&=(1-p)\Bigl[\widetilde\Delta(1-i)-\bigl(\widetilde\Delta(1-i)-\widetilde\Delta(-i-1)\bigr)\left(1-\phi(1/\theta_i)\right)\Bigr].
\end{aligned}
\end{equation}

Using \eqref{eq:Delta-rec} and \eqref{eq:theta-def}, a direct rearrangement
gives, for each $i=1,\dots,N-1$,
\begin{equation}\label{eq:psi-eq}
-\frac{\widetilde\Delta(i)-(1-p)\widetilde\Delta(i-1)}
{\widetilde\Delta(-i)-(1-p)\widetilde\Delta(-i-1)}=\psi(\theta_i),
\end{equation}
where
\[
\psi(\theta):=\frac{\theta\,\phi(\theta)}{1-\phi(1/\theta)},\qquad \theta>0.
\]
%%Importantly, the right-hand side of \eqref{eq:psi-eq} depends only on
%%$\{\widetilde\Delta(j)\}_{j=-i}^{i}$, so it can be computed inductively.
We establish the following result to facilitate the inductive computation of $\widetilde\Delta(i)$. 
\begin{lemma}
\label{lemma:psi}
The function $\psi(\theta):=\frac{\theta\phi(\theta)}{1-\phi(1/\theta)}$ is strictly increasing in $\theta>0$. 
\end{lemma}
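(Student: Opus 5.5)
The plan is to use the reciprocity identity $\gamma(x)+\gamma(1/x)=1$ to rewrite $\psi$ entirely in terms of $\gamma$ and $\gamma'$ at the single point $\theta$, and then to show that $\log\psi$ has a positive derivative. The naive route does not work. The numerator $\theta\phi(\theta)$ is increasing, but $1-\phi(1/\theta)$ is also increasing in $\theta$, since $\phi'(\theta)=-\theta\gamma''(\theta)\ge 0$. A direct monotonicity argument term by term therefore fails, and we must see how the two effects compare.

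\textbf{Step 1 (simplify the denominator).} Differentiating $\gamma(x)+\gamma(1/x)=1$ gives $\gamma'(1/x)=x^2\gamma'(x)$, as already noted in the proof of \Cref{prop:gensf}. Hence
\[
\phi(1/\theta)=\gamma(1/\theta)-\tfrac{1}{\theta}\gamma'(1/\theta)=1-\gamma(\theta)-\theta\gamma'(\theta),
\]
so $1-\phi(1/\theta)=\gamma(\theta)+\theta\gamma'(\theta)$. Define the elasticity $g(\theta):=\theta\gamma'(\theta)/\gamma(\theta)$. Then
\[
\psi(\theta)=\theta\,\frac{\gamma(\theta)-\theta\gamma'(\theta)}{\gamma(\theta)+\theta\gamma'(\theta)}=\theta\,\frac{1-g(\theta)}{1+g(\theta)}.
\]
Concavity together with $\gamma(0)=0$ gives $g\in(0,1]$. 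Moreover $g<1$ for $\theta>0$, because $\phi(\theta)=\gamma(\theta)(1-g(\theta))>0$ for $\theta>0$; this positivity is the fact $0<\phi<1$ used in the uniqueness proof of \Cref{prop:cw-mpe}. Consequently $\psi>0$ and $\log\psi$ is well defined.

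\textbf{Step 2 (log-derivative).} Write $\log\psi=\log\theta+\log(1-g)-\log(1+g)$. Then
\[
\frac{d}{d\theta}\log\psi=\frac1\theta-\frac{2g'}{1-g^2}.
\]
Differentiating $g$ gives
\[
\theta g'=g+\frac{\theta^2\gamma''}{\gamma}-g^2\le g(1-g),
\]
where the inequality uses $\gamma''\le 0$. Therefore
\[
\theta\,\frac{d}{d\theta}\log\psi\ \ge\ 1-\frac{2g(1-g)}{(1-g)(1+g)}=\frac{1-g}{1+g}>0,
\]
since $g<1$. This shows that $\psi$ is strictly increasing on $(0,+\infty)$.

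The main obstacle is only the bookkeeping in Steps 1 and 2. One must apply the reciprocity identity correctly, so that the $1/\theta$-terms collapse to quantities evaluated at $\theta$. One must also make sure that strictness comes from $g<1$ and not from any strict concavity of $\gamma$, since \Cref{cond:bsf} allows $\gamma''=0$. If needed, the strict positivity $\phi(\theta)>0$ can be justified directly: $\phi(0)=0$ and $\phi'\ge0$, and a linear $\gamma$ on an initial interval is incompatible with $\gamma$ being bounded by $1$ while satisfying the reciprocity identity together with $\gamma(+\infty)=1$.
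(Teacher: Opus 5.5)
Your Steps 1--2 are correct given $g(\theta)<1$, that is, $\phi(\theta)>0$ for all $\theta>0$. The gap is your closing justification of that positivity, which is false. Assumption~1 does allow $\gamma$ to be linear on an initial interval: reciprocity only turns $c\theta$ near $0$ into $1-c/\theta$ near $+\infty$, which is bounded, concave, and tends to $1$.

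Here is a concrete example. Let $\gamma(\theta)=\int_0^\theta\gamma'$, where $\gamma'=c$ on $(0,1/e]$, $\gamma'(\theta)=\frac{c}{\theta}\exp\bigl(-\frac12(1+\log^2\theta)\bigr)$ on $[1/e,e]$, $\gamma'(\theta)=c/\theta^2$ on $[e,+\infty)$, and $c>0$ normalizes $\gamma(+\infty)=1$.
\begin{itemize}
\item $\gamma'$ is $C^1$ and positive.
\item $\gamma'$ is nonincreasing; on $[1/e,e]$ one has $(\log\gamma')'=-(1+\log\theta)/\theta\le 0$.
\item $\theta\gamma'(\theta)$ is invariant under $\theta\mapsto 1/\theta$, which gives $\gamma(\theta)+\gamma(1/\theta)=1$.
\end{itemize}
So Assumption~1 holds. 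Yet $\phi\equiv 0$ on $(0,1/e]$, hence $\psi\equiv 0$ there, and the lemma itself fails.

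Positivity of $\phi$ on $(0,+\infty)$ is therefore an extra hypothesis. Equivalently, $\gamma$ is not linear on any interval $[0,a]$. The paper uses this hypothesis silently when it asserts that $\phi$ is strictly increasing and that $0<\phi<1$. You should state it as a hypothesis rather than claim to derive it from Assumption~1. With it, your proof is complete.

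Your route also differs from the paper's. The paper writes $\psi(\theta)=\phi(\theta)/\bigl(\gamma(\theta)/\theta+\gamma'(\theta)\bigr)$ and argues that the numerator is strictly increasing while the denominator is decreasing by concavity. You pass to the elasticity $g$ and bound $\theta(\log\psi)'\ge(1-g)/(1+g)$.
\begin{itemize}
\item Your argument draws strictness from $\phi>0$ alone, which is the minimal requirement.
\item The paper's draws it from $\phi'=-\theta\gamma''>0$, which Assumption~1 does not guarantee: it fails on any interval where $\gamma$ is affine. The paper's argument can be patched by noting that $(\gamma/\theta)'=-\phi/\theta^2<0$ when $\phi>0$, so the denominator is strictly decreasing.
\item The paper's version is shorter and never differentiates the ratio. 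Yours gives a quantitative lower bound on the growth rate of $\psi$.
\end{itemize}
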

\begin{proof}
Note that
\[
\psi(\theta)
=\frac{\theta\phi(\theta)}{1-\phi(\frac{1}{\theta})}
=\frac{\phi(\theta)}{\frac{\gamma(\theta)}{\theta}+\gamma'(\theta)},
\]
where the second equality is because
$1-\gamma\!\left(\frac{1}{\theta}\right)=\gamma(\theta)$
and
$\theta\gamma'(\theta)=\frac{1}{\theta}\gamma'\!\left(\frac{1}{\theta}\right)$.
Since $\phi(\theta)$ is strictly increasing, and
$\frac{\gamma(\theta)}{\theta}+\gamma'(\theta)$ is decreasing due to the concavity of $\gamma(\theta)$,
it is clear that $\psi(\theta)$ is strictly increasing.
\end{proof}

\emph{Induction/construction.}
Start from \eqref{eq:Delta-init} and $\widetilde\Delta(0)=0$. Suppose that for
some $k\in\{1,\dots,N-1\}$ the values
$\widetilde\Delta(j)$ have been constructed for all $j\in\{-k,\dots,k\}$ and
satisfy
\[
\widetilde\Delta(k)>\widetilde\Delta(k-1)>\dots>\widetilde\Delta(0)=0
>\dots>\widetilde\Delta(1-k)>\widetilde\Delta(-k).
\]

Then 
\[
-\frac{\widetilde\Delta(k)-(1-p)\widetilde\Delta(k-1)}
{\widetilde\Delta(-k)-(1-p)\widetilde\Delta(1-k)}=
\frac{\widetilde\Delta(k)-(1-p)\widetilde\Delta(k-1)}
{\widetilde\Delta(1-k)-\widetilde\Delta(-k)-p\widetilde\Delta(1-k)},
\]
the numerator of the right-hand side is strictly
positive (because $\widetilde\Delta(k)>(1-p)\widetilde\Delta(k-1)$), and the
denominator is strictly positive (because $\widetilde\Delta(1-k)>\widetilde\Delta(-k)$
and $\widetilde\Delta(1-k)\leq 0$). Hence $\theta_k$ is uniquely determined as
\begin{equation}\label{eq:theta-sol}
\theta_k
=
\psi^{-1}\!\left(-
\frac{\widetilde\Delta(k)-(1-p)\widetilde\Delta(k-1)}
{\widetilde\Delta(-k)-(1-p)\widetilde\Delta(1-k)}
\right).
\end{equation}
Given $\theta_k$, solve \eqref{eq:Delta-rec} for $\widetilde\Delta(k+1)$ and
$\widetilde\Delta(-k-1)$:
\begin{align}
\widetilde\Delta(k+1)
&=
\widetilde\Delta(k-1)
+\underbrace{\frac{\widetilde\Delta(k)/(1-p)-\widetilde\Delta(k-1)}{\phi(\theta_k)}}_{\substack{{>\widetilde\Delta(k)/(1-p)-\widetilde\Delta(k-1)}\\{\text{ because }0<\phi(\theta)<1\text{ for all }\theta>0}}}>\frac{\widetilde\Delta(k)}{1-p},
\label{eq:Delta-plus}\\
\widetilde\Delta(-k-1)
&=
\frac{\widetilde\Delta(-k)}{(1-p)(1-\phi(1/\theta_k))}+\frac{\phi(1/\theta_k)[-\widetilde\Delta(1-k)]}{1-\phi(1/\theta_k)}\notag\\
&<\frac{\widetilde\Delta(-k)}{(1-p)(1-\phi(1/\theta_k))}<\frac{\widetilde\Delta(-k)}{1-p}.
%\frac{\widetilde\Delta(-k)/(1-p)-\phi(1/\theta_k)\,\widetilde\Delta(1-k)}{1-\phi(1/\theta_k)}.
\label{eq:Delta-minus}
\end{align}
Because $0<\phi(\theta)<1$ for all $\theta>0$ under \Cref{cond:bsf},
 the
%%%%denominators in \eqref{eq:Delta-plus} and \eqref{eq:Delta-minus} are strictly
%%%%positive, so $(\widetilde\Delta(k+1),\widetilde\Delta(-k-1))$ is uniquely
%%%%determined. Moreover, \eqref{eq:Delta-plus} implies
%%%%$\widetilde\Delta(k+1)>\widetilde\Delta(k)/(1-p)>\widetilde\Delta(k)$, and
%%%%\eqref{eq:Delta-minus} implies $\widetilde\Delta(-k-1)<\widetilde\Delta(-k)$.
%%%%Thus 
the strict monotonicity of $\widetilde\Delta(\cdot)$ extends to $\pm(k+1)$.
This completes the inductive construction and shows that, for each fixed $(N,p)$,
there is a unique normalized sequence $\{\widetilde\Delta(i)\}_{i=-N}^N$
satisfying \eqref{eq:Delta-init}--\eqref{eq:Delta-rec}.

\textbf{Step 4: Recovering $\widetilde V$ and concluding uniqueness of the
symmetric MPE.}
Given $\{\widetilde\Delta(i)\}_{i=-N}^N$, the boundary conditions
$\widetilde V(-N)=0$ and $\widetilde V(N)=1$ pin down the affine scaling in
\eqref{eq:Delta-def} uniquely. Indeed, since
$\widetilde\Delta(N)>\widetilde\Delta(-N)$, set 
\[
\widetilde V(i)=\frac{\widetilde\Delta(i)-\widetilde\Delta(-N)}{\widetilde\Delta(N)-\widetilde\Delta(-N)}.
\]
Then $\widetilde V(-N)=0$, $\widetilde V(N)=1$, and the series $\widetilde V(i)$ solves the equilibrium equations \eqref{eq:reset-link} and \eqref{eq:Vprime} by construction.

Uniqueness follows because any symmetric MPE induces (via \eqref{eq:Delta-def})
a normalized sequence $\{\widetilde\Delta(i)\}_{i=-N}^N$ satisfying the same
system \eqref{eq:Delta-init}--\eqref{eq:Delta-rec}. The inductive construction
above shows this normalized sequence is unique, and therefore the associated
$(\widetilde V,V)$ and equilibrium efforts are unique as well.

\textbf{Next, we establish the transient dominance property for large $N$ when $p\in(0,1)$.} 
Fix $p\in(0,1)$ and $\epsilon>0$. 
For this part, it is convenient to extend the normalized sequence $\{\widetilde\Delta(i)\}_{i=-N}^N$
constructed above to an \emph{infinite} sequence $\{\widetilde\Delta(i)\}_{i=-\infty}^{+\infty}$ by iterating the
recursion \eqref{eq:theta-sol}--\eqref{eq:Delta-minus} for $k=1,2,\dots$ (starting from the initial conditions
\eqref{eq:Delta-init}).  For any finite $N$, the normalized sequence in the margin-$N$ game is just the truncation
of this infinite sequence on $\{-N,\dots,N\}$.

For $k\ge1$ define the ratio
\[
R_k:= -\frac{\widetilde\Delta(k)}{\widetilde\Delta(-k)} \;>\;0.
\]
We first show that $R_k$ is strictly increasing and diverges to $+\infty$; we then use this to verify the two
conditions in \Cref{def:si}.

\textbf{Step 5: Monotonicity of $R_k$.}
Since $1>\phi(\theta)+\phi(1/\theta)$ for $\theta>0$, we have
$\psi(\theta)<\theta$ and $\psi(1/\theta)<1/\theta$, hence $1/\psi(1/\theta)>\theta$. Together with \eqref{eq:Delta-plus}--\eqref{eq:Delta-minus}, we know that for every $k\ge1$,
\begin{align}\label{eq5_reset}
\frac{\widetilde\Delta(k+1)-\frac{\widetilde\Delta(k)}{1-p}}{\frac{\widetilde\Delta(-k)}{1-p}-\widetilde\Delta(-k-1)}
=\frac{1}{\psi(1/\theta_k)}
>\theta_k>\psi(\theta_k)
=\frac{\widetilde\Delta(k)-(1-p)\widetilde\Delta(k-1)}
{-\widetilde\Delta(-k)+(1-p)\widetilde\Delta(1-k)} .
\end{align}

We prove that $R_{k+1}>R_k$ by induction on $k$.  
For $k=1$, since $\widetilde\Delta(0)=0$ the rightmost fraction in \eqref{eq5_reset} equals
$-\widetilde\Delta(1)/\widetilde\Delta(-1)=R_1$, so \eqref{eq5_reset} implies
\[
\frac{\widetilde\Delta(2)-\frac{\widetilde\Delta(1)}{1-p}}{\frac{\widetilde\Delta(-1)}{1-p}-\widetilde\Delta(-2)}
>R_1.
\]
Writing
$\widetilde\Delta(2)=\frac{\widetilde\Delta(1)}{1-p}+\Big(\widetilde\Delta(2)-\frac{\widetilde\Delta(1)}{1-p}\Big)$
and
$-\widetilde\Delta(-2)=\frac{-\widetilde\Delta(-1)}{1-p}+\Big(\frac{\widetilde\Delta(-1)}{1-p}-\widetilde\Delta(-2)\Big)$,
this inequality implies $R_2=-\widetilde\Delta(2)/\widetilde\Delta(-2)>R_1$.

Now fix $k\ge2$ and suppose $R_k>R_{k-1}$.  Let
$c:=\widetilde\Delta(k)$, $d:=-\widetilde\Delta(-k)$, $c':=(1-p)\widetilde\Delta(k-1)$, and
$d':=(1-p)(-\widetilde\Delta(1-k))$.  Then $c/d=R_k$ and $c'/d'=R_{k-1}$.
The induction hypothesis $c/d>c'/d'$ implies
\[
\frac{c-c'}{d-d'}>\frac{c}{d},
\]
i.e., the rightmost fraction in \eqref{eq5_reset} satisfies $\psi(\theta_k)>R_k$.
Combining with \eqref{eq5_reset} yields
\[
\frac{\widetilde\Delta(k+1)-\frac{\widetilde\Delta(k)}{1-p}}{\frac{\widetilde\Delta(-k)}{1-p}-\widetilde\Delta(-k-1)}
>R_k.
\]
Using the same ``ratio-of-sums'' argument as in the base case, we conclude
$R_{k+1}=-\widetilde\Delta(k+1)/\widetilde\Delta(-k-1)>R_k$.  Hence $\{R_k\}$ is strictly increasing.

\textbf{Step 6: Divergence of $R_k$.}
Since $R_k$ is increasing, either $R_k\to+\infty$ or $R_k\to M<+\infty$.
Suppose toward a contradiction that $R_k\to M<+\infty$.  Then $R_k\le M+1$ for all large $k$.

First, $\theta_k$ is uniformly bounded from above.  
By \eqref{eq:theta-sol} and the monotonicity of $\psi^{-1}$,
\begin{equation}\label{upper-bound-theta1}
\theta_k
=\psi^{-1}\!\left(
-\frac{\widetilde\Delta(k)-(1-p)\widetilde\Delta(k-1)}{\widetilde\Delta(-k)-(1-p)\widetilde\Delta(1-k)}
\right)
\le \psi^{-1}\!\left(-\frac{\widetilde\Delta(k)}{p\,\widetilde\Delta(-k)}\right)
\le \psi^{-1}\!\left(\frac{M+1}{p}\right),
\end{equation}
where the first inequality is because
\begin{align}\label{upper-bound-theta}
-\frac{\widetilde\Delta(k)-(1-p)\widetilde\Delta(k-1)}{\widetilde\Delta(-k)-(1-p)\widetilde\Delta(1-k)}
%%%&=\frac{\widetilde\Delta(k)-(1-p)\widetilde\Delta(k-1)}{(1-p)\widetilde\Delta(1-k)-\widetilde\Delta(-k)} \nonumber\\
%%%&
=\frac{\widetilde\Delta(k)-(1-p)\widetilde\Delta(k-1)}
{(1-p)(\widetilde\Delta(1-k)-\widetilde\Delta(-k))-p\,\widetilde\Delta(-k)}
\;\le\; -\frac{\widetilde\Delta(k)}{p\,\widetilde\Delta(-k)}.
\end{align}
On the other hand, Step~5 gives $\theta_k>\psi(\theta_k)>R_k$, so $\liminf_{k\to\infty}\theta_k\ge M$.
Let $\theta_-:=\liminf_{k\to\infty}\theta_k\in[M,\psi^{-1}((M+1)/p)]$.

Taking $\liminf$ in \eqref{eq5_reset} yields
\begin{equation}\label{eq6_reset}
\liminf_{k\to\infty}
\frac{\widetilde\Delta(k+1)-\frac{\widetilde\Delta(k)}{1-p}}{\frac{\widetilde\Delta(-k)}{1-p}-\widetilde\Delta(-k-1)}
=\liminf_{k\to\infty}\frac{1}{\psi(1/\theta_k)}
=\frac{1}{\psi(1/\theta_-)}.
\end{equation}
Since $\psi(1/\theta)<1/\theta$ for $\theta>0$, we have $1/\psi(1/\theta_-)>\theta_-\ge M$, and hence
\[
\liminf_{k\to\infty}
\frac{\widetilde\Delta(k+1)-\frac{\widetilde\Delta(k)}{1-p}}{\frac{\widetilde\Delta(-k)}{1-p}-\widetilde\Delta(-k-1)}
> M.
\]

Next, note that $R_{k+1}$ can be written as a convex combination:
\begin{equation}\label{eq:contradiction}
\frac{\widetilde \Delta(k+1)}{-\widetilde \Delta(-k-1)}=
\left(
\frac{\widetilde\Delta(k+1)-\frac{\widetilde\Delta(k)}{1-p}}
{\frac{\widetilde\Delta(-k)}{1-p}-\widetilde\Delta(-k-1)}
\right)r_k
+
\left(-\frac{\widetilde\Delta(k)}{\widetilde\Delta(-k)}\right)(1-r_k),
\end{equation}
where $r_k:=\frac{\frac{\widetilde\Delta(-k)}{1-p}-\widetilde\Delta(-k-1)}{-\widetilde\Delta(-k-1)}\in(0,1)$.
Moreover, using \eqref{eq:Delta-rec} we have
\[
\frac{\widetilde\Delta(-k)}{1-p}-\widetilde\Delta(-k-1)
=
\big(\widetilde\Delta(1-k)-\widetilde\Delta(-k-1)\big)\,\phi(1/\theta_k),
\]
and since $\widetilde\Delta(1-k)>\widetilde\Delta(-k)>(1-p)\widetilde\Delta(-k-1)$,
\begin{equation}\label{lower-bound-A}
r_k
=\frac{\big(\widetilde\Delta(1-k)-\widetilde\Delta(-k-1)\big)\,\phi(1/\theta_k)}{-\widetilde\Delta(-k-1)}
\ge p\,\phi(1/\theta_k)
\ge p\,\phi\!\left(\frac{1}{\psi^{-1}((M+1)/p)}\right)
=:\widehat r>0,
\end{equation}
where the last inequality uses \eqref{upper-bound-theta1} and the monotonicity of $\phi$.

Taking $\liminf$ in \eqref{eq:contradiction} and using \eqref{eq6_reset}--\eqref{lower-bound-A} gives
\begin{align*}
\liminf_{k\to\infty} R_{k+1}
&\ge \widehat r\left(\liminf_{k\to\infty}\frac{\widetilde\Delta(k+1)-\frac{\widetilde\Delta(k)}{1-p}}{\frac{\widetilde\Delta(-k)}{1-p}-\widetilde\Delta(-k-1)}\right)
+(1-\widehat r)\left(\liminf_{k\to\infty}R_k\right)\\
&> \widehat r M+(1-\widehat r)M=M,
\end{align*}
contradicting $R_k\to M$.  Hence $R_k\to+\infty$.

\textbf{Step 7: Constructing $H_A^-$ and $H_B^-$ and verifying \Cref{def:si}.}
Step~6 implies $\lim_{k\to\infty}\frac{-\widetilde\Delta(-k)}{\widetilde\Delta(k)}=0$. 
Choose $k$ large enough so that $\frac{-\widetilde\Delta(-k)}{\widetilde\Delta(k)}\le \epsilon$,
and then choose $m\in\mathbb{N}_{++}$ so that $(1-\frac p2)^{m-1}\le \epsilon$.
Set $N^\dagger:=k+m$, and consider $N\geq N^\dagger$. 

Let $H_A^-$ (resp. $H_B^-$) collect the intermediate states (i.e., right after a battle outcome and before the reset lottery) where player $B$ (resp. $A$) has a lead of at least $k$ battle wins since the last reset. 
%%%Let $i(h)\in\{-N,\dots,N\}$ denote player $A$'s lead (the state) at an \emph{intermediate} history $h$
%%%(i.e., right after a battle outcome and before the reset lottery).  
%%%Define
%%%\[
%%%H_A^-:=\{h:\; i(h)\le -k\},
%%%\qquad
%%%H_B^-:=\{h:\; i(h)\ge k\}.
%%%\]
Since $\widetilde V(\cdot)$ is increasing, it suffices to bound $\widetilde V(-k)$.  Using the scaling in Step~4 and the monotonicity of $-\frac{\widetilde\Delta(i)}{\widetilde\Delta(-i)}$,
for any $i\ge k$,
\begin{align*}
\frac{1-\widetilde V(i)}{\widetilde V(-i)}
=\frac{\widetilde\Delta(N)-\widetilde\Delta(i)}{\widetilde\Delta(-i)-\widetilde\Delta(-N)}
\ge -\frac{\widetilde\Delta(i)}{\widetilde\Delta(-i)}
\ge -\frac{\widetilde\Delta(k)}{\widetilde\Delta(-k)}.
\end{align*}
Taking $i=k$ and using $1-\widetilde V(k)\le 1$ yields
\[
\widetilde V(-k)\le \frac{-\widetilde\Delta(-k)}{\widetilde\Delta(k)}\le \epsilon.
\]
This verifies Condition~(i) in \Cref{def:si} (recall $v=1$ under our normalization).

%%%%%For Condition~(ii), let $\widetilde P_A$ (resp.\ $\widetilde P_B$) denote the equilibrium probability that the realized
%%%%%history reaches $H_B^-$ (resp.\ $H_A^-$), i.e., that player $A$ (resp.\ $B$) ever attains a lead of at least $k$.
%%%%%Starting from the initial state $0$, before the contest can terminate at $\pm N$ it must pass through $\pm k$.
%%%%%Moreover, once some player has reached a lead of $k$, reaching $\pm N$ requires at least $m-1$ additional
%%%%%post-battle reset lotteries while the lead is strictly above $k$ (or below $-k$).  At each such lottery, with probability
%%%%%$p$ the contest resets to state $0$; conditional on a reset, symmetry implies that the next time the process hits
%%%%%$\{\pm k\}$, player $A$ is the one in the lead with probability $1/2$.  Hence the conditional probability that player $A$
%%%%%\emph{still} has not attained a $k$-lead after one such lottery is at most $1-\frac p2$.  Iterating over the $m-1$
%%%%%lotteries gives
%%%%%\[
%%%%%1-\widetilde P_A\le \left(1-\frac p2\right)^{m-1},
%%%%%\qquad
%%%%%1-\widetilde P_B\le \left(1-\frac p2\right)^{m-1}.
%%%%%\]
%%%%%Therefore the probability of reaching both $H_A^-$ and $H_B^-$ is
%%%%%\[
%%%%%\Pr\big(\text{reach both }H_A^-\text{ and }H_B^-\big)
%%%%%\ge \widetilde P_A+\widetilde P_B-1
%%%%%\ge 1-2\left(1-\frac p2\right)^{m-1}
%%%%%\ge 1-\epsilon.
%%%%%\]
%%%%%This verifies Condition~(ii) in \Cref{def:si} and completes the proof.

For Condition~(ii), 
since one of $H_A^-$ or $H_B^-$ must be reached before the game can end, 
it suffices to bound the probability that the realized history \emph{fails} to reach, say, $H_A^-=\{h:i(h)\le -k\}$, starting from a situation in which player $A$ already has a lead of $k$, which lies in $H_B^-$. 
We denote this conditional probability by $\widetilde P$. From $i=k$ to termination at $+N$ the contest must advance through at least $m-1$ further post-battle reset lotteries. 
%%during which the state remains strictly above $k$.  
At each such lottery, a reset occurs with probability $p$ and sends the game back to state $0$.  Conditional on a reset, symmetry implies that the next time the process hits the set $\{\pm k\}$, player $B$ is the one leading (i.e., the history reaches $H_A^-$) with probability $1/2$.  Hence, at each of these $m-1$ lotteries, the conditional probability that we \emph{do not} end up reaching $H_A^-$ via a reset is at most $1-\frac{p}{2}$.
This means that the probability that $H_A^-$ is not reached after \emph{any} of these lotteries is at most $\left(1-\frac{p}{2}\right)^{m-1}$. Therefore,
\[
1-\widetilde P\le\text{$H_A^-$ is not reached after \emph{any} of these $m-1$ lotteries} \;\le\; \Bigl(1-\frac{p}{2}\Bigr)^{m-1}\leq\epsilon.\]
This verifies Condition~(ii) in \Cref{def:si} and completes the proof.
\end{proof}

\subsection{Proof of \Cref{lemma:ratio-instances} Based on \Cref{prop:gensf,lemma:gainratio}}
To accommodate the case where the success function may not be homogeneous, we reformulate this lemma as follows. 
For an unbalanced
contest $\mathcal{M}$ with final prize $v^{\prime }>0$, let $V_+^{\mathcal{M}%
}(v^{\prime })$ denote the advantaged player's expected equilibrium payoff
and $V_-^{\mathcal{M}}(v^{\prime })$ denote the disadvantaged player's
expected equilibrium payoff. 

The following definition formalizes the condition of sufficiently unbalanced subcontests, and the lemma that follows provides two such instances. 
%%The following definition formalizes the condition that subcontests are sufficiently unbalanced, and the 
\begin{definition}\label{cond:subcontest}
\normalfont
A sequence of subcontests $\{\mathcal M_K\}_{K=1}^{+\infty}$ is said to \emph{become infinitely unbalanced} if 
 $\frac{v'-V_+^{\mathcal M_K}(v')}{V_-^{\mathcal M_K}(v')}$ uniformly converges to $+\infty$ as $K\to+\infty$ for all $v'>0$.
\end{definition}

\setcounter{lemmaprime}{\numexpr\getrefnumber{lemma:ratio-instances}-1\relax}
\begin{lemmaprime}
\label{lemma:gen-ratio-instances} When the success function admits properties in \Cref{prop:gensf} and \Cref{lemma:gainratio}, both the sequence of $\mathcal{M}(K,1)$ contests and the sequence of tug-of-war with margin $K+1$ and initiated from state $K$ become infinitely unbalanced.
\end{lemmaprime}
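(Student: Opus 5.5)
For both families the plan is to express the quantity $\bigl(v'-V_+^{\mathcal M_K}(v')\bigr)/V_-^{\mathcal M_K}(v')$ as a product of per-battle factors. Each factor is at least $1$, and it is at least $1+d_{R}$ whenever the current stake ratio is at most $R$, by \Cref{lemma:gainratio}\ref{pi-high}. This is the same mechanism as in the proofs of \Cref{lemma:int-N} and \Cref{thm:cw-rent}. Every bound will depend only on stake \emph{ratios} and on the constants $d_R$, and never on $v'$, so the divergence will be uniform in $v'$.

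\textbf{The $\mathcal M(K,1)$ contest.} Index states by $j\in\{0,\dots,K-1\}$, the number of battles the challenger has already won. The game is finite, so by \Cref{prop:gensf}\ref{cond0} backward induction gives an equilibrium that is unique. Let $V^+_j,V^-_j$ denote the values at state $j$, with $V^+_K=0$ and $V^-_K=v'$. At state $j$ the stakes are $\Delta_+=v'-V^+_{j+1}$ and $\Delta_-=V^-_{j+1}$, since the challenger's losing value is $0$. The single-battle characterization then gives
\[
v'-V^+_j=(1-\pi_+(j))\bigl(v'-V^+_{j+1}\bigr),\qquad V^-_j=\pi_-(j)V^-_{j+1}.
\]
Define $r_j:=(v'-V^+_j)/V^-_j$, so that $r_K=1$. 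Then
\[
r_j=\frac{1-\pi_+(j)}{\pi_-(j)}\,r_{j+1},
\]
and $r_{j+1}$ is exactly the stake ratio $\Delta_+/\Delta_-$ at state $j$. Since $\pi_++\pi_-\le 1$, each factor is at least $1$.

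Now fix $R>1$. While $r_{j+1}\le R$, \Cref{lemma:gainratio}\ref{pi-high} gives $1-\pi_+\ge \pi_-+d_R$, so the factor is at least $1+d_R$. The sequence $r$ is nondecreasing as $j$ decreases, so
\[
r_0\ge \min\{R,(1+d_R)^{K}\}.
\]
Letting $K\to\infty$ and then $R\to\infty$ gives $r_0\to+\infty$ uniformly in $v'$.

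\textbf{Tug-of-war with margin $N=K+1$ started at state $K$.} Take the symmetric MPE of \Cref{prop:tow-mpe} with prize $v'$. From \eqref{eqn:tow-eqm-1} and \eqref{eqn:tow-eqm-2} we get $v'-V(K)=(1-\pi(K))\Delta(K)$ and $V(-K)=\pi(-K)\Delta(-K)$. Hence
\[
\frac{v'-V(K)}{V(-K)}=\frac{1-\pi(K)}{\pi(-K)}\cdot\frac{\Delta(K)}{\Delta(-K)}\ge \frac{\Delta(K)}{\Delta(-K)}.
\]
\Cref{lemma:tow-reinf} applies for $i\le N-2=K-1$, so it reaches the ratio at state $K$. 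The argument in the proof of \Cref{lemma:int-N} then yields
\[
\frac{\Delta(K)}{\Delta(-K)}\ge \min\{R,(1+d_R)^K\}
\]
for every $R>1$, with constants independent of $v'$ and of the equilibrium selected. This again diverges uniformly.

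\textbf{Main obstacle.} The delicate step is the bookkeeping in $\mathcal M(K,1)$. One must check that the challenger's losing continuation value is exactly $0$, so that $\Delta_-=V^-_{j+1}$ holds and the recursion closes as a pure product. One must also check that the stake ratio entering \Cref{lemma:gainratio}\ref{pi-high} at state $j$ is precisely $r_{j+1}$, and that it satisfies $r_{j+1}\ge 1$ so the lemma's range $[1,R'']$ applies. This is guaranteed inductively because $r_K=1$ and every factor is at least $1$.
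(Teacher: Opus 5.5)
Your proposal is correct and follows essentially the same route as the paper. For $\mathcal M(K,1)$, the paper derives the same one-step product recursion and applies \Cref{lemma:gainratio}\ref{pi-high} to get growth at rate at least $1+d_{R''}$ until the ratio exceeds $R''$; the only difference is that it indexes by $K$, treating the post-challenger-win subgame as $\mathcal M(K-1,1)$, rather than by states $j$ within one contest. For the tug-of-war, it uses \eqref{eqn:tow-eqm-4} to bound the ratio below by $\Delta(K)/\Delta(-K)$ and then appeals to the argument in the proof of \Cref{lemma:int-N}, exactly as you do.
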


\begin{proof}
\textbf{$\mathcal M(K,1)$ contest.}
Consider a $\mathcal M(K,1)$ contest with prize $v'>0$. 
Let $V_+^K:=V_+^{\mathcal M(K,1)}(v')$ and $V_-^K:=V_-^{\mathcal M(K,1)}(v')$.
The following holds for the first battle. 
%%%The equilibrium payoffs can be solved by backward induction. In detail, analyze the first battle. If the advantaged player wins, he will get the prize; otherwise, the subcontests $\mathcal M(K-1,1)$ will be entered, we have that
\begin{align*}
\begin{cases}
        V_+^{K} = V_+^{K-1}+ \pi^*\left(v'-V_+^{K-1}, V_-^{K-1}\right)\left(v'-V_+^{K-1}\right),\\
        V_-^{K} =\pi^*\left(V_-^{K-1},v'-V_+^{K-1}\right)V_-^{K-1}.\\
\end{cases}
\end{align*}
Subtracting the first equation from $v'$ and dividing the result by the second equation yields
\begin{align*}
    \frac{v'-V_+^K}{V_-^K} =\underbrace{ \frac{1-\pi^*\left(v'-V_+^{K-1}, V_-^{K-1}\right)}{\pi^*\left(V_-^{K-1},v'-V_+^{K-1}\right)}}_{>1}\frac{v'-V_+^{K-1}}{V_-^{K-1}}.
\end{align*}
For any large $R''>1$, by \Cref{lemma:gainratio}\ref{pi-high}, if $\frac{v'-V_+^{K-1}}{V_-^{K-1}}<R''$, then $\frac{1-\pi^*\left(v'-V_+^{K-1}, V_-^{K-1}\right)}{\pi^*\left(V_-^{K-1},v'-V_+^{K-1}\right)}>1+d_{R''}$, and the above equation implies that $\frac{v'-V_+^K}{V_-^K}$ grows faster than geometrically with a ratio $1+d_{R''}$. 
Since $\frac{v'-V_+^{1}}{V_-^{1}}\geq1$, $\frac{v'-V_+^K}{V_-^K}>R''$ for all $K>\log_{1+d_{R''}} R''+1$, implying uniform convergence across all $v'>0$. 

\noindent\textbf{Tug-of-war contest with margin $K+1$ and initiated from state $K$.} By \eqref{eqn:tow-eqm-4} we have
\begin{equation*}
\begin{split}
    \frac{v' - V_+^{\mathcal M_K}(v')}{V_-^{\mathcal M_K}(v')} &=  \frac{V(K+1;K+1,v') - V(K;K+1,v')}{V(-K;K+1,v') - V(-K-1;K+1,v')} \\
    &= \frac{1-\pi(K;K+1,v')}{\pi(-K;K+1,v')}\frac{\Delta(K;K+1,v')}{\Delta(-K;K+1,v')} \geq \frac{\Delta(K;K+1,v')}{\Delta(-K;K+1,v')}. 
    \end{split}
\end{equation*}
It is easy to see from the proof of \Cref{lemma:int-N} that $\frac{\Delta(K;K+1,v')}{\Delta(-K;K+1,v')}$ uniformly converges to $+\infty$ as $K\to+\infty$ across $v'>0$. The uniform convergence of $ \frac{v' - V_+^{\mathcal M_K}(v')}{V_-^{\mathcal M_K}(v')}$ follows immediately. 
\end{proof}

\subsection{Proof of \Cref{prop:full-suff} Based on \Cref{prop:gensf,lemma:gainratio}}
\begin{proof}

We first establish the following lemma, which will be useful for establishing the ``transient'' part of the transient dominance property. 
\begin{lemma}\label{lemma:v-bound2}
If the success function admits properties in \Cref{prop:gensf} and \Cref{lemma:gainratio}, 
there exists $\lambda(\cdot)>0$ defined on $\N_{++}$ such that for any contest with any prize $v'>0$,
if player $\ell\in\{A,B\}$ wins the contest after winning the first $L\in\N_{++}$ battles consecutively, then player $\ell$'s equilibrium payoff is greater than $\lambda(L)v'$.
\end{lemma}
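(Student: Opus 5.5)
Let $h_j$ denote the history in which player $\ell$ has won the first $j$ battles, $j=0,1,\ldots,L$; $h_L$ is terminal and $V_\ell(h_L)=v'$. The plan is to prove the claim by backward induction along this winning path, producing a lower bound of the form $V_\ell(h_j)\geq \lambda_{L-j}\, v'$. The bound will depend only on the number of remaining wins and not on the rest of the contest architecture. At each battle along the path I will use only the equilibrium relation from \Cref{lemma:battleshare}'s setup:
\[
U_\ell^*=V_\ell^{-\ell}+\pi_\ell^*\,(V_\ell^{\ell}-V_\ell^{-\ell})\geq \pi_\ell^*\,V_\ell^\ell ,
\]
where $V_\ell^{\ell}=V_\ell(h_{j+1})$, and $V_\ell^{-\ell}\geq 0$ because a player can always guarantee zero by exerting no effort.

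The difficulty is that $\pi_\ell^*$ is not bounded below uniformly. If the opponent's winning value $\Delta_{-\ell}$ is much larger than $\Delta_\ell$, then $\pi_\ell^*$ can be near $0$. So I will split into two cases at each step.

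\emph{Case 1: $\Delta_\ell\geq R\Delta_{-\ell}$ for a fixed small $R$.} Then \Cref{lemma:gainratio}\ref{pi-ratio} gives $\pi_\ell^*\geq\pi_R^*$, and hence $U_\ell^*\geq \pi_R^* V_\ell(h_{j+1})$.

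\emph{Case 2: $\Delta_{-\ell}>\Delta_\ell/R$.} Here I bound the opponent's value. The opponent's winning value is $\Delta_{-\ell}=V_{-\ell}^{-\ell}-V_{-\ell}^{\ell}\leq v'-V_\ell^{-\ell}-V_{-\ell}^{\ell}$, using $V_\ell+V_{-\ell}\leq v'$ at every history. I will then combine this with $\Delta_\ell\geq V_\ell(h_{j+1})-V_\ell^{-\ell}$. I expect this case to be the main obstacle: a large $\Delta_{-\ell}$ does not by itself lower $\ell$'s payoff enough to violate anything. What I would try first is to exploit $V_\ell^{-\ell}$ itself. If $V_\ell^{-\ell}\geq \tfrac12 V_\ell(h_{j+1})$, we are done, since $U_\ell^*\geq V_\ell^{-\ell}$. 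Otherwise $\Delta_\ell\geq \tfrac12 V_\ell(h_{j+1})$. Then, by \Cref{lemma:gainratio}\ref{pi0} and continuity (the gain ratio as $\Delta_{-\ell}/\Delta_\ell$ ranges over the compact set $[1/R,\,2v'/V_\ell(h_{j+1})]$), $\pi_\ell^*$ is bounded below by a constant that depends only on the ratio $V_\ell(h_{j+1})/v'$. That ratio is itself bounded below by the induction hypothesis $\lambda_{L-j-1}$.

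Normalizing $v'=1$ gives uniformity in $v'$ for the homogeneous case. In the general case I would invoke the uniform convergence in \Cref{lemma:gainratio}\ref{pi0} instead. The step therefore yields
\[
\lambda_{m}=\min\Bigl\{\tfrac12\lambda_{m-1},\ \tfrac12\,\underline\pi(\lambda_{m-1})\,\lambda_{m-1}\Bigr\}>0,\qquad \lambda_0=1,
\]
and I set $\lambda(L):=\lambda_L$. The step I would double-check is the compactness argument giving $\underline\pi(\cdot)>0$: whether \Cref{prop:gensf,lemma:gainratio} suffice to bound the gain ratio away from zero for bounded ratios $\Delta_{-\ell}/\Delta_\ell$ with $\Delta_\ell$ bounded below. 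If they do not, I would fall back on \Cref{lemma:gainratio}\ref{pi-ratio} with $R$ replaced by $\lambda_{m-1}/2$, using $\Delta_{-\ell}\leq v'$.
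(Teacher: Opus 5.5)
Your argument is correct. Once your fallback becomes the main step, it is essentially the paper's proof: backward induction along the winning path, splitting (with your $\tfrac12$ playing the role of the paper's $d$) into two cases. If $V_\ell^{-\ell}\ge \tfrac12 V_\ell(h_{j+1})$, the losing continuation alone gives the bound. Otherwise $\Delta_\ell>\tfrac12 V_\ell(h_{j+1})\ge \tfrac12\lambda_{m-1}v'$; together with $\Delta_{-\ell}\le v'$ this gives $\Delta_\ell/\Delta_{-\ell}\ge \lambda_{m-1}/2$, and \Cref{lemma:gainratio}\ref{pi-ratio} yields $\pi_\ell^*\ge \pi^*_{\lambda_{m-1}/2}$.

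Your compactness route does not give a bound uniform in $v'$ without homogeneity, and \Cref{lemma:gainratio}\ref{pi0} does not supply one. Use the fallback instead. It also makes your preliminary Case~1/Case~2 split by a fixed $R$ unnecessary, and the recursion then simplifies to $\lambda_m=\min\{\tfrac12,\pi^*_{\lambda_{m-1}/2}\}\lambda_{m-1}$, with no separate Case~1 term.
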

\begin{proof}
Without loss of generality, we focus on player $A$ and suppose that player $A$ wins the contest after winning the first $L$ battles. 
Consider the history $\underbrace{(A,\ldots,A)}_{L\text{ times}}$, and let $V_A^n$ denote player $A$'s continuation value at history $\underbrace{(A,\ldots,A)}_{n\text{ times}}$ for $n\in\{0,1,\ldots,L\}$. 
Clearly, $V_A^0$ is player $A$'s equilibrium payoff and $V_A^L=v'$. 

We prove by induction that $V_A^{n}\geq\alpha_nv'$ for some $\alpha_n>0$ that is independent of the contest form and prize value. 
This trivially holds for $n=L$. Suppose that $V_A^n\geq \alpha_n v'$ for some $1\leq n\leq L$ with $\alpha_L=1$. 
Consider the $n$th battle---i.e., the battle that follows the history $\underbrace{(A,\ldots,A)}_{n-1\text{ times}}$. 
Fix some $d\in(0,1)$, and let $\Delta_\ell^n$ denote player $\ell$'s incentive to win in this battle. 
If $\Delta_A^n>V_A^nd\geq\alpha_ndv'$, then $\Delta_A/\Delta_B\geq \alpha_nd$, and by \Cref{lemma:gainratio}\ref{pi-ratio}, $V_A^{n-1}\geq\pi_{\alpha_nd}^*V_A^{n}$. 
On the other hand, if $\Delta_A^n\leq V_A^nd$, 
we know that for player $A$, losing the battle leads to a continuation value that is at least $(1-d)V_A^n$, so $V_A^{n-1}\geq(1-d)V_A^{n}\geq(1-d)\alpha_nv'$. In any case, 
\[
V_A^{n-1}\geq\alpha_{n-1}:=\min\{\pi_{\alpha_nd}^*,(1-d)\alpha_n\}v',
\]
which completes the proof. 
\end{proof}

We now proceed to prove the theorem. For expositional ease, we first prove the $q=1$ case (so the subcontest in each round always happens), and then show how the proof can be extended to the general $q\in(0,1]$ case in a straightforward manner. 

%%For notational convenience, we introduce the following shorthands: 

By \Cref{lemma:gen-ratio-instances}, there exists $K^\star$ such that $\frac{v'-V_+(\mathcal M_{K^\star};v')}{V_-(\mathcal M_{K^\star};v')}\geq \frac{1}{\epsilon}$ for all $v'>0$.
By \Cref{lemma:v-bound2}, the winning probability of the disadvantaged player in $\mathcal M_{K^\star}$ with any final prize $v'>0$ is at least $\lambda(L_{K^\star})>0$.  
Let $N^\star$ satisfy $[1-\lambda(L_{K^\star})]^{N^\star}\leq \epsilon$. 

We show next that the $N^\star$-round iterated incumbency contest with component subcontest $\mathcal M_{K^\star}$ satisfies the transient dominance property. Then by \Cref{thm:sufficient}, expected total effort in the contest is greater than $(1-4\epsilon)v$.

Let $H_{\ell}^-$ be the histories where the opponent $-\ell\in\{A,B\}$ just became the incumbent for a new round. Let $W_+^{\mathcal{M}%
}(n;N,v)$ denote the incumbent's continuation value at the beginning of
round $n\geq1$ and $W^{\mathcal{M}}_-(n;N,v)$ denote that of the laggard.

Then player $\ell$'s continuation value at a history in $H_\ell^-$ is $W_-^{\mathcal M_{K^\star}}(n;N^\star,v)$ for some $n\in\{1,\ldots,N\}$ and the opponent's continuation value is $W_+^{\mathcal M_{K^\star}}(n;N^\star,v)$. 
To establish the transient dominance property, we need to show that (i)
$\frac{v}{W_-^{\mathcal M_{K^\star}}(n;N^\star,v)}\geq\frac{1}{\epsilon}$ for all $n\in\{1,\ldots,N\}$ and (ii) the probability that the initial incumbent loses its incumbency is at least $1-\epsilon$. 
Point (ii) follows immediately from the construction of $N^\star$ because the probability that the initial incumbent never loses his incumbency is less than $[1-\lambda(L_{K^\star})]^{N^\star}\leq \epsilon$.

%%For notational convenience, we introduce the following shorthands:
%%\begin{align*}
%%\end{align^*}
We proceed to prove (i). 
Let $v_n:=W_+^{\mathcal M_{K^\star}}(n+1;N^\star,v)-W_-^{\mathcal M_{K^\star}}(n+1;N^\star,v)$ for $n=1,\ldots,N$, with the understanding that $W_+^{\mathcal M_{K^\star}}(N+1;N^\star,v)=v$ and $W_-^{\mathcal M_{K^\star}}(N+1;N^\star,v)=0$. 
It follows from the structure of the iterated incumbency contest that
\begin{align*}
W_+^{\mathcal M_{K^\star}}(n;N^\star,v)&=W_-^{\mathcal M_{K^\star}}(n+1;N^\star,v)+V_+^{\mathcal M_{K^\star}}(v_n),\\
W_-^{\mathcal M_{K^\star}}(n;N^\star,v)&=W_-^{\mathcal M_{K^\star}}(n+1;N^\star,v)+V_-^{\mathcal M_{K^\star}}(v_n),
\end{align*}
where the first equation can be rewritten as 
\[
v-W_+^{\mathcal M_{K^\star}}(n;N^\star,v)=v-W_+^{\mathcal M_{K^\star}}(n+1;N^\star,v)+[v_n-V_+^{\mathcal M_{K^\star}}(v_n)].
\]
Consequently, 
\begin{align*}
v-W_+^{\mathcal M_{K^\star}}(n;N^\star,v)&=\sum_{j=n}^N[v_n-V_+^{\mathcal M_{K^\star}}(v_n)],\\
W_-^{\mathcal M_{K^\star}}(n;N^\star,v)&=\sum_{j=n}^NV_-^{\mathcal M_{K^\star}}(v_n).
\end{align*}
By construction, $\frac{v_n-V_+^{\mathcal M_{K^\star}}(v_n)}{V_-^{\mathcal M_{K^\star}}(v_n)}\geq\frac{1}{\epsilon}$. It follows that
\[
\sum_{j=n}^N[v_n-V_+^{\mathcal M_{K^\star}}(v_n)]\geq\frac{1}{\epsilon}\sum_{j=n}^NV_-^{\mathcal M_{K^\star}}(v_n),
\]
which implies that $\frac{v-W_+^{\mathcal M_{K^\star}}(n;N^\star,v)}{W_-^{\mathcal M_{K^\star}}(n;N^\star,v)}\geq\frac{1}{\epsilon}$, and hence $\frac{v}{W_-^{\mathcal M_{K^\star}}(n;N^\star,v)}\geq\frac{1}{\epsilon}$.
This finishes the proof for the $q=1$ case.

For the $q\in(0,1]$ case, notice that the round-$n$ competition is the combination of the exogenous shock and some subcontest $\mathcal M$. 
The incumbent and the laggard's expected payoffs from such a competition with prize $v'$ are, respectively,
\begin{align*}
\widetilde V_+^{\mathcal M}(v')= (1-q)v'+qV_+^{\mathcal M}(v')\text{ and }\widetilde V_-^{\mathcal M}(v')= qV_-^{\mathcal M}(v').
\end{align*}
Therefore, 
\[
\frac{v'-\widetilde V_+^{\mathcal M}(v')}{\widetilde V_-^{\mathcal M}(v')}=\frac{v'- V_+^{\mathcal M}(v')}{ V_-^{\mathcal M}(v')}.
\]
Then it is clear that the same proof for the $q=1$ case applies with $K^\star$ taken to satisfy $\frac{v'-V_+(\mathcal M_{K^\star};v')}{V_-(\mathcal M_{K^\star};v')}\geq \frac{1}{\epsilon}$ for all $v'>0$ and $N^\star$ taken to satisfy $\{1-q+q[1-\lambda(L_{K^\star})]\}^{N^\star}\leq \epsilon$.
\end{proof}

\end{appendices}

\end{document}